\newtheorem{thm}{Theorem}[section]
\newtheorem{lem}[thm]{Lemma}
\newtheorem{ass}[thm]{Assumption}
\newtheorem{proposition}[thm]{Proposition}
\theoremstyle{definition}
\theoremstyle{remark}
\newtheorem{remark}[thm]{Remark}
\numberwithin{thm}{section}
\DeclareMathOperator{\RE}{Re}
\DeclareMathOperator{\IM}{Im}
\newcommand{\R}{{\mathord{\mathbb R}}}
\newcommand{\J}{\mathcal{J}}
\newcommand{\Z}{{\mathord{\mathbb Z}}}
\newcommand{\e}{\mathfrak{e}}
\def\idty{{\mathchoice {\mathrm{1\mskip-4mu l}} {\mathrm{1\mskip-4mu l}} %
{\mathrm{1\mskip-4.5mu l}} {\mathrm{1\mskip-5mu l}}}}
\DeclareMathOperator{\Tr}{Tr}
\DeclareMathOperator{\diag}{diag}
\begin{document}

\title[]{Entanglement bounds for single-excitation energy eigenstates of  quantum oscillator systems}

\author[H. Abdul-Rahman]{Houssam Abdul-Rahman}
\address{[H. Abdul-Rahman] Department of Mathematical Sciences, College of Science, United Arab Emirates University, 15551
Al Ain, UAE}
\email{\href{mailto:houssam.a@uaeu.ac.ae}{houssam.a@uaeu.ac.ae}}

\author[R. Sims]{Robert Sims}
\address{[R. Sims] Department of Mathematics\\
University of Arizona\\
Tucson, AZ 85721, USA}
\email{\href{mailto: rsims@math.arizona.edu}{rsims@math.arizona.edu}}

\author[G. Stolz]{G\"unter Stolz}
\address{[G. Stolz] Department of Mathematics\\
University of Alabama at Birmingham\\
Birmingham, AL 35294 USA}
\email{\href{mailto: stolz@uab.edu}{stolz@uab.edu}}
\date{\today}

\begin{abstract}
We provide an analytic method for estimating the entanglement of the non-gaussian energy eigenstates of disordered harmonic oscillator systems. We invoke the explicit formulas of the eigenstates of the oscillator systems to establish bounds for their  $\epsilon$-R\'enyi entanglement entropy $\epsilon\in(0,1)$. Our methods result in a logarithmically corrected area law for the entanglement of eigenstates, corresponding to one excitation, of the disordered harmonic oscillator systems.
\end{abstract}

\maketitle

%%%%%
%
%  Intro . . .
%
%%%%%%%

\allowdisplaybreaks

\tableofcontents

\section{Introduction}
Entanglement is a fundamental concept in quantum physics. It represents a unique feature of quantum mechanics where systems exhibit non-classical correlations. In fact, quantum entanglement lies at the heart of numerous technologies and information processing. For example, entangled states can be used to explain many quantum communication protocols,  such as quantum state teleportation, quantum machine learning, and quantum computing algorithms, see e.g., \cite{QCQI,QML17}.

This work concerns the study of entanglement for quantum harmonic oscillator systems. The simple harmonic oscillator is a fundamental model in quantum theory that describes the behavior of a particle under the influence of a quadratic potential. When quadratic interactions are established between neighboring oscillators,  the result is referred to as a \emph{system of quantum harmonic oscillators} (or simply  \emph{harmonic oscillators}). These harmonic oscillators find applications in various areas, including quantum computing and communication \cite{QuantumInfo1, QuantumInfo2}, quantum chemistry \cite{QChemistry} and  solid-state physics \cite{SolidState}. Moreover, quantum harmonic oscillators are a key mathematical tool in quantum field theory, providing a framework for understanding the quantization and dynamics of field excitations, as well as for calculating physical observables and phenomena in particle physics and quantum field theory \cite{Zee}.

Estimates for the entanglement between two subsystems of interacting oscillators have been established for the ground state as well as thermal states  \cite{Audenaert2002, SCW, NSS13, BSW19}. 
It has been shown that for deterministic gapped models and models in a sufficiently disordered regime the entanglement of these states follows an area law. 
This means that the entanglement scales like the surface area of the boundary region between the two subsystems. 
Moreover, it is proven in \cite{AR23}  that starting from a product state of local thermal and/or ground states, the dynamic evolution of entanglement 
follows an area law for all times. The crucial observation which drives these results is the fact that the ground state, thermal states, and their time 
evolution after a quantum quench are all gaussian states (also called quasi-free states). For such states, extensions of ideas which go back to Vidal and Werner, see \cite{VidalWerner}, 
ensure the validity of a framework which facilitates certain estimates. More generally, \cite{AR18} adapts these techniques to prove area laws for a class of positive energy non-gaussian states written as a uniform ensemble of eigenstates associated with a fixed number of modes. For disordered oscillator systems, these entanglement area laws are often described as indicators of the phase associated to
\emph{many-body localization} (MBL). Other indicators of this phase are zero-velocity Lieb-Robinson bounds and exponential decay of dynamic correlations for ground and/or 
thermal states after a quantum quench, and proofs of these results for disordered oscillator systems may be found in the literature, e.g. in \cite{NSS12, ARSS20}.

Proving area laws for the low lying energy eigenstates (specifically those above the ground state) is believed to be a pressing signature of MBL. For disordered systems of harmonic oscillators, \cite{ARSS17} shows exponential decay of correlations for all energy eigenstates with a pre-factor that grows with the magnitude of the maximally excited mode. This provides an additional indication of area law-like entanglement bounds for these states. Nevertheless, to our best knowledge there are no results about the entanglement of any energy eigenstates  associated to harmonic oscillators. The main obstacle here is that these states lack the 
structure of gaussian states. In particular, the algebraic approach used to study the entanglement of gaussian states does not apply to non-gaussian states. In this work we provide an analytic approach to establish an area law for a class of eigenstates of disordered oscillator systems; namely, those with a single excitation, see Theorem \ref{thm:es}. 
In principle, the starting point for our work is an explicit formula for the  $\epsilon$-R\'enyi entanglement entropy of the ground state, see Theorem \ref{thm:GS-Renyi-Formula}. 
To be clear, this result for the entanglement of the ground state can be seen directly from its gaussian structure, as done e.g. in \cite[Appendix A.2]{BSW19}. Our main point here is that we 
arrive at this formula for the ground state using an alternative method which generalizes to a class of non-gaussian excited states. 
Other recent examples of mathematical results concerning eigenstate localization beyond the ground state include the XY spin chain in random transversal field (see \cite{ARNSS17, ARS15} for area laws and \cite{SimsWarzel} for decay of correlations.) and the Tonks-Girardeau gas \cite{SeiringerWarzel}. Additionally, \cite{EKS1, EKS2, BW17} proved exponential clustering of all eigenstates throughout the droplet spectrum of the XXZ chain in a random field, and \cite{BW18,ARFS20} proved area laws for the states in the droplet regime. 

The remainder of this paper is organized as follows. In the next section, we introduce harmonic oscillator models and state our main results. As a warm-up and to simplify the presentation of the main result, we present our methods as they apply to the ground state in Section \ref{sec:ground-state}. A proof of the main result for excited states, Theorem \ref{thm:es}, is presented in Section \ref{sec:eigenstates}. We end with an appendix which includes several technical results and some useful formulas.

%%%%%
%
% Model and Results
%
%
%%%%%%

\section{General settings}
\subsection{Models and entanglement}
In this work, we investigate a simple system of coupled harmonic oscillators defined over $\mathbb{Z}^d$.
The models we consider will be defined with respect to a sequence of real numbers $h=\{ h_{j,k} \}_{j,k \in \mathbb{Z}^d}$. 
More concretely, for finite subsets $\Lambda \subset \mathbb{Z}^d$, we consider Hamiltonians 
with the form
\begin{equation} \label{mainProblem}
H_{\Lambda} = \sum_{j \in \Lambda} p_j^2 + \sum_{j,k \in \Lambda} h_{j,k} x_j x_k = p^Tp + x^T h_{\Lambda} x 
\end{equation}
which act on the Hilbert space
\begin{equation} \label{fv_hilbert}
\mathscr{H}_{\Lambda}=\bigotimes_{j \in \Lambda}\mathscr{L}^2(\mathbb{R}, dx_j) = \mathscr{L}^2(\mathbb{R}^{\Lambda}, dx) \, .
\end{equation}
Here $h_{\Lambda} = \{ h_{j,k} \}_{j,k \in \Lambda}$ is a real, $|\Lambda| \times |\Lambda|$ square matrix, and 
we use $x_j$ and  $p_j=\frac{1}{i}\frac{\partial}{\partial x_j}$ to denote the position and momentum operators
at site $j \in \Lambda$. In fact, we view both $x = (x_j)$ 
and $p = (p_j)$ as a column vectors, and thereby $x^T$ and $p^T$ are the corresponding row vectors. 
Results similar to those we state below will also hold for more general classes of models, defined on 
connected subsets of graphs as done e.g. in \cite{NSS13, BSW19}, but we find it convenient to stress our method
in this somewhat simplified context.

Our goal here is to study the entanglement of eigenstates of $H_{\Lambda}$ with respect to a given partition of the system. 
To do so, we fix a set $\Lambda_0 \subseteq \Lambda$ such that $|\Lambda_0|\gg 1$ and then decompose the Hilbert space as
\begin{equation} \label{hilbert_decomp}
\mathscr{H}_{\Lambda}=\mathscr{H}_{\Lambda_0} \otimes \mathscr{H}_{\Lambda_0^c}
\end{equation} 
where we set $\Lambda_0^c = \Lambda \setminus \Lambda_0$. Our approach is to analyze the 
R\'enyi entanglement entropy of certain low-lying eigenstates of $H_{\Lambda}$. More precisely, let 
$\varrho$ be a state on $\mathscr{H}_{\Lambda}$. For any $\epsilon\in(0,1)$, 
the $\epsilon$-R\'enyi entanglement entropy of $\varrho$, with respect to the 
decomposition in (\ref{hilbert_decomp}), is defined as
\begin{equation}\label{def:Renyi}
\mathcal{E}_\epsilon(\varrho)=\frac{1}{1-\epsilon}\log\Tr\left[(\varrho_{\Lambda_0})^\epsilon\right], \text{ where }\varrho_{\Lambda_0}=\Tr_{\mathscr{H}_{\Lambda_0^c}} [\varrho].
\end{equation}
i.e., $\varrho_{\Lambda_0}$ is the reduced state on $\mathscr{H}_{\Lambda_0}$ defined in tracing out $\mathcal{H}_{\Lambda_0^c}$.
It is well known that if $\varrho$ is a pure state, then the $\epsilon$-R\'enyi entropy is decreasing for 
$\epsilon\in(0,1)$, and moreover, its limit as $\epsilon\rightarrow1$ is the von Neumann
entanglement entropy (or just entanglement entropy) of $\varrho$ which is defined as 
\begin{equation}
\mathcal{E}_1(\varrho):=-\Tr\left[\varrho_{\Lambda_0}\log \varrho_{\Lambda_0}\right].
\end{equation}
Henceforth, we slightly abuse notation and  consider $\epsilon$-R\'enyi entanglement entropy for $\epsilon\in(0,1]$, where $\epsilon=1$ corresponds to the von Neumann entanglement entropy.

As a result, again in the case of pure states, the $\epsilon$-R\'enyi entropy is an upper bound on the entanglement entropy for all $\epsilon\in(0,1)$. 
It is also interesting to observe that the $\frac{1}{2}$-R\'enyi entropy of a pure state is equal to its logarithmic negativity, defined as the logarithm of the trace norm of the partial transpose\footnote{This can be seen by writing a pure state $\phi$ in terms of its Schmidt decomposition $\phi=\sum_\alpha c_\alpha |e_\alpha\otimes f_\alpha\rangle$ where $\{e_\alpha\}\subset\mathscr{H}_{\Lambda_0}$ and $\{f_\alpha\}\subset\mathscr{H}_{\Lambda_0^c}$ are orthonormal sets indexed by a common, countable set. Let $\rho=|\phi\rangle\langle\phi|$, then it is direct to find that $\mathcal{E}_{\epsilon}(\rho)=\frac{1}{1-\epsilon}\log\sum_\alpha c_\alpha^{2\epsilon}$, and hence $\mathcal{E}_{1/2}(\rho)=2\log\sum_\alpha c_\alpha=\mathcal{N}(\rho)$, see e.g., \cite[Appendix A]{NSS13} for the second equality.}, i.e., if $\varrho$ is a pure state then
\begin{equation}
\mathcal{E}_{1/2}(\varrho)=\mathcal{N}(\varrho):=\log\|\varrho^{T_1}\|_1.\end{equation}
In summary, we have the following inequalities for pure states:
\begin{equation}\label{eq:Renyi-Inequality}
\mathcal{E}_1(\varrho)\leq \mathcal{E}_{\epsilon}(\varrho)\leq \mathcal{E}_{1/2}(\varrho)= \mathcal{N}(\varrho)\leq \mathcal{E}_{\tilde\epsilon}(\varrho)\text{ for all }\epsilon\in(1/2,1)\text{ and }\tilde\epsilon\in(0,1/2).
\end{equation}

Let us now, informally, summarize the contents of this paper.
The remainder of this section will be used to introduce sufficient notation 
to make a precise statement of our results. Our work begins in 
Section~\ref{sec:ground-state} where we investigate the ground state of $H_{\Lambda}$. 
As a first step, we use the gaussian structure of the ground state to make an explicit calculation of its $\epsilon$-R\'enyi entropy,
and this is the content of Theorem~\ref{thm:GS-Renyi-Formula}.  An immediate consequence of this
is the well-known formulas for the entanglement entropy and the logarithmic negativity of the ground state. 
For us, however, we use the formula from Theorem~\ref{thm:GS-Renyi-Formula} and a straight-forward estimate,
to demonstrates a bound for the $\epsilon$-R\'enyi entropy of the ground state in terms of a well-studied 
{\it singular eigenfunction correlators} associated to the coefficient matrix $h_{\Lambda}$. 
We state this estimate as the first part of Theorem~\ref{thm:gs}, see specifically (\ref{gs_bd_noav}) below.
Given sufficient decay of this singular eigenfunction correlators, we establish an area law for the ground state.
By now, it is also well-known that exponential 
decay of the singular eigenfunction correlators holds for sufficiently disordered models as well as
deterministic systems with a uniform spectral gap above the ground state. 
We state a version of this as the second part of Theorem~\ref{thm:gs}, see specifically (\ref{gs_bd_av}).
In Section~\ref{sec:eigenstates}, we consider eigenstates of $H_{\Lambda}$
with only a single excitation. For these non-gaussian states, we use our ground state analysis as a template and establish, under 
identical conditions, a modified area law for the $\epsilon$-R\'enyi entropy of these single-excitation eigenstates. This result we state as
Theorem~\ref{thm:es} below. In order to make more precise statements, we will first review some basic facts about these 
oscillator models. 

\subsection{Diagonalization}
We now briefly review the well-known diagonalization procedure for $H_{\Lambda}$
using the effective single-particle Hamiltonian $h_{\Lambda}$. 
Assuming that $h_\Lambda$ is positive definite, there exists a $| \Lambda| \times | \Lambda|$ orthogonal matrix $V$ for which
\begin{equation}\label{EVdecomposition:h}
 h_\Lambda =V \Gamma^2 V^T
\end{equation}
where $\Gamma$ is a diagonal. We write $\Gamma =\text{diag}(\gamma_j )$ and note that $\gamma_j> 0$ for all $j \in \Lambda$.
In terms of $V$, a unitary operator $\mathscr{U}$ on $\mathscr{H}_{\Lambda} =\mathscr{L}^2( \mathbb{R}^{\Lambda}, dx)$ is defined by setting
\begin{equation}\label{def:U}
\left(\mathscr{U}f\right)(x)=f(V x) \quad \mbox{and} \quad
\left(\mathscr{U}^*g\right)(x)=g(V^Tx)  \quad \mbox{for all } f,g \in \mathscr{H}_{\Lambda} \quad \mbox{and} \quad x \in \mathbb{R}^{|\Lambda|}. 
\end{equation}
One readily checks that $\mathscr{U}$ diagonalizes $H_\Lambda$ in the sense that
\begin{equation}\label{def:hat-H}
\mathscr{U} H_\Lambda \mathscr{U}^*  =p^Tp+x^T \Gamma^2 x 
 = \sum_{j \in \Lambda}( p_j^2+ \gamma_j^2x_j^2). 
\end{equation}
Written as above, we see that $H_{\Lambda}$ is unitarily equivalent to a system of non-interacting 
oscillators. For these decoupled oscillators, an orthonormal basis of eigenvectors is immediate.
To be explicit, set $\mathbb{N}_0 = \mathbb{N} \cup \{0 \}$. It is well-known\footnote{This may be found in standard textbooks on quantum mechanics, e.g., \cite[Theorem 2.1]{QMforMath}, \cite[Chapter 7]{PrinciplesQM}} that for any $\alpha = (\alpha_j) \in \mathbb{N}_0^{|\Lambda|}$, we have
\begin{equation}
\left(\mathscr{U} H_\Lambda \mathscr{U}^*\Psi_\alpha^{(\gamma)}\right)(x)=\uplambda_\alpha (H_\Lambda) \Psi_\alpha^{(\gamma)}(x) \quad \mbox{with} \quad \uplambda_\alpha (H_\Lambda) =\sum_{j=1}^{|\Lambda|} \gamma_j(2\alpha_j+1) \, ,
\end{equation}
where here and in the following, we use the notation $\uplambda(\cdot)$ to denote the eigenvalues. For each $\alpha = (\alpha_j) \in \mathbb{N}_0^{|\Lambda|}$, the corresponding eigenvector has
a product structure:
\begin{equation} \label{evHhat}
\Psi_\alpha^{(\gamma)}(x)= \prod_{j=1}^{|\Lambda|}\psi_{\alpha_j}^{(\gamma_j)}(x_j) \quad \mbox{for} \quad \ x = (x_j) \in\mathbb{R}^{|\Lambda|} \, .
\end{equation}
The factors above are the well-known Hermite-Gaussian functions. More precisely, for 
any $\gamma >0$, these Hermite-Gaussian functions $\{ \psi_n^{(\gamma)} \}_{n \geq 0}$
form an orthonormal basis of eigenvectors of the self-adjoint operator $p^2+ \gamma^2 x^2$ in $\mathscr{L}^2(\mathbb{R})$.
For any integer $n \geq 0$, they are given by
\begin{equation}\label{def:psi}
\psi_n^{(\gamma)}(y)=\frac{1}{\sqrt{2^n n!}}\left(\frac{\gamma}{\pi}\right)^{1/4} e^{-\frac{1}{2}\gamma y^2}H_n\left(\gamma^{1/2}y\right) \quad \mbox{for } y \in \mathbb{R} \, 
\end{equation}
where, for $n \in \mathbb{N}_0$, $H_n$ is the (physicist's) Hermite polynomial 
\begin{equation}\label{def:H}
H_0(y) = 1 \quad \mbox{and} \quad H_n(y)=(-1)^n e^{y^2}\frac{d^n}{dy^n}e^{-y^2} \quad \text{for }n \geq 1 \quad \mbox{and} \quad y \in \mathbb{R}. 
\end{equation}

Of course, the product structure of the eigenvectors $\Psi_{\alpha}^{(\gamma)}$ of $\mathscr{U}H_{\Lambda}\mathscr{U}^*$, see (\ref{evHhat}), ensures that the corresponding vector states have no
entanglement. Our interest, however, is in the eigenvectors $\hat\Psi_{\alpha}^{(\gamma)}$ of $H_{\Lambda}$ which can be determined using the
unitary in (\ref{def:U}). In fact, 
\begin{equation}
\hat\Psi_{\alpha}^{(\gamma)}(x)= \mathscr{U}^*\Psi_{\alpha}^{(\gamma)}(x)=\Psi_{\alpha}^{(\gamma)}(V^T x) \quad \mbox{for any } x \in \mathbb{R}^{|\Lambda|} \, .
\end{equation}
Given (\ref{evHhat}) and (\ref{def:psi}), one checks that $\Psi_{\alpha}^{(\gamma)}$ can be re-written as
\begin{equation}\label{eq:hat-psi-alpha}
\Psi_\alpha^{(\gamma)}(x)= \pi^{-\frac{|\Lambda|}{4}}|\det\Gamma|^\frac{1}{4} e^{-\frac{1}{2}x^T\Gamma x}\prod_{j=1}^{|\Lambda|}\frac{1}{\sqrt{2^{\alpha_j} \alpha_j!}}H_{\alpha_j}\left(\gamma_j^{1/2} \delta_j^T x \right).
\end{equation}
where the vectors $\{\delta_j\}_{j\in\Lambda}$ denote a canonical basis of $\mathbb{R}^{|\Lambda|}$ (under the standard inner product). In this case, we find that
\begin{eqnarray}\label{def:phi-alpha}
\hat\Psi_\alpha^{(\gamma)}(x)
&=&
\pi^{-\frac{|\Lambda|}{4}}\left(\det(h_\Lambda^{1/2})\right)^{\frac{1}{4}} e^{-\frac{1}{2}x^T h_\Lambda^{1/2} x} \prod_{j=1}^{|\Lambda|}\frac{1}{\sqrt{2^{\alpha_j} \alpha_j!}}H_{\alpha_j}\left(\gamma_j^{1/2} v_j^T x \right)
\end{eqnarray}
where, for $j \in \Lambda$, $v_j$ is the eigenvector of $h_{\Lambda}$ satisfying $h_\Lambda v_j=\gamma_j^2 v_j$.
As we see above, the eigenvector $\hat\Psi_\alpha^{(\gamma)}$ is no longer a simple product state, and our goal is to estimate the extent to which it has
become ``entangled'' with respect to the decomposition $\Lambda_0\cup\Lambda_0^c$.

To this end, let $\varrho_\alpha$ denote the density operator on $\mathscr{H}_{\Lambda}$ corresponding $\hat\Psi_{\alpha}^{(\gamma)}$, i.e. the 
rank one projection onto $\hat\Psi_\alpha^{(\gamma)}$. 
Clearly, $\varrho_{\alpha}$ may be written as an integral operator, i.e. for $f\in\mathscr{H}_\Lambda$ and $x \in \mathbb{R}^{|\Lambda|}$,
\begin{equation}
(\varrho_\alpha f)(x)=\big\langle\hat\Psi_\alpha^{(\gamma)}, f\big\rangle_{\mathscr{H}_\Lambda} \hat\Psi_\alpha^{(\gamma)}(x)=\int_{\mathbb{R}^{|\Lambda|}}\hat\Psi_\alpha^{(\gamma)}(x)\overline{\hat\Psi_\alpha^{(\gamma)}(y)}f(y)dy.
\end{equation}
Since $\hat\Psi_\alpha^{(\gamma)}$ is real valued, the kernel of the integral operator $\varrho_\alpha$ is $\varrho_\alpha(x,y)=\hat\Psi_\alpha^{(\gamma)}(x)\hat\Psi_\alpha^{(\gamma)}(y)$ and
(\ref{def:phi-alpha}) gives a direct formula:
\begin{equation}\label{eq:rho-alpha}
\varrho_\alpha(x,y) = \left(\frac{\det(h_\Lambda^{1/2})}{\pi^{|\Lambda|}}\right)^{1/2}e^{-\frac{1}{2}(x^T h_\Lambda^{1/2}x+ y^T h_\Lambda^{1/2}y )} \prod_{j=1}^{|\Lambda|}\left(\frac{1}{2^{\alpha_j} \alpha_j!}H_{\alpha_j}\left(\gamma_j^{1/2} v_j^T x\right) H_{\alpha_j}\left(\gamma_j^{1/2} v_j^T y\right)\right).
\end{equation}
\begin{remark} 
For the sake of completeness, we here provide a pragmatic definition of gaussian, or quasi-free, state. 
There is a vast literature on such states, and we refer the interested reader to \cite[Section 5.2]{BRvol2} for a canonical reference.
In finite volume, to each $f\in\ell^2(\Lambda)$ one may associate a \emph{Weyl} operator $W(f)$ on $\mathscr{H}_\Lambda$ by setting
\begin{equation*}
W(f):=\exp\left(i(\tilde{f})^T r\right)\quad \text{ where }\tilde{f}=\begin{pmatrix}\RE[f]\\ \IM[f]\end{pmatrix}\in \R^{|\Lambda|}\oplus  \R^{|\Lambda|} \text{ and } r=\begin{pmatrix}x\\ p\end{pmatrix}.
\end{equation*}
A state $\rho\in\mathcal{B}(\mathscr{H}_\Lambda)$ is said to be \emph{gaussian} or \emph{quasi-free} if for any $f\in\ell^2(\Lambda)$
\begin{equation*}
\Tr[W(f)\rho]=\exp\left(-\frac14\langle \tilde{f},\Gamma_{\rho}\tilde{f}\rangle\right)
\end{equation*}
where $\Gamma_\rho$ denotes the covariance matrix 
associated to $\rho$ given as 
$(\Gamma_\rho)_{k\ell}=\Tr\left[\rho(r_k r_\ell+r_\ell r_k)\right]$.

It has been shown, e.g. in \cite[Section 3.1]{AR18}, that the only eigenstate of $H_\Lambda$ that is gaussian is the ground state.
\end{remark}

As our main focus here will be estimates for the reduced state $(\varrho_\alpha)_{\Lambda_0}$, let us end this subsection with a comment on relevant notation.
For finite sets $\Lambda_0 \subset \Lambda$, we decompose the spatial variables $x, y \in \mathbb{R}^{|\Lambda|}$, and the eigenvectors $\{v_{j}\}_{j=1,\ldots,|\Lambda|}$ of $h_\Lambda$ as follows
\begin{equation}\label{eq:dec-x-y-v}
x=(x_{\Lambda_0}, x_{\Lambda_0^c}),\ 
y=(y_{\Lambda_0},y_{\Lambda_0^c}), \text{ and }
v_j=((v_{j})_{\Lambda_0},(v_{j})_{\Lambda_0^c}).
\end{equation}
The reduced state  $(\varrho_\alpha)_{\Lambda_0}$ is an integral operator (trace class with $\Tr[(\varrho_\alpha)_{\Lambda_0}]=1$) with kernel
\begin{equation}\label{eq:reduced}
(\varrho_\alpha)_{\Lambda_0}(x_{\Lambda_0},y_{\Lambda_0}):=\int_{\mathbb{R}^{|\Lambda_0^c|}}\varrho_\alpha\big((x_{\Lambda_0},u),(y_{\Lambda_0},u)\big)\ du.
\end{equation}

\subsection{Assumptions and main results}
We state our results in the context of disordered systems although, as previously observed in the literature \cite[Remark 2.1]{AR18}, similar bounds
hold for deterministic, uniformly gapped models. To be precise, we make two basic assumptions.

\begin{ass} \label{base_ass1} The sequence $h=\{ h_{j,k} \}_{j,k \in \mathbb{Z}^d}$ is a collection of real, random variables on a probability space $(\Omega, \mathcal{F}, \mathbb{P})$.
There is a non-decreasing, exhaustive sequence of finite sub-volumes of $\mathbb{Z}^d$, denoted by $\{ \Lambda_m\}_{m \geq 1}$, for which:
\newline i) $h_{\Lambda_m}$ is $\mathbb{P}$-almost surely positive definite for each $m \geq 1$,
\newline ii) there is $D<\infty$, for which
\begin{equation}\label{sing_cor_dec}
\sup_{m \geq 1} \| h_{\Lambda_m}^{1/2} \| \leq D \quad \mathbb{P} \mbox{-almost surely}
\end{equation}
\end{ass}

\begin{ass} \label{sec_ass2}  The sequence $h=\{ h_{j,k} \}_{j,k \in \mathbb{Z}^d}$ satisfies Assumption~\ref{base_ass1} and that there exist constants $C<\infty$, $\eta>0$, and $0<s\leq 1$, independent of $m$, such that
\begin{equation}  \label{sing_cor_dec_2}
\mathbb{E} \left( \left| \langle \delta_j, h_{\Lambda_m}^{-1/2} \delta_k \rangle \right|^s \right)  \leq C e^{ - \eta|j-k|} 
\end{equation}
for all $j,k \in \Lambda_m$.
\end{ass}
For clarity, the finite-volumes $\Lambda_m$ appearing in (\ref{sing_cor_dec}) above are those whose existence is guaranteed by Assumption~\ref{base_ass1}.

The bound (\ref{sing_cor_dec_2}) corresponds to strong form of localization, it in generally referred to as localization of \emph{singular eigenfunction correlators}. It has only been established for the Anderson model or models that are closely related to it. In particular, for any (fixed) $m\geq 1$, consider the system of harmonic oscillators with quadratic next neighbor interactions, i.e.,
\begin{equation}\label{Ex-Anderson}
H_{\Lambda_m}=\sum_{j\in\Lambda_m}\left(p_j^2+k_j x_j^2\right)+\sum_{\tiny
\begin{array}{c}
j,k\in\Lambda_m\\
|j-k|=1
\end{array}
}(x_j-x_k)^2,
\end{equation}
where we choose the of spring constants $\{k_j\}_{j\in\mathbb{Z}^d}$ to be a sequence of independent, identically distributed random variables from an absolutely continuous distribution with a bounded density $\nu$ that is supported on $[0,k_{\max}]$ for some $k_{\max}>0$. In (\ref{Ex-Anderson}) and in the following, $|\cdot|$ denotes the $\ell^1$-norm on $\Z^d$.

In this case, $h_{\Lambda_m}$ becomes the Anderson model  given as
\begin{equation}
(h_{\Lambda_m}f)(j)=\sum_{k\in\Lambda_m,\ |j-k|=1} (f(j)-f(k))+\sum_{j\in\Lambda_m}k_jf(j)\quad \text{ on }\ell^2(\Lambda_m).
\end{equation}
By standard results, see e.g. \cite{ReedSimon2}, $h_{\Lambda_m}$ is almost surely positive definite with the almost sure norm bound $\|h_{\Lambda_m}\|\leq 4 d+k_{\max}$. Hence, Assumption \ref{base_ass1} is satisfied. Furthermore, the assumption on the singular eigenfunction correlators,  Assumption \ref{sec_ass2}, is known to hold in the following cases:
\begin{itemize}
\item[(a)] For $d=1$ and any density $\nu$ with $s=1/2$, \cite[Prop A.1(c) and A.4(a)]{NSS12}.
\item[(b)] For $d\geq 1$ and large disorder with $s=1$, \cite[Prop A.1(b) and A.3(b)]{NSS12}.
\end{itemize}

The localization of singular eigenfunction correlators (\ref{sing_cor_dec_2}) has a history of successful applications in studying  various  localization characteristics  of the harmonic oscillator systems, e.g., \cite{NSS12, NSS13, ARSS17, AR18, BSW19, ARSS20, AR23}.

One final remark before we state our results. The term area law refers to an 
estimate which scales like the surface area of the subsystem. Here we use the notion of  (inner) boundary of $\Lambda_0\subset\Lambda$
which may be defined as
\begin{equation}
\partial\Lambda_0=\{\ell\in\Lambda_0;\ \exists j\in\Lambda_0^c \text{ with }|j-\ell|=1\}.
\end{equation}

Our first result establishes a new proof of the following theorem proved first, with minor variations, in \cite{NSS13} then in \cite{BSW19}. 
\begin{thm} \label{thm:gs} Consider a sequence $h=\{ h_{j,k} \}_{j,k \in \mathbb{Z}^d}$ satisfying Assumption~\ref{base_ass1}. 
Fix a finite, connected set $\Lambda_0 \subset \mathbb{Z}^d$. Take $m \geq 1$ large enough so that $\Lambda_0 \subset \Lambda_m$. For any $\epsilon\in[1/2,1]$,
the $\epsilon$-R\'enyi entanglement entropy of the ground state $\varrho_0^{(m)}$ of $H_{\Lambda_m}$ satisfies
\begin{equation} \label{gs_bd_noav}
\mathcal{E}_{\epsilon}(\varrho_0^{(m)})\leq \frac{D^{p/2}}{p} \sum_{k\in\Lambda_0,\ 
j\in \Lambda_m \setminus \Lambda_0}
 \left|\left\langle\delta_k, h_{\Lambda_m}^{-1/2}\delta_j\right\rangle \right|^{p/2} \quad \mathbb{P}\text{-almost surely}
\end{equation}
for any $p\in(0,1]$. Moreover, if Assumption~\ref{sec_ass2} also holds, then there is $\tilde{C}< \infty$ for which
 \begin{equation} \label{gs_bd_av}
 \mathbb{E}\left(\mathcal{E}_{\epsilon}(\varrho_0^{(m)})\right)\leq \tilde{C}|\partial \Lambda_0|
 \end{equation}
 for all $\epsilon\in[1/2,1]$.
\end{thm}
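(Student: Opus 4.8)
The plan is to reduce both bounds to a Schatten-quasinorm estimate on a single off-diagonal block. First, by the pure-state monotonicity recorded in (\ref{eq:Renyi-Inequality}), every $\epsilon\in[1/2,1]$ satisfies $\mathcal{E}_{\epsilon}(\varrho_0^{(m)})\leq\mathcal{E}_{1/2}(\varrho_0^{(m)})$, so it suffices to bound the $\epsilon=1/2$ case (the logarithmic negativity) uniformly. For this I would start from the explicit formula of Theorem~\ref{thm:GS-Renyi-Formula}. Abbreviating $K=h_{\Lambda_m}^{-1/2}$, $G=h_{\Lambda_m}^{1/2}$ and writing blocks with the index $0$ for $\Lambda_0$ and $c$ for $\Lambda_0^c:=\Lambda_m\setminus\Lambda_0$, the formula, specialized to $\epsilon=1/2$, reads
\begin{equation*}
\mathcal{E}_{1/2}(\varrho_0^{(m)})=\sum_k\log\!\big(\sqrt{1+\mu_k}+\sqrt{\mu_k}\big),
\end{equation*}
where $\mu_k\geq0$ are the eigenvalues of
\begin{equation*}
\mathcal{G}:=K_{00}G_{00}-I=-K_{0c}G_{c0},
\end{equation*}
the second identity being the $(\Lambda_0,\Lambda_0)$ block of $KG=I$.

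The core of the argument is the passage from this transcendental expression to the off-diagonal entries of $h_{\Lambda_m}^{-1/2}$. I would first establish the elementary scalar inequality
\begin{equation*}
\log\!\big(\sqrt{1+\mu}+\sqrt{\mu}\big)\leq\frac1p\,\mu^{p/2}\qquad(\mu\geq0,\ p\in(0,1]),
\end{equation*}
by comparing derivatives: since $\frac{d}{d\mu}\log(\sqrt{1+\mu}+\sqrt{\mu})=\frac{1}{2\sqrt{\mu}\sqrt{1+\mu}}$ and $\frac{d}{d\mu}\big(\frac1p\mu^{p/2}\big)=\frac12\mu^{p/2-1}$, the claim follows from the trivial bound $t^{1-p}\leq1+t$. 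Summing over $k$ gives $\mathcal{E}_{1/2}(\varrho_0^{(m)})\leq\frac1p\sum_k\mu_k^{p/2}$. Now Weyl's majorization replaces eigenvalues by singular values, $\sum_k\mu_k^{p/2}\leq\sum_k\sigma_k(\mathcal{G})^{p/2}$; the singular-value inequality $\sigma_k(AB)\leq\sigma_k(A)\,\|B\|$ applied with $A=K_{0c}$, $B=G_{c0}$, together with $\|G_{c0}\|\leq\|h_{\Lambda_m}^{1/2}\|\leq D$, gives $\sigma_k(\mathcal{G})\leq D\,\sigma_k(K_{0c})$; and finally, since $p/2\leq1$, the subadditivity of $M\mapsto\sum_k\sigma_k(M)^{p/2}$ applied to the rank-one decomposition of $K_{0c}$ yields $\sum_k\sigma_k(K_{0c})^{p/2}\leq\sum_{k\in\Lambda_0,\,j\in\Lambda_0^c}|\langle\delta_k,h_{\Lambda_m}^{-1/2}\delta_j\rangle|^{p/2}$. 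Chaining these gives exactly (\ref{gs_bd_noav}). Crucially, $\|h_{\Lambda_m}^{-1/2}\|$, which is not controlled by Assumption~\ref{base_ass1}, never appears: only off-diagonal correlators of $h_{\Lambda_m}^{-1/2}$ and the bound $D$ on $\|h_{\Lambda_m}^{1/2}\|$ enter.

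For (\ref{gs_bd_av}) I would take expectations in (\ref{gs_bd_noav}) and invoke Assumption~\ref{sec_ass2}. Choosing $p\in(0,1]$ with $p/2\leq s$ (possible for any $s\in(0,1]$) and using the moment inequality $\mathbb{E}|X|^{p/2}\leq(\mathbb{E}|X|^{s})^{p/(2s)}$, the hypothesis (\ref{sing_cor_dec_2}) turns into $\mathbb{E}|\langle\delta_k,h_{\Lambda_m}^{-1/2}\delta_j\rangle|^{p/2}\leq C'e^{-\eta'|k-j|}$ for some $C'<\infty$ and $\eta'>0$. It remains to carry out the geometric lattice sum $\sum_{k\in\Lambda_0,\,j\in\Lambda_0^c}e^{-\eta'|k-j|}\leq\tilde C|\partial\Lambda_0|$: for each pair $(k,j)$ pick a site $\ell\in\partial\Lambda_0$ lying on an $\ell^1$-geodesic from $k$ to $j$, so that $e^{-\eta'|k-j|}=e^{-\eta'|k-\ell|}e^{-\eta'|\ell-j|}$, and then bound by $|\partial\Lambda_0|\big(\sum_{n\in\mathbb{Z}^d}e^{-\eta'|n|}\big)^2$.

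I expect the heart of the difficulty to be the second paragraph. The single-mode negativity $\log(\sqrt{1+\mu}+\sqrt{\mu})$ is transcendental and diverges (logarithmically) as $\mu\to\infty$, so it cannot be controlled mode-by-mode by any fixed power of a matrix element; one must simultaneously (i) avoid $\|h_{\Lambda_m}^{-1/2}\|$ by exhibiting $\mu_k$ as the spectrum of a product one of whose factors is bounded by $D$, and (ii) drive the exponent all the way down to $p/2$ while keeping summability. This is what forces the precise interplay of the scalar inequality, Weyl's majorization, and the subadditivity of the Schatten quasi-norm; by contrast the area-law summation is routine once the exponent range is matched to Assumption~\ref{sec_ass2}.
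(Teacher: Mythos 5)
Your proposal is correct and follows essentially the same route as the paper: reduce to $\epsilon=1/2$ by monotonicity, invoke the explicit formula of Theorem~\ref{thm:GS-Renyi-Formula}, identify the relevant eigenvalues with those of $-\idty_{\Lambda_0}h_{\Lambda_m}^{-1/2}\idty_{\Lambda_0^c}h_{\Lambda_m}^{1/2}\idty_{\Lambda_0}$ via the $(\Lambda_0,\Lambda_0)$ block of $h^{-1/2}h^{1/2}=\idty$, then pass to singular values by Weyl majorization, use $s_j(AB)\leq s_j(A)\|B\|$ with $\|h_{\Lambda_m}^{1/2}\|\leq D$, bound the Schatten $p/2$-quasi-norm by matrix entries, and finish with the Jensen/moment inequality and the geodesic-through-$\partial\Lambda_0$ summation. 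The only (cosmetic) difference is that you prove the single scalar inequality $\log(\sqrt{1+\nu}+\sqrt{\nu})\leq \nu^{p/2}/p$ directly by comparing derivatives, which cleanly delivers the constant $1/p$ in one step where the paper chains $f_{1/2}(x)\leq\sqrt{x^2-1}+1$ with $\log(1+u)\leq u^p/p$.
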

To be clear, the ground state $\varrho_0^{(m)}$ appearing in (\ref{gs_bd_noav}) above is the rank one projection onto the (necessarily unique) ground state 
of $H_{\Lambda_m}$ which exists $\mathbb{P}$-almost surely by Assumption~\ref{base_ass1}.

In (\ref{gs_bd_av}), the constant $\tilde{C}$ is independent of the choices  of $\Lambda_0$ and $\Lambda_m$, and it can be chosen to be
\begin{equation}\label{def:tilde-C}
\tilde{C}=  \frac{D^{s/2} C}{s}\left(\sum_{k\in\mathbb{Z}^d}e^{-\frac12\eta |k|}\right)^2,
\end{equation}
where the constants $C$, $\eta$, and $s$ are as in Assumption \ref{sec_ass2}. 

Observe that the entanglement bounds in (\ref{gs_bd_noav}) and (\ref{gs_bd_av}) are valid for the  logarithmic negativity  when $\epsilon=1/2$ (the result in \cite{NSS13}), and for the entanglement entropy when $\epsilon=1$ (the result in \cite{BSW19}). The point here is that our methods are distinct than those used in \cite{NSS13} or \cite{BSW19}, and they depend on the explicit formula (\ref{eq:Renyi-rho0-1}) below for the $\epsilon$-R\'enyi entanglement entropy $\mathcal{E}_{\epsilon}(\varrho_0^{(m)})$ for  $\epsilon\in(0,1]$. In fact, we use the proof of Theorem \ref{thm:gs} in Section \ref{sec:ground-state} as a warm-up and to introduce the main idea behind our methods. The main goal is to establish entanglement bounds for energy eigenstates above the ground state.

\begin{thm} \label{thm:es} Consider a sequence $h=\{ h_{j,k} \}_{j,k \in \mathbb{Z}^d}$ satisfying Assumption~\ref{base_ass1}. 
Fix a finite, connected set $\Lambda_0 \subset \mathbb{Z}^d$. Take $m \geq 1$ large enough so that $\Lambda_0 \subset \Lambda_m$. 
For any $\epsilon \in [1/2, 1]$, the $\epsilon$-R\'enyi entropy of any eigenstate $\varrho_\alpha^{(m)}$ of $H_{\Lambda_m}$ corresponding to a single excitation, i.e.
one for which $\alpha\in\mathbb{N}_0^{|\Lambda_0|}$ satisfies $\|\alpha\|_1=1$, can be estimated as
\begin{equation} \label{es_bd_noav}
\mathcal{E}_{\epsilon}(\varrho_\alpha^{(m)})\leq 2\mathcal{N}(\varrho_0^{(m)})+4\log\left(|\Lambda_0|\right)\quad \mathbb{P}\text{-almost surely}.
\end{equation}
Moreover, if Assumption~\ref{sec_ass2} also holds, then one has the following  area law: 
  \begin{equation} \label{es_bd_av}
\mathbb{E}\left(\mathcal{E}_\epsilon(\varrho_\alpha^{(m)})\right)\leq 2\tilde{C} |\partial \Lambda_0|+4\log(|\Lambda_0|),
 \end{equation}
 for all $\epsilon\in[1/2,1]$ and $\alpha\in\mathbb{N}_0^{|\Lambda_0|}$ with $\|\alpha\|_1=1$. In (\ref{es_bd_av}), $\tilde{C}$ is the constant in Theorem \ref{thm:gs}, and can be chosen to be as in (\ref{def:tilde-C}).
\end{thm}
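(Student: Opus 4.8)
The plan is to prove the almost-sure bound \eqref{es_bd_noav}; the area law \eqref{es_bd_av} then follows immediately by taking expectations and inserting the ground-state estimate \eqref{gs_bd_av} of Theorem~\ref{thm:gs} with $\epsilon=1/2$. Since $\varrho_\alpha^{(m)}$ is pure, the monotonicity \eqref{eq:Renyi-Inequality} gives $\mathcal{E}_\epsilon(\varrho_\alpha^{(m)})\le\mathcal{E}_{1/2}(\varrho_\alpha^{(m)})=\mathcal{N}(\varrho_\alpha^{(m)})$ for every $\epsilon\in[1/2,1]$, so it suffices to bound the logarithmic negativity. Reading off the single-excitation eigenvector from \eqref{def:phi-alpha} with $H_0=1$, $H_1(y)=2y$, the excitation sits in one normal mode $\ell$ and
\[
\hat\Psi_\alpha^{(\gamma)}=\sqrt{2\gamma_\ell}\,(v_\ell^Tx)\,\Omega=b_\ell^*\Omega,\qquad \Omega:=\hat\Psi_0^{(\gamma)},
\]
where $b_\ell^*$ is the creation operator of mode $\ell$; in particular $\hat\Psi_\alpha^{(\gamma)}$ is already normalized. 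For a normalized pure state $\Psi\in\mathscr{H}_{\Lambda_0}\otimes\mathscr{H}_{\Lambda_0^c}$ let $R_\Psi$ be its Schmidt (Hilbert–Schmidt) operator, so that $R_\Psi R_\Psi^*=(\,|\Psi\rangle\langle\Psi|\,)_{\Lambda_0}$ and $\mathcal{N}(|\Psi\rangle\langle\Psi|)=2\log\norm{R_\Psi}_1$, with $\norm{R_\Omega}_1=\Tr[(\varrho_0^{(m)})_{\Lambda_0}^{1/2}]=e^{\mathcal{N}(\varrho_0^{(m)})/2}$. Thus \eqref{es_bd_noav} is equivalent to the operator estimate $\norm{R_{\hat\Psi_\alpha^{(\gamma)}}}_1\le |\Lambda_0|^2\,\norm{R_\Omega}_1^2$.

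Next I localize the mode operator across the cut. Writing $b_\ell^*=A+B$, where $A$ is a linear combination of $\{x_k,p_k\}_{k\in\Lambda_0}$ and $B$ of $\{x_j,p_j\}_{j\in\Lambda_0^c}$, one has $\hat\Psi_\alpha^{(\gamma)}=(A\otimes I)\Omega+(I\otimes B)\Omega$. The triangle inequality for $\norm{\cdot}_1$, together with the elementary identities $\norm{(A\otimes I)R_\Omega}_1=\norm{A(\varrho_0^{(m)})_{\Lambda_0}^{1/2}}_1$ and $\norm{(I\otimes B)R_\Omega}_1=\norm{B(\varrho_0^{(m)})_{\Lambda_0^c}^{1/2}}_1$, reduces the problem to bounding each of these two symmetric terms by $\tfrac12|\Lambda_0|^2\norm{R_\Omega}_1^2$.

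The central step is an estimate for $\norm{A(\varrho_0^{(m)})_{\Lambda_0}^{1/2}}_1$ exploiting the Gaussian Schmidt normal form underlying Theorem~\ref{thm:GS-Renyi-Formula}: there are local mode operators $c_\mu$ on $\Lambda_0$ (paired with $d_\mu$ on $\Lambda_0^c$) and parameters $t_\mu\in[0,1)$ for which $\Omega$ is a product of two-mode squeezed vacua and $\norm{R_\Omega}_1=\prod_\mu r_\mu$ with $r_\mu=\sqrt{(1+t_\mu)/(1-t_\mu)}$. Expanding $A=\sum_\mu(P_\mu c_\mu^*+Q_\mu c_\mu)$ over at most $|\Lambda_0|$ modes and using the exact single-mode computation
\[
\norm{R_{c_\mu^*\Omega}}_1=(1-t_\mu)\Big(\textstyle\sum_{n\ge0}t_\mu^n\sqrt{n+1}\Big)\norm{R_\Omega}_1\le r_\mu\norm{R_\Omega}_1,\qquad \norm{R_{c_\mu\Omega}}_1=t_\mu\norm{R_{c_\mu^*\Omega}}_1,
\]
followed by the triangle and Cauchy–Schwarz inequalities over the modes, yields
\[
\norm{A(\varrho_0^{(m)})_{\Lambda_0}^{1/2}}_1\le\norm{R_\Omega}_1\sqrt{2|\Lambda_0|}\Big(\textstyle\sum_\mu(|P_\mu|^2+|Q_\mu|^2)\,r_\mu^2\Big)^{1/2}=\norm{R_\Omega}_1\sqrt{2|\Lambda_0|\,\Tr\!\big[(AA^*+A^*A)\tilde\rho_0\big]},
\]
where $\tilde\rho_0$ is the normalized square root of $(\varrho_0^{(m)})_{\Lambda_0}$, a Gaussian state whose symplectic eigenvalues are exactly the $r_\mu^2$, so that $\prod_\mu r_\mu^2=\norm{R_\Omega}_1^2$.

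The hard part is the final bound on the second moment $\Tr[(AA^*+A^*A)\tilde\rho_0]$. Crudely pulling out $\max_\mu r_\mu^2\le\norm{R_\Omega}_1^2$ is too lossy: using $A=\sqrt{\gamma_\ell/2}\,(v_\ell)_{\Lambda_0}^Tx_{\Lambda_0}-\tfrac{i}{\sqrt{2\gamma_\ell}}(v_\ell)_{\Lambda_0}^Tp_{\Lambda_0}$ and the ground-state covariances $\langle x_jx_k\rangle=\tfrac12(h_{\Lambda_m}^{-1/2})_{jk}$, $\langle p_jp_k\rangle=\tfrac12(h_{\Lambda_m}^{1/2})_{jk}$, this reintroduces a factor $\sim\gamma_\ell/\gamma_{\min}$ from soft modes that is \emph{not} controlled by $|\Lambda_0|$. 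The point I must establish is that the soft normal modes which inflate this moment are precisely those responsible for large ground-state entanglement, so that keeping the pairing between the small Bogoliubov overlaps $|P_\mu|^2+|Q_\mu|^2$ of the excited mode with the soft Schmidt modes and their large symplectic eigenvalues $r_\mu^2$ (rather than bounding the two factors separately) gives the correct estimate $\Tr[(AA^*+A^*A)\tilde\rho_0]\le c\,|\Lambda_0|\,\norm{R_\Omega}_1^2$ for a universal constant $c$; I expect this covariance/entanglement comparison to be the technical heart of the argument. Granting it, the two symmetric terms combine to $\norm{R_{\hat\Psi_\alpha^{(\gamma)}}}_1\le|\Lambda_0|^2\norm{R_\Omega}_1^2$ (in the regime $|\Lambda_0|\gg1$ of interest), hence $\mathcal{N}(\varrho_\alpha^{(m)})\le2\mathcal{N}(\varrho_0^{(m)})+4\log|\Lambda_0|$, and taking expectations with \eqref{gs_bd_av} delivers \eqref{es_bd_av} with the same constant $\tilde C$.
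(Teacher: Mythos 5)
Your proposal correctly identifies the overall strategy (reduce to the logarithmic negativity via monotonicity of the R\'enyi entropies, compare the single-excitation state to the ground state, and pay a $\log|\Lambda_0|$ price for summing over modes), and your route through Schmidt operators and two-mode squeezed normal forms is genuinely different from the paper's, which works entirely with integral kernels: it computes $(\varrho_{\e_k})_{\Lambda_0}$ explicitly by Gaussian integration (Lemma \ref{lem:rho1}), applies the Peierls--Bogolyubov inequality to reduce to diagonal matrix elements in the Hermite--Gaussian basis that diagonalizes $\widehat{(\varrho_0)}_{\Lambda_0}$, and obtains the exact formula (\ref{eq:diagonal-El}). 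However, your argument has a genuine gap, and you flag it yourself: the estimate $\Tr[(AA^*+A^*A)\tilde\rho_0]\leq c\,|\Lambda_0|\,\norm{R_\Omega}_1^2$ is asserted, not proved. You correctly observe that separating the overlap factors from the symplectic eigenvalues is too lossy (it reintroduces an uncontrolled $\gamma_\ell/\gamma_{\min}$), and that one must exploit a cancellation between small Bogoliubov overlaps and large symplectic eigenvalues --- but that cancellation \emph{is} the theorem. Without it you have not bounded anything.

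For comparison, the paper closes exactly this gap with Lemma \ref{lem:Q}(a): the quantities $Q_{k,j}=\gamma_k\bigl(\mu_j^2|\langle\delta_j,F_2^TA^{-1/2}\upnu_k\rangle|^2+|\langle\delta_j,F_2^TA^{-1/2}(v_k)_{\Lambda_0}\rangle|^2\bigr)$, which play the role of your weighted second moments $(|P_\mu|^2+|Q_\mu|^2)r_\mu^2$, satisfy $\sum_j Q_{k,j}\leq 2$. The mechanism is the identity (\ref{identity-upnu}), $\upnu_k^T(h_\Lambda^{1/2}/B)^{-1}\upnu_k+(v_k)_{\Lambda_0^c}^TB^{-1}(v_k)_{\Lambda_0^c}=\gamma_k^{-1}$, which follows from $v_k$ being a normalized eigenvector together with the block inversion of $h_\Lambda^{1/2}$; the factor $\gamma_k^{-1}$ exactly compensates the prefactor $\gamma_k$ in $Q_{k,j}$, and the second group of terms is controlled by $\gamma_k(v_k)_{\Lambda_0}(v_k)_{\Lambda_0}^T\leq A$. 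This is the ``covariance/entanglement comparison'' you postpone. If you supplied an analogous identity in your mode-operator language (expressing $\sum_\mu(|P_\mu|^2+|Q_\mu|^2)r_\mu^2$ through the restriction of the eigenvector $v_\ell$ and the Schur complement, and using normalization of $v_\ell$), your argument would go through and would constitute an alternative, more algebraic proof; as written, the central inequality is missing.
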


%%%%%
%
%
% more below
%
%%%%%%%
Here are some remarks:
\begin{itemize}\itemsep2mm
\item The gaussian structure of the ground state opens up a rich machinery (an algebraic approach) which has been successfully used to 
study entanglement, see e.g., \cite{VidalWerner, Audenaert2002, Eisert10, NSS13, BSW19, AR23}. In this sense, the area law contained in
(\ref{gs_bd_av}) of Theorem~\ref{thm:gs} is not new. More interestingly, however, is
the fact that all other eigenstates, i.e. $\varrho_\alpha$ where $\alpha\neq 0$, are non-gaussian. To our knowledge, 
this is the first work to study the entanglement of individual, non-gaussian eigenstates of oscillator models and establish an area law-like bound. Moreover, we think that our methods can be applied to more general eigenstates. In principle, one needs to understand how Lemma \ref{lem:rho1} below generalizes to higher excitations' eigenstates.

\item It is worth mentioning that whether the dominating term in the entanglement bound (\ref{es_bd_av}) is the area law term $|\partial\Lambda_0|$ or the logarithmically-corrected area law term $\log|\Lambda_0|$ depends on the geometry of the distinguished region $\Lambda_0$ and on the dimension $d$. In particular, in the typical example when $\Lambda_0$ is a $d$-dimensional cube with side length $\ell$, it is straightforward to see from (\ref{es_bd_av}) that
\begin{equation}
\mathbb{E}\left(\mathcal{E}_\epsilon(\varrho_\alpha^{(m)})\right)\leq
\begin{cases}
\mathcal{O}(\log\ell)& \text{if } d=1\\
\mathcal{O}(\ell^{d-1}) & \text{if } d>1
\end{cases}.
\end{equation}

\item A related work is that of \cite{AR18} which extends the methods used for gaussian states to prove an area law for an ensemble of (non-gaussian) eigenstates 
associated with a fixed number of modes. For example, the ensemble of all eigenstates with $N \geq 1$ excitations is considered in \cite{AR18}, i.e. the mixed non-gaussian state
\begin{equation}\label{def:sym}
\uprho_{N}^{(m)}=\frac{1}{\#\{\alpha\in\mathbb{N}_0^{|\Lambda_m|};\ \|\alpha\|_1=N\}}\sum_{\alpha;\ \|\alpha\|_1=N}\varrho_{\alpha}^{(m)}.
\end{equation} 
The main result in \cite{AR18} is to establish an area law of the form $\mathcal{N}(\uprho_{N}^{(m)})\sim (2N+1)|\partial\Lambda_0|$.
As here, this result holds for deterministic, uniformly gapped models as well as sufficiently disordered models, at least after averaging over the disorder.  To compare with our main result, let's consider the case $N=1$. The methods in this work, when applied to find the $\epsilon$-R\'enyi entanglement entropy of $\uprho_{N=1}^{(m)}$ --while it is not a suitable measure of entanglement of mixed states-- shows also an area law scaling.
\begin{proposition} \label{prop:sym} Consider a sequence $h=\{ h_{j,k} \}_{j,k \in \mathbb{Z}^d}$ satisfying Assumptions~\ref{base_ass1} and \ref{sec_ass2}. 
Fix a finite, connected set $\Lambda_0 \subset \mathbb{Z}^d$. Take $m \geq 1$ large enough so that $\Lambda_0 \subset \Lambda_m$ and $|\Lambda_0|^2\leq|\Lambda_m|$. There exist a constant $C_1<\infty$ independent of $m$ such that
\begin{equation}
\mathbb{E}\left(\mathcal{E}_{1/2}(\uprho_{N=1}^{(m)})\right)\leq C_1|\partial\Lambda_0|.
\end{equation}
\end{proposition}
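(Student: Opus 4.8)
The plan is to reduce everything to the ground state by exploiting an explicit kernel for the ensemble state $\uprho_{N=1}^{(m)}$ that parallels the single-eigenstate formula (\ref{eq:rho-alpha}). Writing $\uprho_{N=1}^{(m)}=\frac{1}{|\Lambda_m|}\sum_{j=1}^{|\Lambda_m|}\varrho_{e_j}^{(m)}$ (there are exactly $|\Lambda_m|$ single-excitation multi-indices), I would first specialize (\ref{eq:rho-alpha}) to $\alpha=e_j$. Since $H_1(t)=2t$, the Hermite factor collapses to $2\gamma_j(v_j^Tx)(v_j^Ty)$, so $\varrho_{e_j}^{(m)}(x,y)=\varrho_0^{(m)}(x,y)\,2\gamma_j(v_j^Tx)(v_j^Ty)$. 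Summing over $j$ and using $\sum_j\gamma_j v_jv_j^T=h_{\Lambda_m}^{1/2}$ (a consequence of (\ref{EVdecomposition:h})) yields the clean kernel
\begin{equation*}
\uprho_{N=1}^{(m)}(x,y)=\varrho_0^{(m)}(x,y)\,\frac{2}{|\Lambda_m|}\,x^T h_{\Lambda_m}^{1/2}y .
\end{equation*}
Thus the ensemble state is the Gaussian ground state modulated by a single quadratic form; this is the step where averaging over all $|\Lambda_m|$ modes replaces the non-Gaussian, mode-dependent Hermite weights of an individual eigenstate by the fixed matrix $h_{\Lambda_m}^{1/2}$.

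Next I would compute the reduced kernel $\sigma:=(\uprho_{N=1}^{(m)})_{\Lambda_0}$ via the Gaussian integral (\ref{eq:reduced}) over $u=x_{\Lambda_0^c}$. Decomposing $h_{\Lambda_m}^{1/2}$ into $\Lambda_0,\Lambda_0^c$ blocks and carrying out the Gaussian moment computation, the purely-$u$ part $u^TCu$ produces a constant $\tfrac12|\Lambda_0^c|$, while the remaining pieces produce a degree-two polynomial in $(x_{\Lambda_0},y_{\Lambda_0})$. Writing $\tau:=(\varrho_0^{(m)})_{\Lambda_0}$ for the Gaussian ground-state reduced density, this gives
\begin{equation*}
\sigma=\frac{|\Lambda_0^c|}{|\Lambda_m|}\,\tau+R,\qquad R(x_0,y_0)=\frac{2}{|\Lambda_m|}\,\tau(x_0,y_0)\,\tilde Q(x_0,y_0),
\end{equation*}
where $\tilde Q$ is quadratic with coefficients built from the blocks of $h_{\Lambda_m}^{1/2}$ (hence controlled by $\|h_{\Lambda_m}^{1/2}\|\le D$). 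Taking the trace and using $\Tr\sigma=\Tr\tau=1$ pins down $\Tr R=|\Lambda_0|/|\Lambda_m|$, so morally $\sigma$ is an $O(|\Lambda_0|/|\Lambda_m|)$-small deformation of the ground-state reduced density.

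The estimate itself rests on the concavity of $A\mapsto\Tr[A^{1/2}]$ on positive operators, which gives the tangent-line bound
\begin{equation*}
\Tr[\sigma^{1/2}]\le\Tr[\tau^{1/2}]+\tfrac12\Tr\!\left[\tau^{-1/2}(\sigma-\tau)\right].
\end{equation*}
Since $\sigma-\tau=-\tfrac{|\Lambda_0|}{|\Lambda_m|}\tau+R$ and the first piece contributes a negative term, only $\tfrac1{|\Lambda_m|}\Tr[\tau^{-1/2}K_{\tilde Q}]$, with $K_{\tilde Q}$ the operator of kernel $\tau\tilde Q$, needs to be bounded; the Gaussian decay of $\tau^{1/2}$ beats the polynomial $\tilde Q$, so this trace converges, and an explicit second-moment computation in the Gaussian state $\tau$ bounds it by $O(|\Lambda_0|)\,\Tr[\tau^{1/2}]$. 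Here the hypothesis $|\Lambda_0|^2\le|\Lambda_m|$ is exactly what is needed: it forces $\tfrac{|\Lambda_0|}{|\Lambda_m|}\le|\Lambda_0|^{-1}\le 1$, so the correction is $O(1)\,\Tr[\tau^{1/2}]$ and $\Tr[\sigma^{1/2}]\le C\,\Tr[\tau^{1/2}]=C\,e^{\mathcal{N}(\varrho_0^{(m)})/2}$. Recalling $\mathcal{E}_{1/2}(\uprho_{N=1}^{(m)})=2\log\Tr[\sigma^{1/2}]$ and $\mathcal{N}(\varrho_0^{(m)})=\mathcal{E}_{1/2}(\varrho_0^{(m)})$, this yields the almost-sure bound $\mathcal{E}_{1/2}(\uprho_{N=1}^{(m)})\le\mathcal{N}(\varrho_0^{(m)})+O(1)$.

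Finally I would take expectations and invoke the ground-state area law (\ref{gs_bd_av}) of Theorem~\ref{thm:gs}, namely $\mathbb{E}[\mathcal{N}(\varrho_0^{(m)})]\le\tilde C|\partial\Lambda_0|$, to conclude $\mathbb{E}[\mathcal{E}_{1/2}(\uprho_{N=1}^{(m)})]\le\tilde C|\partial\Lambda_0|+O(1)\le C_1|\partial\Lambda_0|$, using $|\partial\Lambda_0|\ge1$. I expect the main obstacle to be the third step: the rigorous control of $\Tr[\tau^{-1/2}K_{\tilde Q}]$, where one must verify convergence against the unbounded $\tau^{-1/2}$ and extract the correct $O(|\Lambda_0|)$ scaling uniformly in $m$ from the blocks of $h_{\Lambda_m}^{1/2}$ — this is where the explicit Gaussian structure of $\tau$ and the bound $\|h_{\Lambda_m}^{1/2}\|\le D$ must be combined carefully. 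An alternative to the concavity estimate is to group the modes localized away from $\Lambda_0$, whose reduced states coincide with $\tau$ and reproduce the area law, and to treat the $O(|\Lambda_0|)$ boundary modes by subadditivity of $\Tr[\cdot^{1/2}]$; but that route seems to require the sharper per-mode input that a single-site excitation changes $\mathcal{E}_{1/2}$ by only $O(1)$, and is less clean.
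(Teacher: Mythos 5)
Your first two steps are correct and are essentially a kernel-level repackaging of what the paper does spectrally: the identity $\sum_k\gamma_k v_kv_k^T=h_{\Lambda}^{1/2}$ is the content of Lemma~\ref{lem:Q}(b) ($\sum_kQ_{k,j}=2$), and your decomposition $\sigma=\frac{|\Lambda_0^c|}{|\Lambda|}\tau+R$ is equivalent to the averaged form of Lemma~\ref{lem:diagonal-El}. The gap is in your third step, and it is not merely the rigor issue you flag (unboundedness of $\tau^{-1/2}$) but a quantitative failure: the tangent-line bound $\Tr[\sigma^{1/2}]\le\Tr[\tau^{1/2}]+\frac12\Tr[\tau^{-1/2}(\sigma-\tau)]$ has a right-hand side that is \emph{not} $O(1)\cdot\Tr[\tau^{1/2}]$. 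Computing in the eigenbasis $\{\Psi_n^{(\kappa)}\}$ of $\tau$, the diagonal entries of $\sigma-\tau$ are $\uplambda_n(\tau)\cdot\frac{2}{|\Lambda|}\bigl(\sum_j\frac{2\mu_j}{\mu_j^2-1}n_j-\sum_j\frac{\mu_j}{\mu_j+1}\bigr)$, so that, with $r_j=\frac{\mu_j-1}{\mu_j+1}$,
\begin{equation*}
\tfrac12\Tr\!\left[\tau^{-1/2}(\sigma-\tau)\right]=\frac{\Tr[\tau^{1/2}]}{|\Lambda|}\left(\sum_{j=1}^{|\Lambda_0|}\frac{2\mu_j}{\mu_j^2-1}\cdot\frac{\sqrt{r_j}}{1-\sqrt{r_j}}-\sum_{j=1}^{|\Lambda_0|}\frac{\mu_j}{\mu_j+1}\right),
\end{equation*}
and each summand in the first sum equals $\frac{2\mu_j}{(\mu_j+1)\sqrt{\mu_j-1}\,(\sqrt{\mu_j+1}-\sqrt{\mu_j-1})}\sim(2(\mu_j-1))^{-1/2}$ as $\mu_j\downarrow1$. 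In the localized regime all but the $O(|\partial\Lambda_0|)$ boundary modes have $\mu_j-1$ exponentially small (this is exactly why $\sum_j\log f_{1/2}(\mu_j)$ is an area law), and neither Assumption~\ref{base_ass1} nor Assumption~\ref{sec_ass2} provides a lower bound on $\mu_j-1$ or any moment control of $(\mu_j-1)^{-1/2}$. Hence your claimed estimate $\Tr[\tau^{-1/2}K_{\tilde Q}]=O(|\Lambda_0|)\Tr[\tau^{1/2}]$ is unavailable, and the hypothesis $|\Lambda_0|^2\le|\Lambda|$ cannot rescue it.

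The paper's proof avoids precisely this divergence by keeping the square root \emph{inside} the $n$-sum: Peierls--Bogolyubov gives $\Tr[\sigma^{1/2}]\le\sum_n\langle\Psi_n^{(\kappa)},\sigma\Psi_n^{(\kappa)}\rangle^{1/2}$, and then $\sqrt{1+\sum_jc_jn_j}\le1+\sum_j\sqrt{c_j}\,n_j$ with $c_j=\frac{4}{|\Lambda|(\mu_j-1)}$ (using $\sum_kQ_{k,j}=2$). The coefficient $\sqrt{c_j}\sim(\mu_j-1)^{-1/2}$ is then multiplied by the geometric weight $\sum_{n_j}n_jr_j^{n_j/2}\big/\sum_{n_j}r_j^{n_j/2}\sim\sqrt{\mu_j-1}$, the singularity cancels, and what survives per mode is $f_{1/2}(\mu_j)$, already controlled by the ground-state negativity; this yields $\mathcal{E}_{1/2}(\uprho_{N=1})\le2\mathcal{E}_{1/2}(\varrho_0)+\log 3$ under $|\Lambda_0|\le|\Lambda|^{1/2}$. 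Your linearization takes the square root only after summing over $n$, which destroys this cancellation. Replacing the concavity step by this termwise square-root estimate (as in Section~\ref{sec:ensemble}) repairs the argument; the rest of your outline (kernel identity, reduction, and the final appeal to (\ref{gs_bd_av})) then goes through.
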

The proof of this proposition is included in Section \ref{sec:ensemble}.
\end{itemize}

In the sequel of the manuscript,  we take a sequence of coefficients satisfying Assumption~\ref{base_ass1} and fix a finite set $\Lambda_0 \subset \mathbb{Z}^d$. 
By Assumption~\ref{base_ass1}, for all $m \geq 1$ sufficiently large $\Lambda_0 \subset \Lambda_m$ and to ease notation we suppress $m$ and simply write 
$\Lambda_m = \Lambda$, $\varrho_0^{(m)}=\varrho_0$, $\varrho_\alpha^{(m)}=\varrho_\alpha$, and $\uprho_{N=1}^{(m)}=\uprho_{N=1}$.

%%%%%%%%%%%%%%%%%%%%%%%%%%%%%%%%%%%%%%%%%%
\section{Entanglement of the ground state} \label{sec:ground-state}%%%%%%%%%%%%%%%%%%%%
%%%%%%%%%%%%%%%%%%%%%%%%%%%%%%%%%%%%%%%%%%

In this section, we will calculate and subsequently estimate the $\epsilon$-R\'enyi entropy of the ground state.
 Our first goal is to prove Theorem~\ref{thm:GS-Renyi-Formula} which provides an explicit formula for the $\epsilon$-R\'enyi entropy of the ground state.
A careful reader notes that this results holds for any finite $\Lambda$ with $\Lambda_0 \subset \Lambda$ and deterministic, effective single-particle hamiltonian $h_{\Lambda}$ which is positive definite.
In Section~\ref{sec:gs_bds}, we use the formula found in Theorem~\ref{thm:GS-Renyi-Formula} to estimate this $\epsilon$-R\'enyi entropy and thereby complete the proof of Theorem~\ref{thm:gs}. 
 
Recall that the ground state corresponds to $\alpha=0\in\mathbb{N}_0^{|\Lambda|}$ in (\ref{eq:rho-alpha}). It is the integral operator $\varrho_0$ on $\mathscr{H}_\Lambda$ whose kernel is
\begin{equation}\label{eq:rho-0}
\varrho_0(x, y) =  \left(\frac{\det(h_\Lambda^{1/2})}{\pi^{|\Lambda|}}\right)^{1/2} \exp\left(-\frac{1}{2}(x^T h_\Lambda^{1/2}x+ y^T h_\Lambda^{1/2}y )\right).
\end{equation}
%%%%%%%%%%%%%%%%%%%%%%%
\subsection{The reduced ground state}%%%
%%%%%%%%%%%%%%%%%%%%%%%
To find the reduced state $(\varrho_0)_{\Lambda_0}$ we use the decomposition (\ref{eq:dec-x-y-v}) and decompose $h_\Lambda^{1/2}$ accordingly as
\begin{equation}\label{Dec:h-1-2}
h_\Lambda^{1/2}=\left(
    \begin{array}{cc}
      A  & C \\
      C^T & B \\
    \end{array}
  \right), \text{ and note that }\det(h_\Lambda^{1/2})=\det(h_\Lambda^{1/2}/B) \det(B),
\end{equation}
where here and in the following, we use the common notation for the Schur complement of $B$,
\begin{equation}\label{Schur}
h_\Lambda^{1/2}/B:=A-C B^{-1} C^T.
\end{equation}
This decomposition results the formula
\begin{equation}
\varrho_0\big((x_{\Lambda_0},u),(y_{\Lambda_0},u)\big)=\left(\frac{\det(h_\Lambda^{1/2})}{\pi^{|\Lambda|}}\right)^{1/2}e^{-\frac{1}{2}\left(x_{\Lambda_0}^T A x_{\Lambda_0} +y_{\Lambda_0}^T A y_{\Lambda_0}\right)}
e^{-\frac{1}{2}\left(u^T(2B)u+2(x_{\Lambda_0}+y_{\Lambda_0})^T C u\right)}
\end{equation}
and hence, (\ref{eq:reduced}) reads as
\begin{equation}
(\varrho_0)_{\Lambda_0}(x_{\Lambda_0}, y_{\Lambda_0})=\left(\frac{\det(h_\Lambda^{1/2})}{\pi^{|\Lambda|}}\right)^{1/2}e^{-\frac{1}{2}\left(x_{\Lambda_0}^T A x_{\Lambda_0} +y_{\Lambda_0}^T A y_{\Lambda_0}\right)}
\int_{\mathbb{R}^{|\Lambda_0^c|}}e^{-\frac{1}{2}\left(u^T(2B)u+2(x_{\Lambda_0}+y_{\Lambda_0})^T C u\right)}\ du.
\end{equation}
In the following, we suppress the subscripts $\Lambda_0$ from $x_{\Lambda_0}$ and $y_{\Lambda_0}$ to ease notations.
Evaluate the well known gaussian integral, see Theorem \ref{thm:Int-Identities}. 
\begin{equation}
\int_{\mathbb{R}^{|\Lambda_0^c|}}e^{-\frac{1}{2}\left(u^T(2B)u+2(x_{\Lambda_0}+y_{\Lambda_0})^T C u\right)}\ du=\left(\frac{\pi^{|\Lambda_0^c|}}{\det(B)}\right)^{1/2}\exp\left(\frac{1}{4}(x+y)^T C B^{-1}C^T (x+y)\right).
\end{equation}
Arrange  like terms and use (\ref{Dec:h-1-2}) to obtain the following lemma.
\begin{lem}
The reduced ground state $(\varrho_0)_{\Lambda_0}$ of the harmonic oscillators is the (trace class) integral operator on $\mathscr{L}^2(\mathbb{R}^{|\Lambda_0|})$ whose kernel is given by the formula 
\begin{equation}\label{eq:red-rho-0}
(\varrho_0)_{\Lambda_0}(x,y)=\left(\frac{\det\big(h_\Lambda^{1/2}/B\big)}{\pi^{|\Lambda_0|}}\right)^{1/2} e^{-\frac{1}{2}\mathcal{G}(x,y)}
\end{equation}
where
\begin{equation}\label{def:G}
\mathcal{G}(x, y)=\begin{bmatrix}
x^T & y^T
\end{bmatrix}
\begin{bmatrix}
A-\frac12 CB^{-1} C^T & -\frac12 CB^{-1} C^T\\
-\frac12 CB^{-1} C^T & A-\frac12 CB^{-1} C^T\end{bmatrix}
\begin{bmatrix}
x\\
 y
 \end{bmatrix}.
\end{equation}
 \end{lem}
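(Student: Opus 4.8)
The plan is to obtain (\ref{eq:red-rho-0}) by assembling the pieces already computed above: substitute the block decomposition (\ref{Dec:h-1-2}) of $h_\Lambda^{1/2}$ into the kernel (\ref{eq:rho-0}), carry out the Gaussian $u$-integral that defines the partial trace, and then repackage the resulting quadratic exponent into the symmetric block form $\mathcal{G}$.

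First I would record that, with $x=(x_{\Lambda_0},u)$ and $y=(y_{\Lambda_0},u)$ sharing the same $\Lambda_0^c$-component $u$, the block structure gives
\begin{equation*}
x^T h_\Lambda^{1/2} x + y^T h_\Lambda^{1/2} y = x_{\Lambda_0}^T A x_{\Lambda_0} + y_{\Lambda_0}^T A y_{\Lambda_0} + 2(x_{\Lambda_0}+y_{\Lambda_0})^T C u + 2u^T B u,
\end{equation*}
which reproduces the exponent in the integrand of the partial trace (\ref{eq:reduced}). Evaluating the $u$-integral via Theorem \ref{thm:Int-Identities} with quadratic matrix $2B$ and linear coefficient $C^T(x+y)$ produces the factor $(\pi^{|\Lambda_0^c|}/\det B)^{1/2}$ together with the exponential $\exp\big(\tfrac14 (x+y)^T C B^{-1} C^T (x+y)\big)$; the coefficient $\tfrac14$ is where I would take most care, since inverting $2B$ contributes the extra factor $\tfrac12$.

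Next I would consolidate the two constant prefactors. Using $|\Lambda|=|\Lambda_0|+|\Lambda_0^c|$ to cancel the powers of $\pi$ and the Schur-complement determinant identity $\det(h_\Lambda^{1/2})=\det(h_\Lambda^{1/2}/B)\det(B)$ from (\ref{Dec:h-1-2}), the product $(\det(h_\Lambda^{1/2})/\pi^{|\Lambda|})^{1/2}(\pi^{|\Lambda_0^c|}/\det B)^{1/2}$ collapses to $(\det(h_\Lambda^{1/2}/B)/\pi^{|\Lambda_0|})^{1/2}$, matching the normalization in (\ref{eq:red-rho-0}).

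The only genuinely substantive step is identifying the exponent $-\tfrac12(x^T A x + y^T A y) + \tfrac14 (x+y)^T C B^{-1} C^T (x+y)$ with $-\tfrac12\mathcal{G}(x,y)$. Here I would expand the quadratic form and invoke the symmetry of $C B^{-1} C^T$ (inherited from the symmetry of $h_\Lambda^{1/2}$, hence of its block $B$) to merge the two cross terms via $x^T C B^{-1} C^T y = y^T C B^{-1} C^T x$; reading off the diagonal and off-diagonal blocks then yields precisely $A-\tfrac12 C B^{-1} C^T$ and $-\tfrac12 C B^{-1} C^T$ as in (\ref{def:G}). Finally, that $(\varrho_0)_{\Lambda_0}$ is trace class with unit trace needs no separate argument, since it is the partial trace of the rank-one normalized state $\varrho_0$. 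There is no real obstacle beyond the bookkeeping of the factor of two and the symmetrization of the cross terms.
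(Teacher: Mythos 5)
Your proposal is correct and follows essentially the same route as the paper: block-decompose $h_\Lambda^{1/2}$, evaluate the Gaussian $u$-integral via Theorem \ref{thm:Int-Identities} (tracking the factor $\tfrac14$ from inverting $2B$), cancel the constants with the Schur-complement determinant identity, and symmetrize the cross terms to read off $\mathcal{G}$. All the bookkeeping checks out, so nothing further is needed.
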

 Here we remark that the  Schur complement $h_\Lambda^{1/2}/B$ of the positive (almost surely) $B$ is also (almost surely) positive, and hence, we have
 \begin{equation}
 A-\frac12CB^{-1}C^T>0\quad \mathbb{P}\text{-almost surely}
 \end{equation}
%%%%%%%%%%%%%%%%%%%%%%%%%%%%%%%
\subsection{A change of variables}\label{sec:change-of-variables}%%%%%%%%%%%%%%
%%%%%%%%%%%%%%%%%%%%%%%%%%%%%%%
In this section we introduce a change of variables that allows to write the reduced ground state density $(\varrho_0)_{\Lambda_0}$ as a product state.

We change variables using a unitary operator $\mathscr{O}$ defined by a matrix $F\in\mathbb{R}^{|\Lambda_0|\times|\Lambda_0|}$ so that
\begin{equation}\label{def:O}
 \mathscr{O}f(x)=|\det(F)|^{1/2}f(Fx), \text{ and }  \mathscr{O}^*f(x)=|\det(F)|^{-1/2}f(F^{-1}x).
\end{equation}
We define $F$ as a composition of two mappings, as follows.
First, we define
\begin{equation}\label{def:F1-Theta}
F_1:= A^{-1/2}\left(\idty_{\Lambda_0}-\frac12A^{-1/2}C B^{-1}C^T A^{-1/2}\right)^{-1/2}.
\end{equation}
It follows from (\ref{def:G}) that
\begin{equation}\label{eq:G-F1}
\mathcal{G}(F_1x,F_1y)=\begin{bmatrix}
x^T & y^T
\end{bmatrix}
\begin{bmatrix}
\idty_{\Lambda_0} & - \frac12 F_1^T C B^{-1}C^T F_1\\
-\frac12 F_1^TC B^{-1}C^TF_1 & \idty_{\Lambda_0}
\end{bmatrix}
\begin{bmatrix}
x\\
 y
\end{bmatrix}.
\end{equation}
The second mapping is the orthogonal matrix $F_2$ that diagonalizes the symmetric negative definite (almost surely) operator $- \frac12 F_1^T C B^{-1}C^T F_1$ with eigenvalues $\sigma_j$'s, i.e.,
we have the orthogonal decomposition
\begin{equation}\label{eq:sigma-1}
- \frac12 F_1^T C B^{-1}C^T F_1=F_2 \diag(\sigma_j) F_2^T.
\end{equation}
We collect some useful facts and observations in the following lemma.
\begin{lem}\label{lem:sigma-Theta}
We have the following
\begin{enumerate}\itemsep0.2em
\item[(a)] $\sigma_j\in(-1,0)$ almost surely for all $j=1,\ldots,|\Lambda_0|$.
\item[(b)] $\displaystyle \left(\idty_{\Lambda_0}- \frac12 A^{-1/2}C B^{-1}C^T A^{-1/2}\right)^{-1}=F_2 \diag(1-\sigma_j)F_2^T \label{eq:1-sigma}.$
\item[(c)] $\displaystyle \left(\idty_{\Lambda_0}- A^{-1/2}C B^{-1}C^T A^{-1/2}\right)= F_2 \diag\left(\frac{1+\sigma_j}{1-\sigma_j}\right)F_2^T.$
\end{enumerate}
\end{lem}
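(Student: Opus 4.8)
The plan is to reduce all three claims to the spectral calculus of the single symmetric matrix
$M := A^{-1/2} C B^{-1} C^T A^{-1/2}$. First I would record its two basic spectral properties. Since $B>0$ almost surely we have $C B^{-1} C^T \geq 0$, so $M$ is symmetric and positive semidefinite; in fact it is positive definite almost surely, which is precisely the negative-definiteness of $-\frac12 F_1^T C B^{-1} C^T F_1$ asserted just before the lemma. For the upper bound, conjugating the Schur-complement positivity $A - C B^{-1} C^T > 0$ by $A^{-1/2}$ yields $\idty_{\Lambda_0} - M > 0$, i.e. $M < \idty_{\Lambda_0}$. Hence the eigenvalues $\mu_j$ of $M$ lie in $(0,1)$ almost surely.

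Next I would simplify the operator that $F_2$ diagonalizes. Writing $F_1 = A^{-1/2}(\idty_{\Lambda_0} - \tfrac12 M)^{-1/2}$ from \eqref{def:F1-Theta}, and using that $A^{-1/2}$ and $(\idty_{\Lambda_0}-\tfrac12 M)^{-1/2}$ are symmetric, a direct computation gives $F_1^T C B^{-1} C^T F_1 = (\idty_{\Lambda_0}-\tfrac12 M)^{-1/2}\, M\, (\idty_{\Lambda_0}-\tfrac12 M)^{-1/2} = M(\idty_{\Lambda_0}-\tfrac12 M)^{-1}$, the last equality because $M$ commutes with every function of $M$. Thus $-\frac12 F_1^T C B^{-1} C^T F_1 = -\frac12 M(\idty_{\Lambda_0}-\tfrac12 M)^{-1}$ is itself a function of $M$, so I may take $F_2$ to be the orthogonal matrix diagonalizing $M$, say $M = F_2\,\diag(\mu_j)\,F_2^T$. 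Comparing with \eqref{eq:sigma-1} then identifies $\sigma_j = -\frac{\mu_j/2}{1-\mu_j/2} = -\frac{\mu_j}{2-\mu_j}$.

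With this scalar relation in hand, all three parts become elementary. For (a), the map $\mu \mapsto -\mu/(2-\mu)$ is continuous and strictly decreasing on $[0,1)$, sending $0$ to $0$ and $\mu\to 1^-$ to $-1$; since $\mu_j\in(0,1)$ this places $\sigma_j\in(-1,0)$. For (b) and (c) I would apply the scalar identities $1-\sigma_j = \frac{2}{2-\mu_j} = (1-\tfrac12\mu_j)^{-1}$ and $\frac{1+\sigma_j}{1-\sigma_j} = 1-\mu_j$ spectrally: since $(\idty_{\Lambda_0}-\tfrac12 M)^{-1} = F_2\,\diag((1-\tfrac12\mu_j)^{-1})\,F_2^T$ and $\idty_{\Lambda_0} - M = F_2\,\diag(1-\mu_j)\,F_2^T$, the stated diagonalizations follow at once from the fact that $M$, $(\idty_{\Lambda_0}-\tfrac12 M)^{-1}$ and $\idty_{\Lambda_0}-M$ share the eigenbasis $F_2$.

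The only genuinely delicate point is the reduction in the second paragraph, namely recognizing that conjugating $C B^{-1} C^T$ by $F_1$ collapses to a function of $M$, so that $F_2$ simultaneously diagonalizes $M$, $(\idty_{\Lambda_0}-\tfrac12 M)^{-1}$ and $\idty_{\Lambda_0}-M$. Once this simultaneous diagonalization is secured, (a)--(c) are purely scalar computations. I note where the hypotheses enter: the strict bound $\sigma_j<0$ (equivalently $\mu_j>0$) uses the almost-sure nondegeneracy, while $\sigma_j>-1$ (equivalently $\mu_j<1$) is exactly the Schur-complement positivity $A - C B^{-1} C^T > 0$ established earlier.
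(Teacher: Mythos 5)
Your proof is correct and follows essentially the same route as the paper's: both arguments reduce everything to the spectral calculus of $\Theta = A^{-1/2}CB^{-1}C^TA^{-1/2}$, establish $0<\Theta<\idty_{\Lambda_0}$ from $CB^{-1}C^T> 0$ and the Schur-complement positivity, and obtain (a)--(c) from the identity $-\tfrac12 F_1^TCB^{-1}C^TF_1 = -\tfrac12\Theta\,(\idty_{\Lambda_0}-\tfrac12\Theta)^{-1} = \idty_{\Lambda_0}-(\idty_{\Lambda_0}-\tfrac12\Theta)^{-1}$. The only cosmetic difference is that you pass to the eigenbasis of $\Theta$ and check scalar identities (beware that your $\mu_j$, the eigenvalues of $\Theta$, clash with the paper's later use of $\mu_j$ in (\ref{def:mu})), whereas the paper stays at the operator level throughout.
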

\begin{proof}
It is direct to see from (\ref{eq:sigma-1}) that $\sigma_j<0$ for all $j$'s. The following argument shows that $\sigma_j>-1$. 
To ease the presentation, we set $\Theta:=A^{-1/2}C B^{-1}C^T A^{-1/2}$.
First, observe that
\begin{equation}\label{0-Theta-1}
0<\Theta<\idty_{\Lambda_0}\quad \mathbb{P}\text{-almost surely} 
\end{equation}
That  $\Theta>0$ follows from the fact that $h_\Lambda^{1/2}>0$. Moreover, the Schur complement  $h^{1/2}_{\Lambda}/B=A-CB^{-1}C^T$ is positive $\mathbb{P}$-almost surely, and this gives
\begin{equation}
A^{1/2}(\idty_{\Lambda_0}-\Theta)A^{1/2}>0 \implies \Theta<\idty_{\Lambda_0}.
\end{equation} 
Then we observe that
\begin{eqnarray}\label{eq:sigma-Theta}
F_2 \diag(\sigma_j) F_2^T=- \frac12 F_1^T C B^{-1}C^T F_1 &=& \left(\idty_{\Lambda_0}-\frac12\Theta \right)^{-1/2}\left(-\frac12\Theta\right)\left(\idty_{\Lambda_0}-\frac12\Theta \right)^{-1/2} \notag\\
&=& 
\idty_{\Lambda_0}-\left(\idty_{\Lambda_0}-\frac12\Theta\right)^{-1}
\end{eqnarray}
which shows with (\ref{0-Theta-1}) statement (a). Moreover, (\ref{eq:sigma-Theta}) proves (b) noting that $F_2$ is an orthogonal matrix. 

To show (c), we start with statement (b), and see that
\begin{equation}
\idty_{\Lambda_0}- \frac12 A^{-1/2}C B^{-1}C^T A^{-1/2}=F_2 \diag(1-\sigma_j)^{-1} F_2^T.
\end{equation}
Multiply both sides by 2 then subtract $\idty_{\Lambda_0}=F_2F_2^T$  to get the desired formula in (c).
\end{proof}

\begin{lem}\label{lem:hat-rho-0-1}
The change of variables implemented by the unitary operator $\mathscr{O}$ defined on $\mathscr{L}^2(\mathbb{R}^{\Lambda_0})$ as $\mathscr{O}f(x)=|\det(F)|^{1/2} f(Fx)$, where $F=F_1F_2$ ($F_1$ is given in (\ref{def:F1-Theta}), and $F_2$ is the orthogonal matrix defined by (\ref{eq:sigma-1})) maps the integral operators $(\varrho_0)_{\Lambda_0}$ to the integral operator with  kernel
\begin{equation}\label{eq:hat-rho-0-1}
\widehat{(\varrho_0)}_{\Lambda_0}(x,y):=\left(\mathscr{O}(\varrho_0)_{\Lambda_0} \mathscr{O}^*\right)(x,y)=\prod_{j=1}^{|\Lambda_0|} \frac{1}{\Tr[T_{\sigma_j}]}T_{\sigma_j}(x_j, y_j)
\end{equation}
where for every $j$, $T_{\sigma_j}$ is the (trace class) integral operator on $\mathscr{L}^2(\mathbb{R})$ with the (gaussian)   kernel
\begin{equation}\label{def:T-sigma}
T_{\sigma_j}(x, y)=e^{-\frac12 \left(x^2+y^2+2\sigma_j x_j y\right)}.
\end{equation}
\end{lem}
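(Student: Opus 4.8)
The plan is to split the statement into two independent verifications: a \emph{functional} identity showing that the transformed kernel factorizes as a product of one-dimensional Gaussians, and a \emph{scalar} identity matching the overall normalization constant against $\prod_j \Tr[T_{\sigma_j}]^{-1}$.

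First I would record the general rule for how the kernel of an integral operator transforms under conjugation by $\mathscr{O}$. Using the definitions in (\ref{def:O}) and changing variables in the defining integral, one finds that for any integral operator $K$ with kernel $K(\cdot,\cdot)$,
\begin{equation*}
(\mathscr{O} K \mathscr{O}^*)(x,y) = |\det F|\, K(Fx, Fy).
\end{equation*}
Applying this with $F = F_1 F_2$ reduces everything to analyzing $\mathcal{G}(Fx, Fy)$ together with the prefactor $|\det F|\,(\det(h_\Lambda^{1/2}/B)/\pi^{|\Lambda_0|})^{1/2}$.

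Next I would diagonalize the quadratic form. Writing $Fx = F_1(F_2 x)$ and inserting into (\ref{eq:G-F1}) evaluated at the arguments $F_2 x, F_2 y$, the orthogonality of $F_2$ turns the two diagonal blocks $\idty_{\Lambda_0}$ into $|x|^2$ and $|y|^2$, while the off-diagonal block $-\frac12 F_1^T C B^{-1}C^T F_1$ becomes, via (\ref{eq:sigma-1}), the diagonal form with entries $\sigma_j$. Hence $\mathcal{G}(Fx,Fy) = \sum_{j=1}^{|\Lambda_0|}(x_j^2 + y_j^2 + 2\sigma_j x_j y_j)$, and the exponential factorizes as $e^{-\frac12 \mathcal{G}(Fx,Fy)} = \prod_j T_{\sigma_j}(x_j,y_j)$ by the definition (\ref{def:T-sigma}).

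Finally I would match the constants, which I expect to be the only delicate part. Computing the one-dimensional Gaussian integral gives $\Tr[T_{\sigma_j}] = (\pi/(1+\sigma_j))^{1/2}$, so $\prod_j \Tr[T_{\sigma_j}]^{-1} = \pi^{-|\Lambda_0|/2}(\prod_j(1+\sigma_j))^{1/2}$, where $\sigma_j \in (-1,0)$ by Lemma~\ref{lem:sigma-Theta}(a) guarantees convergence. On the other side, $|\det F| = |\det F_1|$ since $F_2$ is orthogonal, and $\det F_1 = (\det A)^{-1/2}\det(\idty_{\Lambda_0} - \frac12\Theta)^{-1/2}$ with $\Theta = A^{-1/2}C B^{-1}C^T A^{-1/2}$; Lemma~\ref{lem:sigma-Theta}(b) yields $\det(\idty_{\Lambda_0}-\frac12\Theta) = \prod_j(1-\sigma_j)^{-1}$. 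Writing $h_\Lambda^{1/2}/B = A^{1/2}(\idty_{\Lambda_0}-\Theta)A^{1/2}$ and invoking Lemma~\ref{lem:sigma-Theta}(c) gives $\det(h_\Lambda^{1/2}/B) = \det A \prod_j \frac{1+\sigma_j}{1-\sigma_j}$. Multiplying these factors together, the powers of $\det A$ and of $(1-\sigma_j)$ cancel, leaving exactly $\pi^{-|\Lambda_0|/2}(\prod_j(1+\sigma_j))^{1/2}$, which matches. The main obstacle is precisely this determinant bookkeeping: the exponent diagonalization is immediate from the earlier identities, but confirming the normalization requires carefully chaining the Schur-complement factorization of $\det(h_\Lambda^{1/2}/B)$ with parts (b) and (c) of Lemma~\ref{lem:sigma-Theta} so that every spurious factor cancels cleanly.
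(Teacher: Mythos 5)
Your proposal is correct and follows essentially the same route as the paper's proof: the same kernel-transformation rule $(\mathscr{O}K\mathscr{O}^*)(x,y)=|\det F|\,K(Fx,Fy)$, the same diagonalization of $\mathcal{G}(Fx,Fy)$ via (\ref{eq:G-F1}) and (\ref{eq:sigma-1}), and the same determinant bookkeeping through parts (b) and (c) of Lemma \ref{lem:sigma-Theta}, with the cancellation of $\det A$ and the $(1-\sigma_j)$ factors leaving exactly $\prod_j((1+\sigma_j)/\pi)^{1/2}=\prod_j\Tr[T_{\sigma_j}]^{-1}$. No gaps.
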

We remark here that 
\begin{equation}
\Tr[T_{\sigma_j}]=\sum_{\text{ONB }\phi}\left\langle\phi,T_{\sigma_j}\phi\right\rangle_{\mathscr{L}^2(\mathbb{R})}=\int_\mathbb{R}T_{\sigma_j}(x, x)dx=\left(\frac{\pi}{1+\sigma_j}\right)^{1/2}.
\end{equation}

\begin{proof}
 The unitary $\mathscr{O}$ in (\ref{def:O}) maps the integral operator $(\varrho_0)_{\Lambda_0}$ on $\mathscr{L}^2(\mathbb{R}^{\Lambda_0})$ to another integral operator $\widehat{(\varrho_0)}_{\Lambda_0}$. In particular,
 \begin{eqnarray}
 (\mathscr{O}(\varrho_0)_{\Lambda_0} \mathscr{O}^* f)(x)&=&|\det(F)|^{-1/2}\mathscr{O}\int_{\mathbb{R}^{|\Lambda_0|}}(\varrho_0)_{\Lambda_0}(x,y) f(F^{-1}y)dy \notag\\
 &=&|\det(F)| \int_{\mathbb{R}^{|\Lambda_0|}}(\varrho_0)_{\Lambda_0}(F x,Fy) f(y)dy.
 \end{eqnarray}
Thus, recalling (\ref{eq:red-rho-0}), we obtain
\begin{eqnarray}\label{eq:hat-rho-0}
\widehat{(\varrho_0)}_{\Lambda_0}(x,y)&=&|\det(F)|(\varrho_0)_{\Lambda_0}(Fx,Fy) \notag\\
&=&|\det(F)|\left(\frac{\det\big(h_\Lambda^{1/2}/B\big)}{\pi^{|\Lambda_0|}}\right)^{1/2} e^{-\frac{1}{2}\mathcal{G}(Fx,Fy)}
\end{eqnarray}
A direct calculation using (\ref{eq:G-F1}) and (\ref{eq:sigma-1}) gives
\begin{equation}\label{eq:G-diag}
\mathcal{G}\left(F x,F y\right)=\begin{bmatrix}
x^T & y^T
\end{bmatrix}
\begin{bmatrix}
\idty_{\Lambda_0} & \diag(\sigma_j)\\
\diag(\sigma_j) & \idty_{\Lambda_0}
\end{bmatrix}
\begin{bmatrix}
x\\
 y
\end{bmatrix} 
=\sum_{j=1}^{|\Lambda_0|} \left(x_j^2+y_j^2+2\sigma_j x_j y_j\right).
\end{equation}
Moreover, (\ref{eq:sigma-1}) allows to write $|\det(F)|$ and $\det(h_\Lambda^{1/2}/B)$ in terms of $\sigma_j$'s as
\begin{eqnarray}\label{eq:det-F}
|\det(F)|=|\det(F_1)|&=&\det\left(A-\frac12 C B^{-1}C^T\right)^{-1/2} \nonumber \\
&=&\det(A)^{-1/2} \det\left(\idty_{\Lambda_0}-\frac12A^{-1/2}C B^{-1}C^T A^{-1/2}\right)^{-1/2} \notag \\
&=&\det(A)^{-1/2}\prod_{j=1}^{|\Lambda_0|} \sqrt{1-\sigma_j}
\end{eqnarray}
where we used statement (b) of Lemma \ref{lem:sigma-Theta}. Moreover, statement (c) of Lemma \ref{lem:sigma-Theta} shows that
\begin{equation}\label{eq:det-2} 
\det(h_\Lambda^{1/2}/B)=\det\left(A-CB^{-1}C^T\right)=\det(A)\prod_{j=1}^{|\Lambda_0|}\left(\frac{1+\sigma_j}{1-\sigma_j}\right).
\end{equation}
Substitute (\ref{eq:G-diag}), (\ref{eq:det-F}), and (\ref{eq:det-2}) in (\ref{eq:hat-rho-0}) to obtain formula (\ref{eq:hat-rho-0-1}) for $\widehat{(\varrho_0)}_{\Lambda_0}(x, y)$. 
\end{proof}

\subsection{An explicit formula for the $\epsilon$-R\'enyi entanglement entropy of the ground state}
The change of variable (\ref{def:O}) maps the reduced state to the product form (\ref{eq:hat-rho-0-1}). For such a product, a complete system of eigenvectors, as well as the corresponding eigenvalues, can be determined explicitly. In particular, for $\sigma\in(-1,1)$,
Theorem \ref{thm:gaussian-eig} shows that
\begin{equation}\label{def:HG}
T_\sigma \psi_{m}^{(\kappa)}(x)= \uplambda_{m}(T_\sigma)\psi_{m}^{(\kappa)}(x)\ \text{where }\ \uplambda_{m}(T_\sigma)=\left(\frac{2\pi}{1+\kappa}\right)^{1/2}\left(\frac{-\sigma}{1+\kappa}\right)^m,\ m\in\mathbb{N}_0
\end{equation}
where $\{\psi_{m}^{(\kappa)}\}_{m\in\mathbb{N}_0}$ are the
 (normalized) Hermite-Gaussian functions (\ref{def:psi}) with $\kappa=\sqrt{1-\sigma^2}$.
 This gives directly all the eigenpairs of $\widehat{(\varrho_0)}_{\Lambda_0}$ in (\ref{eq:hat-rho-0-1})
\begin{equation}\label{eq:eig:rho-0}
\widehat{(\varrho_0)}_{\Lambda_0}\Psi_n^{(\kappa)}(x)=\uplambda_n\left(\widehat{(\varrho_0)}_{\Lambda_0}\right)
\Psi_n^{(\kappa)}(x)
\end{equation}
for $n=(n_1,\ldots, n_{|\Lambda_0|})\in\mathbb{N}_0^{|\Lambda_0|}$, $x=(x_1,\ldots, x_{|\Lambda_0|})\in\mathbb{R}^{|\Lambda_0|}$, and $\kappa=(\kappa_1,\ldots,\kappa_{|\Lambda_0|})\in (0,1)^{|\Lambda_0|}$ with $\kappa_j=\sqrt{1-\sigma_j^2}$ and $\sigma_j$ are as in (\ref{eq:sigma-1}).
\begin{equation}\label{eq:eig-reduced-0}
\uplambda_n\left(\widehat{(\varrho_0)}_{\Lambda_0}\right)=\prod_{j=1}^{|\Lambda_0|} \left(\left(\frac{1+\sigma_j}{\pi}\right)^{1/2} \uplambda_{n_j}(T_{\sigma_j})\right) \text{ and }\ \Psi_n^{(\kappa)}(x):=\prod_{j=1}^{|\Lambda_0|}\psi_{n_j}^{(\kappa_j)}(x_j).
\end{equation}
It will be more convenient to change variables $\sigma_j\mapsto\mu_j$, where $\mu_j$'s are the eigenvalues of the (positive $\mathbb{P}$-almost surely) operator $(\idty_{\Lambda_0}-A^{-1/2} C B^{-1}C^T A^{-1/2})^{-1/2}$. In particular, 
\begin{equation}\label{def:mu}
\mu^2_j=\frac{1-\sigma_j}{1+\sigma_j}\ \  \text{ and } \ \ F_2\diag(\mu_j^2)F_2^T=(\idty_{\Lambda_0}-A^{-1/2} C B^{-1}C^T A^{-1/2})^{-1}=A^{1/2}(h_\Lambda^{1/2}/B)^{-1}A^{1/2},
\end{equation}
and 
\begin{equation}\label{eq:mu>1}
\mu_j>1\quad \mathbb{P}\text{-almost surely, for all }j=1,\ldots, |\Lambda_0|.
\end{equation}
see statements (a) and (c) of Lemma \ref{lem:sigma-Theta}.

It is worth mentioning here that $\mu_j$'s are the symplectic eigenvalues of the covariance matrix associated with the reduced ground state, see \cite{BSW19}.
\begin{equation}\label{eq:cov}
\Gamma_{\Lambda_0}=\begin{bmatrix}
(h_\Lambda/B)^{-1} & 0\\ 0 & A
\end{bmatrix}.
\end{equation}
This can be seen directly from the fact that the symplectic eigenvalues of $\Gamma_{\Lambda_0}$ are the positive eigenvalues of $i\Gamma_{\Lambda_0} J \Gamma_{\Lambda_0}$, where 
$
J=\begin{bmatrix}0 & -\idty_{\Lambda_0}\\ \idty_{\Lambda_0} & 0\end{bmatrix}
$.

It is direct to check that
\begin{equation}\label{eq:sigma-to-mu}
\sigma_j= \frac{1-\mu_j^2}{1+\mu_j^2},\quad
\kappa_j= \frac{2\mu_j}{1+\mu_j^2}\text{ and }
\uplambda_{n_j}(T_{\sigma_j})=\frac{\sqrt{2\pi(1+\mu_j^2)}}{1+\mu_j}\left(\frac{\mu_j-1}{\mu_j+1}\right)^{n_j}, \quad 0< \frac{\mu_j-1}{\mu_j+1}<1
\end{equation}

In terms of $\mu_j$'s, the eigenvalues (\ref{eq:eig-reduced-0}) of $(\varrho_0)_{\Lambda_0}$ are 
\begin{equation}\label{eq:rho0-eig-mu}
\uplambda_n\left(\widehat{(\varrho_0)}_{\Lambda_0}\right)= \prod_{j=1}^{|\Lambda_0|}\frac{2}{1+\mu_j}\left(\frac{\mu_j-1}{\mu_j+1}\right)^{n_j}, \ n\in\mathbb{N}_0^{|\Lambda_0|}.
\end{equation}
Hence, the $\epsilon$-R\'enyi entanglement entropy (\ref{def:Renyi}) of the ground state $\varrho_0$ reads as
\begin{eqnarray}\label{eq:Renyi-rho0-1}
\mathcal{E}_{\epsilon}(\varrho_0)&=&\frac{1}{1-\epsilon}\log\sum_{n\in\mathbb{N}_0^{|\Lambda_0|}}\uplambda_n^\epsilon\big((\varrho_0)_{\Lambda_0}\big) \notag\\
&=&\frac{1}{1-\epsilon}\sum_{j=1}^{|\Lambda_0|}\log\sum_{n_j=0}^\infty \frac{2^\epsilon}{(1+\mu_j)^\epsilon}\left(\frac{\mu_j-1}{\mu_j+1}\right)^{\epsilon n_j} \notag\\
&=&\frac{1}{1-\epsilon}\sum_{j=1}^{|\Lambda_0|}\log \left(\left(\frac{\mu_j+1}{2}\right)^\epsilon-\left(\frac{\mu_j-1}{2}\right)^\epsilon\right)^{-1}.
\end{eqnarray}
This shows the following theorem.
\begin{thm}\label{thm:GS-Renyi-Formula}
The $\epsilon$-R\'enyi entanglement entropy of the ground state $\varrho_0$ of the oscillator systems with hamiltonian $H_\Lambda$,  with respect to the bipartition (\ref{hilbert_decomp}),
is given by the formula
\begin{equation}\label{formula:ent-0}
\mathcal{E}_{\epsilon}(\varrho_0)=\frac{1}{1-\epsilon}\sum_{j=1}^{|\Lambda_0|}\log f_\epsilon(\mu_j) \text{\ where\ }f_\epsilon(x):=\left(\left(\frac{x+1}{2}\right)^\epsilon-\left(\frac{x-1}{2}\right)^\epsilon\right)^{-1},
\end{equation}
for all $\epsilon\in(0,1)$. Here the numbers $\mu_j$ are the eigenvalues of $(A^{1/2}(h_\Lambda^{1/2}/B)^{-1}A^{1/2})^{1/2}$, and they are also the symplectic eigenvalues of the covariance matrix $\Gamma_{\Lambda_0}$ in (\ref{eq:cov}).
\end{thm}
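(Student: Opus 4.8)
The plan is to prove Theorem~\ref{thm:GS-Renyi-Formula} by consolidating the explicit diagonalization carried out in Lemmas~\ref{lem:hat-rho-0-1} and~\ref{lem:sigma-Theta} with the single-mode spectral formula recorded in~(\ref{def:HG}) (from Theorem~\ref{thm:gaussian-eig}). The first observation is that the $\epsilon$-R\'enyi entropy sees $(\varrho_0)_{\Lambda_0}$ only through its spectrum, since $\mathcal{E}_\epsilon(\varrho_0)=\frac{1}{1-\epsilon}\log\Tr[((\varrho_0)_{\Lambda_0})^\epsilon]$. Because $\mathscr{O}$ is unitary, conjugation preserves the trace of every power of a positive trace-class operator, so $\Tr[((\varrho_0)_{\Lambda_0})^\epsilon]=\Tr[(\widehat{(\varrho_0)}_{\Lambda_0})^\epsilon]$, and I may work entirely with the transformed operator $\widehat{(\varrho_0)}_{\Lambda_0}$ of~(\ref{eq:hat-rho-0-1}).

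Next I would exploit the product structure. Lemma~\ref{lem:hat-rho-0-1} writes $\widehat{(\varrho_0)}_{\Lambda_0}$ as a product over modes of the single-mode operators $T_{\sigma_j}/\Tr[T_{\sigma_j}]$, so its spectrum is exactly the set of products indexed by $n\in\mathbb{N}_0^{|\Lambda_0|}$. Formula~(\ref{def:HG}) supplies the eigenvalues of each factor as a geometric sequence, which together with the normalization yields the explicit eigenvalues~(\ref{eq:eig-reduced-0}). I would then change variables from $\sigma_j$ to the symplectic variables $\mu_j$ via~(\ref{def:mu}) and~(\ref{eq:sigma-to-mu}); here Lemma~\ref{lem:sigma-Theta}(a),(c) guarantees $\mu_j>1$, which is precisely what places the ratio $(\mu_j-1)/(\mu_j+1)$ strictly between $0$ and $1$. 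This recasts the eigenvalues in the clean form~(\ref{eq:rho0-eig-mu}).

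The remaining step is the summation. Since $\Tr[((\varrho_0)_{\Lambda_0})^\epsilon]=\sum_{n\in\mathbb{N}_0^{|\Lambda_0|}}\uplambda_n((\varrho_0)_{\Lambda_0})^\epsilon$ factorizes over the $|\Lambda_0|$ modes, and each factor is a convergent geometric series in $((\mu_j-1)/(\mu_j+1))^\epsilon$ (convergence ensured by $\mu_j>1$ for every $\epsilon>0$), I would sum each series in closed form, take the logarithm, and divide by $1-\epsilon$. This reproduces~(\ref{eq:Renyi-rho0-1}) and hence the stated formula with $f_\epsilon$. To justify the last sentence of the theorem, I would identify the $\mu_j$ as the symplectic eigenvalues of $\Gamma_{\Lambda_0}$ in~(\ref{eq:cov}): by~(\ref{def:mu}) the $\mu_j^2$ are the eigenvalues of $A^{1/2}(h_\Lambda^{1/2}/B)^{-1}A^{1/2}$, and computing the positive eigenvalues of $i\Gamma_{\Lambda_0}J\Gamma_{\Lambda_0}$ recovers the same numbers.

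Since the heavy lifting is already contained in the cited lemmas, I do not expect a genuine obstacle; the only points demanding care are the unitary invariance of $\Tr[(\cdot)^\epsilon]$ for a positive trace-class operator and the factorization/convergence of the multi-index sum over the infinite spectrum, both of which hold because the single-mode ratios are strictly less than $1$.
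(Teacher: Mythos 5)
Your proposal is correct and follows essentially the same route as the paper: it invokes the product decomposition of $\widehat{(\varrho_0)}_{\Lambda_0}$ from Lemma~\ref{lem:hat-rho-0-1}, reads off the spectrum mode by mode from Theorem~\ref{thm:gaussian-eig}, passes to the variables $\mu_j$ via (\ref{def:mu}) and (\ref{eq:sigma-to-mu}), and sums the resulting geometric series exactly as in (\ref{eq:Renyi-rho0-1}). The points you flag for care (unitary invariance of $\Tr[(\cdot)^\epsilon]$ and convergence of the multi-index sum, guaranteed by $\mu_j>1$) are precisely the implicit justifications underlying the paper's computation.
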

We remark here that formula (\ref{formula:ent-0}) can be seen directly from the algebraic approach  \cite[Appendix A.2]{BSW19}. The novelty here is in the alternative proof using an analytic approach.

Below are two known consequences of Theorem \ref{thm:GS-Renyi-Formula}.
\begin{itemize}
\item[(a)] A direct calculation for the limit of $\mathcal{E}_{\epsilon}(\varrho_0)$ as $\epsilon\rightarrow 1$, using L'Hospital rule,  gives the well known formula for the entanglement entropy of the ground state, see e.g., \cite{Eisert10, BSW19}
\begin{equation}
\mathcal{E}_1(\varrho_0)=\lim_{\epsilon\rightarrow 1}\mathcal{E}_\epsilon(\varrho_0)=\sum_{j=1}^{|\Lambda_0|} \left(\frac{\mu_j+1}{2}\log\frac{\mu_j+1}{2}- \frac{\mu_j-1}{2}\log\frac{\mu_j-1}{2}\right).
\end{equation}
\item[(b)] The logarithmic negativity of $\varrho_0$ is equal to the $1/2$-R\'enyi entanglement entropy, that is
\begin{equation}
\mathcal{N}(\varrho_0)=\mathcal{E}_{1/2}(\varrho_0)=\sum_{j=1}^{|\Lambda_0|}\log f_{1/2}(\mu_j)=\sum_{j=1}^{|\Lambda_0|}\log \left(\frac{\sqrt{\mu_j+1}+\sqrt{\mu_j-1}}{2}\right).
\end{equation}
One readily checks that this formula, given in term of $\mu_j$'s, coincides with the expression found in \cite[Theorem 3.4]{NSS13}.

\end{itemize}
%%%%%%%%%%%%%%%%%%%%%%%%%%%%%%%%%%%%%%%%%%%%%%%%%%%%%%%%%%%
\subsection{Entanglement bounds for the ground state} \label{sec:gs_bds} 
%%%%%%%%%%%%%%%%%%%%%%%%%%%%%%%%%%%%%%%%%%%%%%%%%%%%%%%%%%%
Since $\mathcal{E}_\epsilon(\varrho_0)$ is decreasing in $\epsilon$ on the interval $(0,1)$, we use $\mathcal{E}_{1/2}(\varrho_0)=\mathcal{N}(\varrho_0)$, the logarithmic negativity, as an upper bound for  $\{\mathcal{E}_\epsilon(\varrho_0);\ \epsilon \in [1/2,1]\}$.

We start by bounding $f_{1/2}(x)$ as 
\begin{equation}
 f_{1/2}(x)\leq \sqrt{x^2-1}+1 \text{ for all }x>1.
\end{equation}

This can be seen from the following argument (for $x>1$).
\begin{equation}
f_{1/2}(x)=\frac{\sqrt{x+1}+\sqrt{x-1}}{\sqrt2}=\left(x+\sqrt{x^2-1}\right)^{1/2}\leq \left(x^2+2\sqrt{x^2-1}\right)^{1/2}=\sqrt{x^2-1}+1.
\end{equation}
Use this in  (\ref{eq:Renyi-rho0-1}), and use $\log(x+1)\leq \frac{1}{p}x^p$ for any $x\geq 0$ and $p\in(0,1]$, to bound $\mathcal{E}_\epsilon(\varrho_0)$ as follows. 
 \begin{equation}
 \mathcal{E}_\epsilon(\varrho_0)\leq \mathcal{E}_{1/2}(\varrho_0)\leq  \frac{1}{p}\sum_{j=1}^{|\Lambda_0|} (\mu_j^2-1)^{p/2}, \text{ for all }\epsilon\in[1/2,1]. 
 \end{equation}
Recall from (\ref{def:mu}) that 
\begin{equation}
\mu_j^2=\uplambda_j\left(A^{1/2}(h_\Lambda^{1/2}/ B)^{-1}A^{1/2}\right)=\uplambda_j\left((h_\Lambda^{1/2}/ B)^{-1}A\right).
\end{equation}
Hence
 \begin{eqnarray}\label{eq:Renyi-rho0-2}
 \sum_{j=1}^{|\Lambda_0|} (\mu_j^2-1)^{p/2}
 &=& \Tr\left[\left(A^{1/2}(h_\Lambda^{1/2}/ B)^{-1}A^{1/2}-\idty_{\Lambda_0}\right)^{p/2} \right]\notag\\
 &=& \sum_{k=1}^{|\Lambda_0|} \uplambda_j^{p/2}\left(-\idty_{\Lambda_0} h_\Lambda^{-1/2}\idty_{\Lambda_0^c} h_\Lambda^{1/2}\idty_{\Lambda_0}\right).
 \end{eqnarray}
 This can be seen from the following observation. Since $[h_\Lambda^{-1/2}]_{1,1}=(h_\Lambda^{1/2}/ B)^{-1}$ is the $|\Lambda_0|$-th principle minor of $h_\Lambda^{-1/2}$, and
 \begin{equation}
 h_\Lambda^{-1/2}h_\Lambda^{1/2}=\begin{bmatrix}
 (h_\Lambda^{1/2}/ B)^{-1} & [h_\Lambda^{-1/2}]_{1,2}\\
  [h_\Lambda^{-1/2}]_{2,1} &  [h_\Lambda^{-1/2}]_{2,2}\\
 \end{bmatrix}
 \begin{bmatrix}
A & [h_\Lambda^{1/2}]_{1,2}=C\\
  [h_\Lambda^{1/2}]_{2,1}=C^T &  [h_\Lambda^{1/2}]_{2,2}=B\\
 \end{bmatrix}=\idty_{\Lambda}
 \end{equation}
 then, it follows that
 \begin{equation}
 (h_\Lambda^{1/2}/ B)^{-1}A+[h_\Lambda^{-1/2}]_{1,2}[h_\Lambda^{1/2}]_{2,1}=\idty_{\Lambda_0}\implies (h_\Lambda^{1/2}/ B)^{-1}A-\idty_{\Lambda_0}=-[h_\Lambda^{-1/2}]_{1,2}[h_\Lambda^{1/2}]_{2,1}.
 \end{equation}
 Then, since $g(x)=x^{s}$ is monotone increasing on $[0,\infty)$, and $x\mapsto g(e^x)$ is convex, then by Weyl inequality, see, e.g., \cite[Theorem 1.15]{Trace-Simon}, we have,  following from (\ref{eq:Renyi-rho0-2})
 \begin{equation}
 \sum_{k=1}^{|\Lambda_0|} (\mu_k^2-1)^{p/2}\leq \left\| \idty_{\Lambda_0} h_\Lambda^{-1/2}\idty_{\Lambda_0^c} h_\Lambda^{1/2}\idty_{\Lambda_0}\right\|^{p/2}_{p/2}
 \end{equation} 
where $\|\cdot\|_q=\left(\Tr[|\cdot|^q]\right)^{1/q}$.  $\|\cdot\|_q$ is a matrix norm if and only if $q\geq 1$, the special value $q=1$ corresponds to the trace norm. For $q\in(0,1)$ it denotes the \emph{Schatten $q$-quasi-norm}.

 Moreover, we use the inequality $\|A_1A_2\|_{q}\leq \|A_1\|_q\|A_2\|$, which follows from the well known inequality $s_j(A_1 A_2)\leq s_j(A_1)\|A_2\|$ and $s_1(\cdot)\geq s_2(\cdot)\geq \cdots$  are the singular values, see e.g., \cite{Bhatia}, to further bound
\begin{equation}
\sum_{k=1}^{|\Lambda_0|} (\mu_k^2-1)^{p/2}\leq  \left\| h_\Lambda^{1/2}\right\|^{p/2}\ \left\|\idty_{\Lambda_0} h_\Lambda^{-1/2}\idty_{\Lambda_0^c}\right\|^{p/2}_{p/2}\leq D^{p/2}\left\|\idty_{\Lambda_0} h_\Lambda^{-1/2}\idty_{\Lambda_0^c}\right\|^{p/2}_{p/2}.
\end{equation}
Note, for the bound above, we have used Assumption~\ref{base_ass1} ii). 

 Since $p\in(0,1]$, then we use the fact that $\|\cdot\|^{p/2}_{p/2}$ can be bounded by the sum of the $p/2$-power of the absolute value of its elements in any basis, see, e.g., the proof of Lemma 2.1 in \cite{BSW19}, i.e.,
  \begin{equation} \label{gs_est_1}
\mathcal{E}_\epsilon(\varrho_0)\leq \frac{D^{p/2}}{p} \sum_{k\in\Lambda_0,\ 
j\in\Lambda_0^c}
 \left|\left\langle\delta_k, h_\Lambda^{-1/2}\delta_j\right\rangle\right|^{p/2}.
 \end{equation} 
 This establishes the bound (\ref{gs_bd_noav}) in Theorem \ref{thm:gs} under Assumption~\ref{base_ass1}.
 
Let us now further assume Assumption~\ref{sec_ass2}. In this case, we have (\ref{gs_est_1}), as above, and
we may average over the disorder. An application of (\ref{sing_cor_dec_2}) with $p=s$ demonstrates that
 \begin{equation} \label{av_gs_est}
 \mathbb{E}\left(\mathcal{E}_\epsilon(\varrho_0)\right)\leq 
  \frac{D^{s/2}C}{s} \sum_{k\in\Lambda_0, 
j\in\Lambda_0^c}
e^{-\frac12\eta|j-k|}\leq \frac{D^{s/2} C}{s}\left(\sum_{k\in\mathbb{Z}^d}e^{-\frac12\eta |k|}\right)^2 |\partial\Lambda_0|.
 \end{equation}
Here we used the fact that $\mathbb{E}(|\cdot|^{s/2})\leq (\mathbb{E}(|\cdot|^s))^{1/2}$.
 The last inequality follows from the following argument. For each $k\in\Lambda_0$ and $j\in\Lambda_0^c$ there exists at least one $\ell\in\partial\Lambda_0$ such that $|j-k|=|j-\ell|+|\ell-k|$, then
\begin{eqnarray}
 \sum_{k\in\Lambda_0,\ 
j\in\Lambda_0^c}
e^{-\frac12\eta|j-k|}&\leq& \sum_{\ell\in\partial\Lambda_0}
\sum_{\tiny\begin{array}{c}
k\in\Lambda_0,\ 
j\in\Lambda_0^c\\
|j-k|=|j-\ell|+|\ell-k|
\end{array}} e^{-\frac12\eta|j-\ell|}\ e^{-\frac12\eta|\ell-k|} \notag\\
&\leq& \left(\sum_{k\in\mathbb{Z}^d}e^{-\frac12\eta|k|}\right)^2\ |\partial\Lambda_0|.
\end{eqnarray}
Given (\ref{av_gs_est}), we have proven (\ref{gs_bd_av}) and established the area law claimed in Theorem \ref{thm:gs} for the ground state with $C$ as in (\ref{def:tilde-C}).
 
 %%%%%%%%%%%%%%%%%%%%%%%%%%%%%%%%%%%%%%%%%%%%
 \section{Entanglement of the energy eigenstates with one excitation} \label{sec:eigenstates}%%%
 %%%%%%%%%%%%%%%%%%%%%%%%%%%%%%%%%%%%%%%%%%%%
 In this section we show the area law in Theorem \ref{thm:es} for the eigenstates associated with exactly one excitation. We will follow up closely the methods in Section \ref{sec:ground-state}. The eigenstate corresponding to one excitation in the $k$-th position, $\alpha=\e_k\in\mathbb{R}^{|\Lambda|}$ where $\e_k(j)=\delta_{j,k}$ (here $\delta_{j,k}$ denotes the Kronecker delta function),  is the integral operator given by the  kernel in (\ref{eq:rho-alpha}) that simplifies to
 \begin{equation}
\varrho_{\e_k}(x, y)=2\gamma_k\left(\frac{\det(h_\Lambda^{1/2})}{\pi^{|\Lambda|}}\right)^{1/2}\   v_k^T x v_k^T y\ e^{-\frac{1}{2}(x^T h_\Lambda^{1/2}x+ y^T h_\Lambda^{1/2}y )},
 \end{equation}
 noting that $H_1(x)=2x$. For consistency, in the following we use the subscript $k$ exclusively to refer to the dependency on the $k$-th eigenvalue $\gamma_k$ and/or the $k$-th eigenvector $v_k$.
 \subsection{The reduced eigenstate}
 To find the reduced state $(\varrho_{\e_k})_{\Lambda_0}$, we use the decompositions (\ref{Dec:h-1-2}) of $h_\Lambda^{1/2}$ and (\ref{eq:dec-x-y-v}) of the spacial variables to see that
 \begin{eqnarray}\label{rho-alpha-u}
\varrho_{\e_k}\big((x_{\Lambda_0},u),(y_{\Lambda_0},u)\big)&=& 2\gamma_k\ \left(\frac{\det(h_\Lambda^{1/2})}{\pi^{|\Lambda|}}\right)^{1/2}\ e^{-\frac{1}{2}\left(x_{\Lambda_0}^T A x_{\Lambda_0} +y_{\Lambda_0}^T A y_{\Lambda_0}  \right)}
\left((v_k)_{\Lambda_0}^T x_{\Lambda_0}+ (v_k)_{\Lambda_0^c}^T u\right)\times \notag\\
&&\hspace{2cm}
\times 
\left((v_k)_{\Lambda_0}^T y_{\Lambda_0}+ (v_k)_{\Lambda_0^c}^T u\right)e^{-\frac{1}{2}\left(u^T(2B)u+2(x_{\Lambda_0}+y_{\Lambda_0})^T C u\right)}.
\end{eqnarray}
Then the reduced state given by the integral operator defined in (\ref{eq:reduced}).
\begin{lem}\label{lem:rho1}
 The reduced state $(\varrho_{\e_k})_{\Lambda_0}$ is the integral operator on $\mathscr{L}^2(\mathbb{R}^{|\Lambda_0|})$ with the kernel 
\begin{equation}\label{eq:rho1}
(\varrho_{\e_k})_{\Lambda_0}(x,y)=  (\varrho_0)_{\Lambda_0}(x,y) \times \mathcal{L}_k(x,y)
\end{equation}
where $(\varrho_0)_{\Lambda_0}$ is the reduced ground state given in (\ref{eq:red-rho-0}) and (\ref{def:G}). $\mathcal{L}_k(x,y)$ is given by the formula
\begin{equation}\label{def:L}
\mathcal{L}_k(x,y)= \frac{\gamma_k}{2} \begin{bmatrix}
x^T & y^T
\end{bmatrix}
\begin{bmatrix}
 L_k^- &  L_k^+\\
 L_k^+ &  L_k^-
\end{bmatrix}
\begin{bmatrix}
x \\
 y
\end{bmatrix}
+
\gamma_k (v_k)_{\Lambda_0^c}^TB^{-1}(v_k)_{\Lambda_0^c}.
\end{equation}
Here 
\begin{equation}\label{def:L-pm}
 L_k^\pm:=  \upnu_k  \upnu_k^T\pm (v_k)_{\Lambda_0}(v_k)_{\Lambda_0}^T+C B^{-1} (v_k)_{\Lambda_0^c} (v_k)_{\Lambda_0}^T- \left(C B^{-1} (v_k)_{\Lambda_0^c} (v_k)_{\Lambda_0}^T\right)^T,
\end{equation}
and we use the short
\begin{equation}\label{def:upnu}
 \upnu_k:=(v_k)_{\Lambda_0}-C B^{-1} (v_k)_{\Lambda_0^c}.
\end{equation}
\end{lem}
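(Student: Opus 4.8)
The plan is to follow the ground-state computation of Section~\ref{sec:ground-state} almost verbatim, the only new ingredient being that the Gaussian integral over $u$ now carries a degree-two polynomial prefactor produced by the two $H_1$ factors. First I would substitute (\ref{rho-alpha-u}) into (\ref{eq:reduced}). Abbreviating the two linear factors by
\[
\ell_x(u):=(v_k)_{\Lambda_0}^T x+(v_k)_{\Lambda_0^c}^T u,\qquad \ell_y(u):=(v_k)_{\Lambda_0}^T y+(v_k)_{\Lambda_0^c}^T u,
\]
the kernel $(\varrho_{\e_k})_{\Lambda_0}(x,y)$ carries the \emph{same} prefactor $\left(\det(h_\Lambda^{1/2})/\pi^{|\Lambda|}\right)^{1/2}$ and the same factor $e^{-\frac12(x^T A x+y^T A y)}$ as in the ground-state reduction, now multiplied by $2\gamma_k\int_{\mathbb{R}^{|\Lambda_0^c|}}\ell_x(u)\,\ell_y(u)\,e^{-\frac12\left(u^T(2B)u+2(x+y)^T C u\right)}\,du$. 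Expanding $\ell_x(u)\ell_y(u)$ as a quadratic polynomial in $u$ reduces the task to a Gaussian integral against a quadratic prefactor, to be evaluated via the moment identities of Theorem~\ref{thm:Int-Identities}.

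Next I would complete the square exactly as before. The exponent's quadratic form has minimizer $\bar u=-\frac12 B^{-1}C^T(x+y)$, and the associated centered Gaussian has covariance $(2B)^{-1}=\frac12 B^{-1}$. Hence, relative to the ground-state Gaussian integral, the polynomial prefactor contributes the factor
\[
\ell_x(\bar u)\,\ell_y(\bar u)+\frac12\,(v_k)_{\Lambda_0^c}^T B^{-1}(v_k)_{\Lambda_0^c},
\]
where the first term is the value of the prefactor at the mean and the second is the fluctuation contribution $\big\langle((v_k)_{\Lambda_0^c}^T\xi)^2\big\rangle$. This constant fluctuation term has no ground-state analogue and is precisely what produces the additive constant $\gamma_k(v_k)_{\Lambda_0^c}^T B^{-1}(v_k)_{\Lambda_0^c}$ in (\ref{def:L}).

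I would then assemble the determinant and exponential factors. Since these coincide with the ground-state computation, using $\det(h_\Lambda^{1/2})=\det(h_\Lambda^{1/2}/B)\det(B)$ from (\ref{Dec:h-1-2}) together with the identity $\mathcal{G}(x,y)=x^T A x+y^T A y-\frac12(x+y)^T CB^{-1}C^T(x+y)$, the prefactor and Gaussian collapse to exactly $(\varrho_0)_{\Lambda_0}(x,y)$ of (\ref{eq:red-rho-0}). This yields
\[
(\varrho_{\e_k})_{\Lambda_0}(x,y)=(\varrho_0)_{\Lambda_0}(x,y)\cdot 2\gamma_k\left(\ell_x(\bar u)\,\ell_y(\bar u)+\frac12(v_k)_{\Lambda_0^c}^T B^{-1}(v_k)_{\Lambda_0^c}\right),
\]
which is (\ref{eq:rho1}) with $\mathcal{L}_k(x,y)$ equal to the bracketed factor scaled by $2\gamma_k$.

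Finally I would identify this factor with the explicit form (\ref{def:L}). Writing $w:=(v_k)_{\Lambda_0}$ and $q:=CB^{-1}(v_k)_{\Lambda_0^c}$, so that $\upnu_k=w-q$ by (\ref{def:upnu}), substitution of $\bar u$ gives $\ell_x(\bar u)=w^T x-\frac12 q^T(x+y)$ and $\ell_y(\bar u)=w^T y-\frac12 q^T(x+y)$. Expanding the product and rewriting each scalar bilinear $(s^T x)(t^T y)$ as the quadratic form $x^T(st^T)y$, I would collect the pure-$x$, pure-$y$ and mixed contributions; comparing the symmetric part of the diagonal blocks with $L_k^-$ and the off-diagonal combination $L_k^++(L_k^+)^T$ with the mixed coefficient, one checks against (\ref{def:L-pm}) that $L_k^-=qq^T-2wq^T$ and $L_k^+=2ww^T-2wq^T+qq^T$, which is exactly what the expansion produces (the antisymmetric term $qw^T-wq^T$ appearing in (\ref{def:L-pm}) cancels against the corresponding term of $\upnu_k\upnu_k^T$ in both $L_k^\pm$). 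Together with $2\gamma_k\cdot\frac12(v_k)_{\Lambda_0^c}^T B^{-1}(v_k)_{\Lambda_0^c}=\gamma_k(v_k)_{\Lambda_0^c}^T B^{-1}(v_k)_{\Lambda_0^c}$ this produces (\ref{def:L}). I expect this last algebraic identification to be the main obstacle: the rank-one outer products $ww^T$, $wq^T$ and $qq^T$ are not all symmetric, so the symmetric/antisymmetric split of the diagonal blocks must be tracked carefully to match the $L_k^\pm$ of (\ref{def:L-pm}).
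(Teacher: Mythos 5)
Your proposal is correct and follows essentially the same route as the paper: substitute the kernel into the partial-trace formula (\ref{eq:reduced}), evaluate the Gaussian integral in $u$ with the quadratic prefactor $\ell_x(u)\ell_y(u)$, recombine the determinant and exponential factors into $(\varrho_0)_{\Lambda_0}$ via $\det(h_\Lambda^{1/2})=\det(h_\Lambda^{1/2}/B)\det(B)$ and the identity for $\mathcal{G}$, and match the remaining polynomial with $\mathcal{L}_k$. The only immaterial difference is that the paper expands the prefactor in powers of $(v_k)_{\Lambda_0^c}^T u$ and invokes its general moment formula (Lemma \ref{Int-Identities}), whereas you complete the square and use the mean-plus-fluctuation decomposition of the second moment; your final identification $L_k^-=qq^T-2wq^T$, $L_k^+=2ww^T-2wq^T+qq^T$ and the matching with (\ref{def:L-pm}) check out.
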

Recall that $h_\Lambda v_k=\gamma_k v_k$. This shows, using the decomposition (\ref{Dec:h-1-2}), that
\begin{eqnarray}
A(v_k)_{\Lambda_0}+ C (v_k)_{\Lambda_0^c}&=&\gamma_k (v_k)_{\Lambda_0} \\
C^T(v_k)_{\Lambda_0}+ B (v_k)_{\Lambda_0^c}&=&\gamma_k (v_k)_{\Lambda_0^c}.
\end{eqnarray}
Multiply the second equation by $C B^{-1}$ then subtract from the first equation to obtain
\begin{equation}
(A-C B^{-1} C^T) (v_k)_{\Lambda_0}= \gamma_k\left((v_k)_{\Lambda_0}-C B^{-1} (v_k)_{\Lambda_0^c}\right).
\end{equation}
This shows that $ \upnu_k$ defined in (\ref{def:upnu}) above is given by the formula
\begin{equation}
 \upnu_k=\gamma_k^{-1}(h_\Lambda^{1/2}/B)(v_k)_{\Lambda_0}.
\end{equation}

It is worth mentioning here that we choose to write $L_k^\pm$ in the format (\ref{def:L-pm}) above because we will see later (see e.g. (\ref{eq:rho1-L}) below) that only the diagonal entries of $L_k^\pm$ are relevant in our approach for the entanglement of $(\varrho_{\e_k})_{\Lambda_0}$. Then, it is direct to read from (\ref{def:L-pm}) that  
\begin{equation}
(L_k^++L_k^-)_{jj}=\left\langle\delta_j,  \upnu_k  \upnu_k^T\delta_j\right\rangle\, \geq 0 \text{ and }
(L_k^+)_{jj}=\left\langle\delta_j, \left(  \upnu_k  \upnu_k^T+ (v_k)_{\Lambda_0}(v_k)_{\Lambda_0}^T\right)\delta_j\right\rangle\, \geq 0.
\end{equation} 
Another important fact behind the introduction of $ \upnu_k$ in (\ref{def:upnu}) is the following identity.
\begin{equation}\label{identity-upnu}
  \upnu_k^T (h_\Lambda^{1/2}/B)^{-1}  \upnu_k+ (v_k)_{\Lambda_0^c}^TB^{-1}(v_k)_{\Lambda_0^c}=\gamma_k^{-1}.
\end{equation}
This follows from the following argument. First recall that 
\begin{equation}
v_k=\begin{bmatrix}(v_k)_{\Lambda_0}  & (v_k)_{\Lambda_0^c}\end{bmatrix}^T
\end{equation}
 is the (normalized) eigenvector of $h_\Lambda^{-1/2}$ associated eigenvalue $\gamma_k^{-1}$, i.e., 
$v_k^T h_\Lambda^{-1/2} v_k=\gamma_k^{-1}$. Using the block decomposition (\ref{Dec:h-1-2}) of $h_\Lambda^{1/2}$ we obtain.
\begin{equation}\label{Temp:1}
\begin{bmatrix}(v_k)_{\Lambda_0}  & (v_k)_{\Lambda_0^c}\end{bmatrix}^T
\begin{bmatrix}
(h_\Lambda^{1/2}/B)^{-1} & -(h_\Lambda^{1/2}/B)^{-1}  C B^{-1}\\
-B^{-1} C^T (h_\Lambda^{1/2}/B)^{-1} & B^{-1}+B^{-1} C^T (h_\Lambda^{1/2}/B)^{-1}  CB^{-1}
\end{bmatrix}
\begin{bmatrix}(v_k)_{\Lambda_0}  \\
 (v_k)_{\Lambda_0^c}
 \end{bmatrix}=\gamma_k^{-1}.
\end{equation}
Recall here that $h_\Lambda^{1/2}/B$ denotes the Schur complement of $B$, i.e., $h_\Lambda^{1/2}/B=A-C B^{-1} C^T$. Expand the left hand side of (\ref{Temp:1}) to see that it matches the left hand side of (\ref{identity-upnu}).

\begin{proof}[Proof of Lemma \ref{lem:rho1}]
We integrate  $u$  out in (\ref{rho-alpha-u})  to obtain the formula for the  $(\varrho_{\e_k})_{\Lambda_0}(x,y)$. The integral over $u$ reduces to some integrals of the  form
\begin{equation}
\mathcal{I}_{|\Lambda_0^c|,\ell}:=\int_{\mathbb{R}^{|\Lambda_0^c|}} \left(\mathcal{K}^T u\right)^\ell \exp\left(-\frac{1}{2}u^T\mathcal{A}u+\mathcal{J}^T u\right)\ du
\end{equation}
where $\ell\in\{0,1,2\}$ and
\begin{equation}\label{K-A-J}
\mathcal{K}:=(v_k)_{\Lambda_0^c}\in\mathbb{R}^{|\Lambda_0^c|},\  \mathcal{A}:=2B, \text{ and } \mathcal{J}^T:=-(x+y)^T C.
\end{equation}
In terms of $\mathcal{I}_{|\Lambda_0^c|,\ell}$, the integral of (\ref{rho-alpha-u}) with respect to $u\in\mathbb{R}^{|\Lambda_0^c|}$ is
\begin{eqnarray}\label{rho1-1}
(\varrho_{\e_k})_{\Lambda_0}(x,y)&=&
2\gamma_k\ \left(\frac{\det(h_\Lambda^{1/2})}{\pi^{|\Lambda|}}\right)^{1/2}e^{-\frac{1}{2}\left(x^T A x +y^T A y  \right)}
\Big((v_k)_{\Lambda_0}^T x (v_k)_{\Lambda_0}^T y\ \mathcal{I}_{|\Lambda_0^c|,0}+ \nonumber
\\
&&\hspace{7cm} +  (v_k)_{\Lambda_0}^T (x+y)\ \mathcal{I}_{|\Lambda_0^c|,1} +\mathcal{I}_{|\Lambda_0^c|,2}\Big).
\end{eqnarray}
The following lemma provides a general integral identities for  $\mathcal{I}_{|\Lambda_0|,\ell}$ for $\ell\in\mathbb{N}_0$. While we need only the special values $\ell=0,1,2$ in (\ref{rho1-1}), the general formula is relevant in studying higher energy eigenstates, and it may be useful for independent interests as we could not find it in the literature.

\begin{lem}\label{Int-Identities}
For any $\mathcal{K}\in\mathbb{R}^m$, $\ell\in\mathbb{N}_0$,  $\mathcal{A}\in\mathbb{R}^{m\times m}$ symmetric and positive, and $\mathcal{J}\in\mathbb{R}^m$, we have
\begin{equation}\label{eq:Non-Gaussian-int}
\mathcal{I}_{m,\ell}=  \left(\frac{(2\pi)^m}{\det(\mathcal{A})}\right)^{1/2}\ e^{\frac{1}{2}\mathcal{J}^T \mathcal{A}^{-1}\mathcal{J}}
\sum_{j=0}^{\lfloor\frac{\ell}{2}\rfloor}(2j-1)!! \binom{\ell}{2j} (\mathcal{K}^T \mathcal{A}^{-1}\mathcal{J})^{\ell-2j} (\mathcal{K}^T \mathcal{A}^{-1} \mathcal{K})^{j}
\end{equation}
where we note that the \emph{double factorial} is given by the formula
$\displaystyle
n!!= \prod_{j=0}^{\lceil n\rceil-1}(n-2j)$,
and we note that $(-1)!!:=1$.
\end{lem}
The proof of Lemma \ref{Int-Identities} is included in Appendix \ref{sec:non-gaussian-int}.

The integral formula in Lemma $\ref{Int-Identities}$ with $\mathcal{K},\mathcal{A}$ and $\mathcal{J}$ as in (\ref{K-A-J}) and $\ell=0,1,2$ gives
\begin{eqnarray}
\mathcal{I}_{|\Lambda_0^c|,0} &=& \left(\frac{\pi^{|\Lambda_0^c|}}{\det(B)}\right)^{1/2} \exp\left(\frac{1}{4}(x+y)^T C B^{-1}C^T(x+y)\right). \\
\mathcal{I}_{|\Lambda_0^c|,1}&=&
\mathcal{I}_{|\Lambda_0^c|,0}\times \frac{-1}{2}(v_k)_{\Lambda_0^c}^T B^{-1} C^T (x+y).\\
\mathcal{I}_{|\Lambda_0^c|,2}&=&\mathcal{I}_{|\Lambda_0^c|,0} \left(\left(\frac{1}{2}(v_k)_{\Lambda_0^c}^T B^{-1} C^T (x+y)\right)^2+\frac12 (v_k)_{\Lambda_0^c}^TB^{-1}(v_k)_{\Lambda_0^c}\right).
\end{eqnarray}
Substitute in (\ref{rho1-1}) to get
\begin{eqnarray}
(\varrho_{\e_k})_{\Lambda_0}(x,y)&=&
2\gamma_k\ \left(\frac{\det(h_\Lambda^{1/2}/B)}{\pi^{|\Lambda_0|}}\right)^{1/2}e^{-\frac{1}{2}\left(x^T A x +y^T A y-\frac{1}{2}(x+y)^T C B^{-1}C^T(x+y)\right)}\times\nonumber\\
&&
\hspace{2cm}\times
\left((v_k)_{\Lambda_0}^T x(v_k)_{\Lambda_0}^T y\  -\frac{1}{2} (v_k)_{\Lambda_0}^T (x+y)\ (v_k)_{\Lambda_0^c}^T B^{-1} C^T (x+y)
+\right.\nonumber
\\
&&\hspace{3cm}\left.+\left(\frac{1}{2}(v_k)_{\Lambda_0^c}^T B^{-1} C^T (x+y)\right)^2+\frac12(v_k)_{\Lambda_0^c}^TB^{-1}(v_k)_{\Lambda_0^c}
\right)
\end{eqnarray}
which can be written as (\ref{eq:rho1}).
\end{proof}
%%%%%%%%%%%%%%%%%%%%%%%%%%%%%%%%%%%%%%%%%%
\subsection{A change of variables} \label{sec:cov2}
%%%%%%%%%%%%%%%%%%%%%%%%%%%%%%%%%%%%%%%%%%
This section follows closely Section \ref{sec:change-of-variables} for the ground state. In particular, we change variables using the unitary operator $\mathscr{O}$ on $\mathscr{L}^2(\mathbb{R}^{|\Lambda_0|})$ as $\mathscr{O}f(x)=|\det(F)|^{1/2} f(Fx)$, where $F=F_1F_2$ ($F_1$ is given in (\ref{def:F1-Theta}), and $F_2$ is the orthogonal matrix defined by (\ref{eq:sigma-1})). Here, we use (\ref{def:F1-Theta}), statement (b) of Lemma \ref{lem:sigma-Theta}, and (\ref{eq:sigma-to-mu}) the get the following simplified formula for $F$ in terms of $\mu_j$'s (recall that $F_2$ is an orthogonal operator).
\begin{equation}\label{eq:F}
F=F_1 F_2= A^{-1/2} F_2 \diag(1-\sigma_j)^{1/2}=A^{-1/2} F_2\diag\left(\frac{2\mu_j^2}{1+\mu_j^2}\right)^{1/2}.
\end{equation}

For the reduced excited state, Lemma \ref{lem:hat-rho-0-1} and its proof give
\begin{eqnarray}
\widehat{(\varrho_{\e_k})}_{\Lambda_0}(x,y)&=&(\mathscr{O}(\varrho_{\e_k})_{\Lambda_0}\mathscr{O}^*)(x,y) \notag\\
&=&|\det(F)|(\varrho_{\e_k})_{\Lambda_0}(Fx,Fy)=|\det(F)|\ (\varrho_0)_{\Lambda_0}(Fx,Fy)\  \mathcal{L}_k(Fx,Fy) \notag\\
&=& \widehat{(\varrho_0)}_{\Lambda_0}(x,y)\  \mathcal{L}_k(Fx,Fy),
\end{eqnarray}
where $\mathcal{L}_k(x, y)$ is given in (\ref{def:L}).
This leads to the following formula.
\begin{eqnarray}\label{eq:hat-rho}
\widehat{(\varrho_{\e_k})}_{\Lambda_0}(x, y)&=&\widehat{(\varrho_0)}_{\Lambda_0}(x,y)\left(\frac{\gamma_k}{2}\sum_{i,j=1}^{|\Lambda_0|} \big((\hat L_k^-)_{i,j} (x_i x_j+y_iy_j)+2(\hat L_k^+)_{i,j}x_i y_j\big)+\right. \notag\\
&&\hspace{9cm} + \gamma_k(v_k)_{\Lambda_0^c}^TB^{-1}(v_k)_{\Lambda_0^c}\Big)
\end{eqnarray}
where $\hat L_k^\pm:=F^T L_k^\pm F$, here $L_k^\pm$ are defined in Lemma (\ref{lem:rho1}), $F$ is given in (\ref{eq:F}), and $\widehat{(\varrho_0)}_{\Lambda_0}(x,y)$ is given in Lemma \ref{lem:hat-rho-0-1}.

%%%%%%%%%%%%%%%%%%%%%%%%%%%%
\subsection{Diagonal elements of the eigenstates}
%%%%%%%%%%%%%%%%%%%%%%%%%%%%%
In contrast to the ground state $(\varrho_0)_{\Lambda_0}$ case, the change of variables $\widehat{(\varrho_{\e_k})}_{\Lambda_0}=\mathscr{O}(\varrho_{\e_k})_{\Lambda_0}\mathscr{O}^*$ does not produce a product state. Thus, we don't know how to find the eigenvalues of the reduced excited state $(\varrho_{\e_k})_{\Lambda_0}$. We go around this problem by using Peierls-Bogolyubov inequality, see e.g., \cite[Chapter 8]{Trace-Simon}, and considering the diagonal entries of  $ \widehat{(\varrho_{\e_k})}_{\Lambda_0}$ with respect to a ``suitable'' orthonormal basis $\varphi$ of $\mathscr{L}^2(\mathbb{R}^{|\Lambda_0|})$  
\begin{equation}\label{eq:Renyi-PB}
\mathcal{E}_\epsilon(\varrho_{\e_k}) = \frac{1}{1-\epsilon}\log\Tr\left[\widehat{(\varrho_{\e_k})}_{\Lambda_0}^\epsilon\right]\leq \frac{1}{1-\epsilon}\log\sum_{\varphi}\left\langle\varphi,\widehat{(\varrho_{\e_k})}_{\Lambda_0}\varphi\right\rangle^\epsilon.
\end{equation}
For the $\epsilon$-R\'enyi entanglement entropy of the ground state $\mathcal{E}_\epsilon(\varrho_0)$, inequality (\ref{eq:Renyi-PB}) becomes an \emph{equality} when the orthonormal basis is taken to be  $\{\Psi_n^{(\kappa)}\}_{n\in \mathbb{N}_0^{|\Lambda_0|}}$ defined in (\ref{eq:eig-reduced-0}). Here $\kappa=(\kappa_1,\ldots,\kappa_{|\Lambda_0|})$ with $\kappa_j=2\mu_j/(1+\mu_j^2)$ and  the $\mu_j$'s are given in (\ref{def:mu}). For the reader's convenience, let us recall that $\Psi_n^{(\kappa)}(x)$ is a product of Hermite-Gaussian functions as follows.
\begin{equation}\label{def:Psi}
\Psi_n^{(\kappa)}(x)=\prod_{j=1}^{|\Lambda_0|}\psi_{n_j}^{(\kappa_j)}(x_j)=\prod_{j=1}^{|\Lambda_0|}\frac{\varphi_{n_j}^{(\kappa_j)}(x_j)}{\|\varphi_{n_j}^{(\kappa_j)}\|} \text{ where } \varphi_{n_j}^{(\kappa_j)}(x_j)=H_{n_j}(\kappa_j^{1/2} x_j) e^{-\frac{\kappa_j}{2}x_j^2} \in\mathscr{L}^2(\mathbb{R})
\end{equation}
for  $n=(n_1,\ldots,n_{|\Lambda_0|})\in\mathbb{N}_0^{|\Lambda_0|}$ and $x=(x_1,\ldots,x_{|\Lambda_0|})\in\mathbb{R}^{|\Lambda_0|}$. Here
\begin{equation}
\|\varphi_{n}^{(\kappa)}\|:= \|\varphi_{n}^{(\kappa)}\|_{\mathscr{L}^2(\mathbb{R})}=\left(\int_\mathbb{R}(\varphi_{n}^{(\kappa)}(x))^2\ dx\right)^{1/2}=(2^n n!)^{1/2} \left(\frac{\pi}{\kappa}\right)^{1/4}.
\end{equation}
To ease notations, we define
 \begin{equation}
\left\langle \widehat{(\varrho_{\e_k})}_{\Lambda_0}\right\rangle_n=\left\langle \widehat{(\varrho_{\e_k})}_{\Lambda_0}\right\rangle_{n_1,\ldots,n_{|\Lambda_0|}}:=\left\langle \Psi_n^{(\kappa)}, \widehat{(\varrho_{\e_k})}_{\Lambda_0} \Psi_n^{(\kappa)}\right\rangle_{\mathscr{L}^2(\mathbb{R}^{|\Lambda_0|})}\geq 0,\ \ n\in\mathbb{N}_0^{|\Lambda_0|}.
\end{equation}
Then it follows from (\ref{eq:Renyi-PB}) that
\begin{equation}\label{eq:Renyi-diagonal}
\mathcal{E}_\epsilon(\varrho_{\e_k})\leq \frac{1}{1-\epsilon}\log\left(\sum_{n\in\mathbb{N}_0^{|\Lambda_0|}}\left\langle \widehat{(\varrho_{\e_k})}_{\Lambda_0}\right\rangle_{n}^\epsilon\right).
\end{equation}

Recall from (\ref{eq:rho0-eig-mu}) that $\left\langle\widehat{(\varrho_0)}_{\Lambda_0}\right\rangle_{n}$ is the product
\begin{equation}\label{eq:rho-0-product}
\left\langle \widehat{(\varrho_0)}_{\Lambda_0}\right\rangle_n=\prod_{j=1}^{|\Lambda_0|}\frac{2}{1+\mu_j}\left(\frac{\mu_j-1}{\mu_j+1}\right)^{n_j}.
\end{equation} 
$\left\langle\widehat{(\varrho_{\e_k})}_{\Lambda_0}\right\rangle_{n}$ is given by the non-product form given in the following lemma.
\begin{lem}\label{lem:diagonal-El}
The diagonal entries of the reduced eigenstate  $\widehat{(\varrho_{\e_k})}_{\Lambda_0}$ with respect to the orthonormal basis $\{\Psi_{n}^{(\kappa)}\}$ of $\mathscr{L}^2(\mathbb{R}^{|\Lambda_0|})$ in (\ref{def:Psi}) are given as
\begin{equation}\label{eq:diagonal-El}
\left\langle \widehat{(\varrho_{\e_k})}_{\Lambda_0}\right\rangle_n=\left\langle \widehat{(\varrho_0)}_{\Lambda_0}\right\rangle_n \left(1-\sum_{j=1}^{|\Lambda_0|}\frac{\mu_j}{\mu_j+1}Q_{k, j}+\sum_{j=1}^{|\Lambda_0|}\frac{2\mu_j}{\mu_j^2-1}Q_{k, j} n_j\right)
\end{equation}
where
\begin{equation}\label{def:Qj}
Q_{k, j}:=\gamma_k\left(\mu_j^2|\langle\delta_j, F_2^T A^{-1/2} \upnu_k\rangle|^2 +|\langle\delta_j, F_2^T A^{-1/2}(v_k)_{\Lambda_0}\rangle|^2\right).
\end{equation}
$\mu_j$'s are the eigenvalues of $(\idty_{\Lambda_0}-A^{-1/2}CB^{-1}C^TA^{-1/2})^{-1/2}$, in particular,
\begin{equation}\label{def:mu-2}
(\idty_{\Lambda_0}-A^{-1/2}CB^{-1}C^TA^{-1/2})^{-1/2}= F_2 \diag(\mu_j)F_2^T.
\end{equation}
\end{lem}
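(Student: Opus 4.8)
The plan is to compute the diagonal matrix element $\left\langle\Psi_n^{(\kappa)},\widehat{(\varrho_{\e_k})}_{\Lambda_0}\Psi_n^{(\kappa)}\right\rangle$ directly from the explicit kernel (\ref{eq:hat-rho}), exploiting that the background factor $\widehat{(\varrho_0)}_{\Lambda_0}$ is, by Lemma \ref{lem:hat-rho-0-1}, a tensor product of the normalized one-dimensional operators $T_{\sigma_j}/\Tr[T_{\sigma_j}]$, which is diagonalized by the product basis $\Psi_n^{(\kappa)}=\prod_j\psi_{n_j}^{(\kappa_j)}$ of (\ref{def:Psi}). Writing $\widehat{(\varrho_{\e_k})}_{\Lambda_0}(x,y)=\widehat{(\varrho_0)}_{\Lambda_0}(x,y)\,P(x,y)$, where $P$ is the quadratic polynomial in (\ref{eq:hat-rho}), I would expand $P$ monomial by monomial and integrate each term against the product eigenfunctions.

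The first step is to observe that because each $\psi_n^{(\kappa)}$ has definite parity and is an eigenfunction of $T_{\sigma}/\Tr[T_{\sigma}]$, every off-diagonal monomial $x_ix_j$, $y_iy_j$, or $x_iy_j$ with $i\neq j$ contributes a factor proportional to $\langle\psi_{n_i}^{(\kappa_i)},x\,\psi_{n_i}^{(\kappa_i)}\rangle=0$. This annihilates all off-diagonal entries of $\hat L_k^\pm$ and justifies the remark following Lemma \ref{lem:rho1} that only the diagonal entries $(\hat L_k^\pm)_{jj}$ enter. What survives reduces to two scalar quantities per coordinate: the position variance $\langle\psi_{n_j}^{(\kappa_j)},x^2\psi_{n_j}^{(\kappa_j)}\rangle$, which governs the $x_j^2$ and $y_j^2$ terms attached to $(\hat L_k^-)_{jj}$, and the mixed quantity $\langle x\psi_{n_j}^{(\kappa_j)},(T_{\sigma_j}/\Tr[T_{\sigma_j}])(x\psi_{n_j}^{(\kappa_j)})\rangle$, which governs the $x_jy_j$ term attached to $(\hat L_k^+)_{jj}$.

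The second step is to evaluate these two scalars. The variance equals $\tfrac{2n_j+1}{2\kappa_j}$ by the virial theorem for $p^2+\kappa_j^2x^2$. For the mixed term I would use the Hermite three-term recursion $x\psi_n^{(\kappa)}=(2\kappa)^{-1/2}(\sqrt{n}\,\psi_{n-1}^{(\kappa)}+\sqrt{n+1}\,\psi_{n+1}^{(\kappa)})$ together with the eigenvalue formula (\ref{def:HG}); orthonormality then gives $\langle x\psi_{n_j}^{(\kappa_j)},(T_{\sigma_j}/\Tr[T_{\sigma_j}])(x\psi_{n_j}^{(\kappa_j)})\rangle=\tfrac{\ell_j(n_j)}{2\kappa_j}\big(n_j\tfrac{\mu_j+1}{\mu_j-1}+(n_j+1)\tfrac{\mu_j-1}{\mu_j+1}\big)$, where $\ell_j(n_j)$ is the $j$-th factor of $\langle\widehat{(\varrho_0)}_{\Lambda_0}\rangle_n$ in (\ref{eq:rho0-eig-mu}) and the eigenvalue ratios are read off from (\ref{def:HG}) and rewritten via (\ref{eq:sigma-to-mu}). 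Next I would identify the surviving diagonal entries: since $\hat L_k^\pm=F^TL_k^\pm F$ with $F=A^{-1/2}F_2\diag(2\mu_j^2/(1+\mu_j^2))^{1/2}$ from (\ref{eq:F}), and since the last two terms of $L_k^\pm$ in (\ref{def:L-pm}) constitute an antisymmetric matrix whose diagonal vanishes under the congruence $F_2^TA^{-1/2}(\cdot)A^{-1/2}F_2$, one obtains $(\hat L_k^\pm)_{jj}=\tfrac{2\mu_j^2}{1+\mu_j^2}(p_j^2\pm q_j^2)$ with $p_j:=\langle\delta_j,F_2^TA^{-1/2}\upnu_k\rangle$ and $q_j:=\langle\delta_j,F_2^TA^{-1/2}(v_k)_{\Lambda_0}\rangle$, so that $Q_{k,j}=\gamma_k(\mu_j^2p_j^2+q_j^2)$. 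After factoring out $\langle\widehat{(\varrho_0)}_{\Lambda_0}\rangle_n=\prod_j\ell_j(n_j)$, the bracket is affine in each $n_j$; collecting the coefficient of $n_j$ and simplifying via $\tfrac{\mu_j+1}{\mu_j-1}+\tfrac{\mu_j-1}{\mu_j+1}=\tfrac{2(\mu_j^2+1)}{\mu_j^2-1}$ collapses it to exactly $\tfrac{2\mu_j}{\mu_j^2-1}Q_{k,j}$, the claimed linear term.

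The final step, and the main obstacle, is the part independent of $n$. Collecting it yields $c_k+\gamma_k\sum_j\tfrac{\mu_j(\mu_jp_j^2-q_j^2)}{\mu_j+1}$ with $c_k=\gamma_k(v_k)_{\Lambda_0^c}^TB^{-1}(v_k)_{\Lambda_0^c}$, and matching this against the asserted $1-\sum_j\tfrac{\mu_j}{\mu_j+1}Q_{k,j}$ reduces, after the algebraic cancellation, to the single scalar identity $c_k+\gamma_k\sum_j\mu_j^2p_j^2=1$. Using (\ref{def:mu}) to rewrite $\sum_j\mu_j^2p_j^2=\upnu_k^T(h_\Lambda^{1/2}/B)^{-1}\upnu_k$, this becomes $\gamma_k\big(\upnu_k^T(h_\Lambda^{1/2}/B)^{-1}\upnu_k+(v_k)_{\Lambda_0^c}^TB^{-1}(v_k)_{\Lambda_0^c}\big)=1$, which is precisely the identity (\ref{identity-upnu}). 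I expect verifying this cancellation, namely that all the $n$-independent contributions assemble into exactly $\gamma_k^{-1}$ before (\ref{identity-upnu}) is invoked, to be the delicate bookkeeping step, as it is the one place where the specific structure of $\upnu_k$ and the eigenvalue equation $h_\Lambda v_k=\gamma_k v_k$ genuinely enter.
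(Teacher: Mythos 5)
Your proposal is correct and follows essentially the same route as the paper: it reduces the computation to the diagonal entries $(\hat L_k^\pm)_{jj}$ via parity, evaluates the same one-dimensional moments $\langle T^{x^2}_{\sigma_j}\rangle_{n_j}$ and $\langle T^{xy}_{\sigma_j}\rangle_{n_j}$ (the content of Lemma \ref{lem:integrals}, which you rederive via the three-term recursion and the eigenvalue formula (\ref{def:HG})), identifies $(\hat L_k^\pm)_{jj}=\tfrac{2\mu_j^2}{1+\mu_j^2}(p_j^2\pm q_j^2)$ exactly as in (\ref{def:L-p+L-m})--(\ref{def:L+}), and closes the constant term with the identity (\ref{identity-upnu}) just as the paper does in (\ref{pf:1-sum}). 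The bookkeeping you flag as delicate does check out: the $n$-independent contributions assemble to $c_k+\gamma_k\sum_j\mu_j^2p_j^2$, which (\ref{identity-upnu}) identifies with $1$.
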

It is worth noting here that a calculation using
\begin{equation}
\sum_{n_j=0}^\infty \left(\frac{\mu_j-1}{\mu_j+1}\right)^{n_j}=\frac{\mu_j+1}{2},\quad \text{ and } \sum_{n_j=0}^\infty n_j\left(\frac{\mu_j-1}{\mu_j+1}\right)^{n_j}=\frac{\mu_j^2-1}{4},
\end{equation}
 starting from the formula (\ref{eq:diagonal-El}) confirms that $\Tr\left[ \widehat{(\varrho_{\e_k})}_{\Lambda_0}\right]=1$.

\begin{proof}[Proof of Lemma \ref{lem:diagonal-El}]
We show below that
\begin{eqnarray}\label{eq:rho1-L}
\left\langle \widehat{(\varrho_{\e_k})}_{\Lambda_0}\right\rangle_n&=&\left\langle \widehat{(\varrho_0)}_{\Lambda_0}\right\rangle_n \left[\gamma_k (v_k)_{\Lambda_0^c}^TB^{-1}(v_k)_{\Lambda_0^c}+\right. \nonumber\\
&& 
\hspace{2cm} +\sum_{j=1}^{|\Lambda_0|}  \frac{1+\mu_j^2}{4\mu_j}\left(\gamma_k(\hat L_k^- +\hat L_k^+)_{jj} -\frac{2}{\mu_j+1}\gamma_k(\hat L_k^+)_{jj}\right)+\nonumber\\
&&
\hspace{2cm}\left. +\sum_{j=1}^{|\Lambda_0|}  \frac{1+\mu_j^2}{4\mu_j}\left(\gamma_k(\hat L_k^- +\hat L_k^+)_{jj} +\frac{2}{\mu_j^2-1}\gamma_k(\hat L_k^+)_{jj}\right)2n_j\right]. 
\end{eqnarray}
Then observe that (with $\hat L_k^\pm=F^T L_k^\pm F$ where $F$ is given in (\ref{eq:F}) and $L_k^\pm$ is introduced in (\ref{def:L-pm}))
\begin{equation}\label{def:L-p+L-m}
(\hat L_k^+ +\hat L_k^-)_{jj}=2\langle\delta_j, F^T  \upnu_k \upnu_k^T F\delta_j\rangle
=2 |\left\langle\delta_j, F^T  \upnu_k\right\rangle|^2
=\frac{4\mu_j^2}{\mu_j^2+1}|\langle\delta_j, F_2^T A^{-1/2}  \upnu_k\rangle|^2.
\end{equation}
Similarly, 
\begin{equation}\label{def:L+}
(\hat L_k^+)_{jj}=\frac{2\mu_j^2}{\mu_j^2+1}\left(|\langle\delta_j, F_2^T A^{-1/2}  \upnu_k\rangle|^2+|\langle\delta_j, F_2^T A^{-1/2} (v_k)_{\Lambda_0}\rangle|^2\right).
\end{equation}
Substitute in (\ref{eq:rho1-L}) to obtain the desired result (\ref{eq:diagonal-El}).

In the following, we show (\ref{eq:rho1-L}). First, note that
\begin{equation}\label{eq:psi-hat-exp}
\left\langle \widehat{(\varrho_{\e_k})}_{\Lambda_0}\right\rangle_n = \iint_{\mathbb{R}^{2|\Lambda_0|}} \widehat{(\varrho_{\e_k})}_{\Lambda_0}(x,y)\Psi_{n}^{(\kappa)}(x)\Psi_{n}^{(\kappa)}(y)\ dx dy 
 \end{equation}
where $\widehat{(\varrho_{\e_k})}_{\Lambda_0}(x,y)$ is given in (\ref{eq:hat-rho}), and $\{\Psi_{n}^{(\kappa)}\}_{n\in\mathbb{N}_0^{|\Lambda_0|}}$ are the orthonormal basis of $\mathscr{L}^2(\mathbb{R}^{|\Lambda_0|})$ given in (\ref{def:Psi}). With (\ref{eq:hat-rho-0-1}), recall that $\widehat{(\varrho_{\e_k})}_{\Lambda_0}(x,y)$ is
\begin{eqnarray}\label{2}
\widehat{(\varrho_{\e_k})}_{\Lambda_0}(x,y)&=&\left(\prod_{k=1}^{|\Lambda_0|}\frac{T_{\sigma_k}(x_k,y_k)}{\Tr[T_{\sigma_k}]}\right)\times \nonumber\\
&&\hspace{1cm}\times\left(\gamma_k (v_k)_{\Lambda_0^c}^T B^{-1}(v_k)_{\Lambda_0^c}+\frac12\sum_{i,j=1}^{|\Lambda_0|} \big(\gamma_k(\hat L_k^- )_{i,j} (x_i x_j+y_i y_j)+2\gamma_k(\hat L_k^+)_{i,j}x_i y_j\big)\right).
\end{eqnarray}
The $\mathbb{R}^{2|\Lambda_0|}$-dimensional integral (\ref{eq:psi-hat-exp}) reduces to two dimensional integrals related to the integral operators $ T^{g}_{\sigma_j}$ with  kernel
\begin{equation}
T^g_{\sigma_j}(x, y) =g(x, y)T_{\sigma_j}(x,y)
\end{equation}
where $g:\mathbb{R}\times\mathbb{R}\rightarrow\mathbb{R}$ and here it is restricted to the set of functions $\{x, y, x^2,y^2, x y\}$. In particular, we need to find the values of 
\begin{equation}
\left\langle T_{\sigma_j}\right\rangle_{n_j}, \left\langle T^x_{\sigma_j}\right\rangle_{n_j},  \left\langle T^{x^2}_{\sigma_j}\right\rangle_{n_j}, \left\langle T^{y}_{\sigma_j}\right\rangle_{n_j}, \left\langle T^{y^2}_{\sigma_j}\right\rangle_{n_j}, \left\langle T^{x y}_{\sigma_j}\right\rangle_{n_j},
\end{equation}
and in general, for any $m\in\mathbb{N}_0$,
\begin{equation}
\left\langle T^g_{\sigma_j}\right\rangle_m=\left\langle \psi_m^{(\kappa_j)}, T^g_{\sigma} \psi_m^{(\kappa_j)}\right\rangle_{\mathscr{L}^2(\mathbb{R})}=\iint_{\mathbb{R}^2} T^g_{\sigma_j}(x, y)\frac{
 \varphi_m^{(\kappa_j)}(y)\varphi_m^{(\kappa_j)}(x)}{\|\varphi_m^{(\kappa_j)}\|^2}\ dx\ dy.
 \end{equation}
We prove the following lemma in Appendix \ref{app:proof:integrals}.
 \begin{lem}\label{lem:integrals}
 We have the following formulas.
 \begin{eqnarray}
 \left\langle T^x_{\sigma_j} \right\rangle_{n_j}&=& \left\langle T^y_{\sigma_j} \right\rangle_{n_j}=0 \\
  \left\langle T^{x^2}_{\sigma_j} \right\rangle_{n_j}&=& \left\langle  T^{y^2}_{\sigma_j} \right\rangle_{n_j}=\frac{1}{2\kappa_j}(2n_j+1)\left\langle T_{\sigma_j}\right\rangle_{n_j} \\
    \left\langle T^{x y}_{\sigma_j} \right\rangle_{n_j}&=& \frac{1}{2\kappa_j}\left(\frac{2(\mu_j^2+1)}{\mu_j^2-1}n_j+\frac{\mu_j-1}{\mu_j+1}\right)\left\langle T_{\sigma_j}\right\rangle_{n_j}.
 \end{eqnarray}
 \end{lem}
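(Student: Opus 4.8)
\emph{The plan} is to avoid evaluating any of the integrals directly and instead to exploit that $\psi_m^{(\kappa)}$ is an eigenfunction of $T_\sigma$ by (\ref{def:HG}); I suppress the mode index $j$ throughout, writing $\sigma,\kappa,\mu,m$ for $\sigma_j,\kappa_j,\mu_j,n_j$. Writing $M_f$ for multiplication by $f$, the kernels give the factorizations $T^x_\sigma = M_x T_\sigma$, $T^{x^2}_\sigma = M_{x^2} T_\sigma$, $T^y_\sigma = T_\sigma M_y$, $T^{y^2}_\sigma = T_\sigma M_{y^2}$, and $T^{xy}_\sigma = M_x T_\sigma M_y$. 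Since $T_\sigma$ has a real symmetric kernel it is self-adjoint, the $M_f$ are self-adjoint, and $\psi_m^{(\kappa)}$ is real and normalized, so $\langle T_\sigma\rangle_m = \uplambda_m(T_\sigma)$. In each case I would move $T_\sigma$ onto an eigenfunction, reducing the matrix element to a one-dimensional Gaussian moment multiplied by $\uplambda_m$ or $\uplambda_{m\pm1}$.

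The second ingredient I need is the three-term recurrence, which I would derive from the Hermite identity $tH_m(t) = \tfrac12 H_{m+1}(t) + m H_{m-1}(t)$ applied to (\ref{def:psi}) with $t = \kappa^{1/2}x$: this yields $x\,\psi_m^{(\kappa)} = (2\kappa)^{-1/2}\big(\sqrt{m}\,\psi_{m-1}^{(\kappa)} + \sqrt{m+1}\,\psi_{m+1}^{(\kappa)}\big)$, and identically for $y$. For the first-order moments, $\langle T^x_\sigma\rangle_m = \uplambda_m\,\langle\psi_m^{(\kappa)}, x\psi_m^{(\kappa)}\rangle = 0$ by orthogonality (equivalently by parity), and likewise $\langle T^y_\sigma\rangle_m = 0$. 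For the diagonal second moments I would use $\langle\psi_m^{(\kappa)}, x^2\psi_m^{(\kappa)}\rangle = \|x\psi_m^{(\kappa)}\|^2 = (2\kappa)^{-1}(2m+1)$, immediate from the recurrence and orthonormality, so that $\langle T^{x^2}_\sigma\rangle_m = \langle T^{y^2}_\sigma\rangle_m = (2\kappa)^{-1}(2m+1)\,\langle T_\sigma\rangle_m$.

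The only genuinely two-sided quantity is $\langle T^{xy}_\sigma\rangle_m = \langle x\psi_m^{(\kappa)},\, T_\sigma(y\psi_m^{(\kappa)})\rangle$. Here I would expand $y\psi_m^{(\kappa)}$ by the recurrence, apply $T_\sigma$ term by term through (\ref{def:HG}) to pick up the eigenvalues $\uplambda_{m\pm1}$, and pair against the corresponding expansion of $x\psi_m^{(\kappa)}$; orthonormality then collapses the double sum to $\langle T^{xy}_\sigma\rangle_m = (2\kappa)^{-1}\big(m\,\uplambda_{m-1} + (m+1)\,\uplambda_{m+1}\big)$. The geometric form of the eigenvalues in (\ref{def:HG}) is what makes this usable: setting $r := -\sigma/(1+\kappa)$ one has $\uplambda_{m\pm1} = r^{\pm1}\uplambda_m$, and a direct substitution from (\ref{eq:sigma-to-mu}) gives $r = (\mu-1)/(\mu+1)$, whence $r + r^{-1} = 2(\mu^2+1)/(\mu^2-1)$. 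Collecting terms then produces exactly $\langle T^{xy}_\sigma\rangle_m = (2\kappa)^{-1}\big(\tfrac{2(\mu^2+1)}{\mu^2-1}\,m + \tfrac{\mu-1}{\mu+1}\big)\langle T_\sigma\rangle_m$. I expect the main obstacle to be purely bookkeeping in this last step, namely keeping the off-diagonal eigenvalues $\uplambda_{m\pm1}$ straight and converting the resulting rational expression from $(\sigma,\kappa)$ to the symplectic variable $\mu$ via (\ref{eq:sigma-to-mu}); the underlying one-dimensional Gaussian–Hermite integrals, together with the derivation of the recurrence and the value $\|\varphi_m^{(\kappa)}\|$, I would relegate to Appendix \ref{app:proof:integrals}.
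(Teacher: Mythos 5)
Your argument is correct and follows essentially the same route as the paper's Appendix B proof: both reduce each matrix element to the three-term Hermite recurrence, the eigenvalue relation $T_{\sigma}\psi_m^{(\kappa)}=\uplambda_m\psi_m^{(\kappa)}$, and orthogonality, and both convert the eigenvalue ratio $\uplambda_{m\pm1}/\uplambda_m=(-\sigma/(1+\kappa))^{\pm1}=((\mu-1)/(\mu+1))^{\pm1}$ at the end. The only differences are cosmetic — you work with the normalized $\psi_m^{(\kappa)}$ (so the recurrence has symmetric coefficients $\sqrt{m},\sqrt{m+1}$) where the paper tracks $\|\varphi_m^{(\kappa)}\|^2$ explicitly, and for the $x^2$ term you use $\|x\psi_m^{(\kappa)}\|^2$ directly rather than iterating the recurrence to reach $\varphi_{m\pm2}$.
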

 
 Following from (\ref{2}), we have
 \begin{eqnarray}
\left\langle \widehat{(\varrho_{\e_k})}_{\Lambda_0}\right\rangle_n&=&\left\langle \widehat{(\varrho_0)}_{\Lambda_0}\right\rangle_n\gamma_k (v_k)_{\Lambda_0^c}^T B^{-1}(v_k)_{\Lambda_0^c}+ \nonumber\\
&&\hspace{1cm}+\sum_{j=1}^{|\Lambda_0|} \left(\prod_{\ell=1,\ \ell\neq j}^{|\Lambda_0|}\frac{\left\langle T_{\sigma_\ell}\right\rangle_{n_\ell}}{\Tr[T_{\sigma_\ell}]}\right)\left(\gamma_\ell(\hat L_k^- )_{j,j} 
\frac{\left\langle T^{x_j^2}_{\sigma_j} \right\rangle_{n_j}}{\Tr[T_{\sigma_j}]}+\gamma_k(\hat L_k^+)_{j,j}
\frac{\left\langle T^{x_j y_j}_{\sigma_j} \right\rangle_{n_j}}{\Tr[T_{\sigma_j}]}
\right) \notag\\
&=&\left\langle \widehat{(\varrho_0)}_{\Lambda_0}\right\rangle_n\times \left( \gamma_k (v_k)_{\Lambda_0^c}^T B^{-1}(v_k)_{\Lambda_0^c}+ \right.\nonumber\\
&&\hspace{0.5cm}+\sum_{j=1}^{|\Lambda_0|} \frac{1}{2\kappa_j}\left(\gamma_k(\hat L_k^+ +\hat L_k^-)_{j,j} (2n_j+1)
+\gamma_k(\hat L_k^+)_{j,j}
\left(\frac{4}{\mu_j^2-1}n_j-\frac{2}{\mu_j+1}\right)
\right).
 \end{eqnarray}
Arrange the terms, use (\ref{def:L-p+L-m}) and (\ref{def:L+}) and recall that $\kappa_j=2\mu_j/(\mu_j^2+1)$, to obtain the following.
\begin{eqnarray}\label{eq:diagonal-3}
\left\langle \widehat{(\varrho_{\e_k})}_{\Lambda_0}\right\rangle_n&=&\left\langle \widehat{(\varrho_0)}_{\Lambda_0}\right\rangle_n \left[
 \gamma_k (v_k)_{\Lambda_0^c}^TB^{-1}(v_k)_{\Lambda_0^c} +\right. \notag\\
&& 
\hspace{1cm}+\gamma_k\sum_{j=1}^{|\Lambda_0|} \left(\mu_j|\langle \delta_j, F_2^T A^{-1/2} \upnu_k\rangle|^2-|\langle\delta_j, F_2^T A^{-1/2}(v_k)_{\Lambda_0}\rangle|^2\right) \frac{\mu_j}{\mu_j+1} + \nonumber\\
&&
 \hspace{1cm}+\gamma_k\sum_{j=1}^{|\Lambda_0|} \left(\mu_j^2|\langle\delta_j, F_2^T A^{-1/2} \upnu_k\rangle|^2 +|\langle\delta_j, F_2^T A^{-1/2}(v_k)_{\Lambda_0}\rangle|^2 \right)\frac{2\mu_j n_j}{\mu_j^2-1}\Big].
\end{eqnarray}
Finally, observe that
\begin{eqnarray}\label{pf:1-sum}
 \sum_{j=1}^{|\Lambda_0|} \mu_j^2 |\langle \delta_j, F_2^T A^{-1/2} \upnu_k\rangle|^2&=&\|\diag(\mu_j)F_2^T A^{-1/2} \upnu_k\|^2 = \upnu_k^T A^{-1/2}F_2\diag(\mu_j^2)F_2^T A^{-1/2} \upnu_k \notag \\
&=& \upnu_k^T (A-CB^{-1} C^T)^{-1} \upnu_k= \gamma_k^{-1}-  (v_k)_{\Lambda_0^c}^TB^{-1}(v_k)_{\Lambda_0^c}.
\end{eqnarray}
Here we used (\ref{def:mu-2}) and the identity (\ref{identity-upnu}).
Hence, we have
\begin{equation}
\gamma_k (v_k)_{\Lambda_0^c}^TB^{-1}(v_k)_{\Lambda_0^c} =1- \gamma_k \sum_{j=1}^{|\Lambda_0|} \mu_j^2 |\langle \delta_j, F_2^T A^{-1/2} \upnu_k\rangle|^2.
\end{equation}
Substitute in (\ref{eq:diagonal-3}) to obtain the desired formula (\ref{eq:diagonal-El}).
\end{proof}
Following from (\ref{eq:diagonal-El}), we have the bound
\begin{equation}\label{eq:rho1-bound-1}
\left\langle \widehat{(\varrho_{\e_k})}_{\Lambda_0}\right\rangle_n
\leq 
\left\langle \widehat{(\varrho_0)}_{\Lambda_0}\right\rangle_n \left(1+\sum_{j=1}^{|\Lambda_0|}Q_{k, j} \frac{2}{\mu_j-1}n_j\right).
\end{equation}

 Here is an observation
\begin{lem}\label{lem:Q}
For $Q_{k, j}$ defined in (\ref{def:Qj}) we have
\begin{itemize}
\item[(a)] $\displaystyle \sum_{j=1}^{|\Lambda_0|} Q_{k, j}\leq 2$.
\item[(b)] $\displaystyle\sum_{k=1}^{{|\Lambda|}}Q_{k, j}=2$.
\end{itemize}
 \end{lem}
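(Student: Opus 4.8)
The plan is to reduce both identities to the spectral data of the matrix pencil $\big(A,\ h_\Lambda^{1/2}/B\big)$ by introducing the generalized eigenvectors
\[
w_j := A^{-1/2} F_2 \delta_j \in \mathbb{R}^{|\Lambda_0|}, \qquad j = 1, \ldots, |\Lambda_0|.
\]
Because $F_2$ is orthogonal, these are $A$-orthonormal, $\langle w_i, A w_j\rangle = \delta_{ij}$, and from (\ref{def:mu}) they satisfy the two completeness relations
\[
\sum_{j=1}^{|\Lambda_0|} w_j w_j^T = A^{-1}, \qquad \sum_{j=1}^{|\Lambda_0|} \mu_j^2\, w_j w_j^T = (h_\Lambda^{1/2}/B)^{-1},
\]
together with $A w_j = \mu_j^2 (h_\Lambda^{1/2}/B) w_j$, and hence $\langle w_j, (h_\Lambda^{1/2}/B) w_j\rangle = \mu_j^{-2}$. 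Rewriting $Q_{k,j} = \gamma_k\big(\mu_j^2\langle w_j, \upnu_k\rangle^2 + \langle w_j, (v_k)_{\Lambda_0}\rangle^2\big)$, both sums become quadratic forms in the $w_j$.

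For part (a) I would sum over $j$ using the two completeness relations. The first contribution gives $\sum_j \mu_j^2\langle w_j, \upnu_k\rangle^2 = \upnu_k^T (h_\Lambda^{1/2}/B)^{-1}\upnu_k$, which by the identity (\ref{identity-upnu}) equals $\gamma_k^{-1} - (v_k)_{\Lambda_0^c}^T B^{-1}(v_k)_{\Lambda_0^c}$; the second gives $\sum_j\langle w_j, (v_k)_{\Lambda_0}\rangle^2 = (v_k)_{\Lambda_0}^T A^{-1}(v_k)_{\Lambda_0}$. Thus
\[
\sum_{j=1}^{|\Lambda_0|} Q_{k,j} = 1 - \gamma_k (v_k)_{\Lambda_0^c}^T B^{-1}(v_k)_{\Lambda_0^c} + \gamma_k (v_k)_{\Lambda_0}^T A^{-1}(v_k)_{\Lambda_0}.
\]
Since $B^{-1}>0$ the middle term is nonnegative, so (a) reduces to the single inequality $\gamma_k (v_k)_{\Lambda_0}^T A^{-1}(v_k)_{\Lambda_0}\leq 1$.

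For part (b) I would sum over $k$ and exploit that $\{v_k\}$ is an orthonormal eigenbasis of $h_\Lambda^{1/2}$, so $\sum_k \gamma_k v_k v_k^T = h_\Lambda^{1/2}$. Embedding $w_j$ into $\mathbb{R}^{|\Lambda|}$ as $\tilde w_j = (w_j, 0)$ turns $\langle w_j, (v_k)_{\Lambda_0}\rangle$ into $\langle \tilde w_j, v_k\rangle$, whence $\sum_k \gamma_k\langle w_j, (v_k)_{\Lambda_0}\rangle^2 = \tilde w_j^T h_\Lambda^{1/2}\tilde w_j = w_j^T A w_j = 1$. For the other term, the definition (\ref{def:upnu}) of $\upnu_k$ shows $\langle w_j, \upnu_k\rangle = \langle u_j, v_k\rangle$ with $u_j := (w_j,\ -B^{-1}C^T w_j)\in\mathbb{R}^{|\Lambda|}$; the block product $h_\Lambda^{1/2}u_j = \big((h_\Lambda^{1/2}/B)w_j,\ 0\big)$ then yields $\sum_k \gamma_k \langle w_j, \upnu_k\rangle^2 = u_j^T h_\Lambda^{1/2}u_j = w_j^T (h_\Lambda^{1/2}/B)w_j = \mu_j^{-2}$. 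Multiplying by $\mu_j^2$, each of the two terms contributes exactly $1$, giving $\sum_k Q_{k,j} = 2$.

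The main obstacle is the leftover inequality $\gamma_k (v_k)_{\Lambda_0}^T A^{-1}(v_k)_{\Lambda_0}\leq 1$ in part (a): the delicate point is that $A^{-1}$ is the inverse of the compression $A = \idty_{\Lambda_0} h_\Lambda^{1/2}\idty_{\Lambda_0}$ and not the $\Lambda_0$-block of $h_\Lambda^{-1/2}$, so it cannot be read off the spectral decomposition of $h_\Lambda^{-1/2}$. I would settle it variationally, writing $(v_k)_{\Lambda_0}^T A^{-1}(v_k)_{\Lambda_0} = \max_{z\in\mathbb{R}^{|\Lambda_0|}}\big(2(v_k)_{\Lambda_0}^T z - z^T A z\big)$ and, via the embedding $z\mapsto (z,0)$, identifying this with the maximization of $2 v_k^T\tilde z - \tilde z^T h_\Lambda^{1/2}\tilde z$ over $\tilde z\in\mathbb{R}^{|\Lambda|}$ supported in $\Lambda_0$. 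Relaxing the constraint to all of $\mathbb{R}^{|\Lambda|}$ only increases the maximum, which is $v_k^T (h_\Lambda^{1/2})^{-1}v_k = \gamma_k^{-1}$ because $v_k$ is a $\gamma_k$-eigenvector of $h_\Lambda^{1/2}$; multiplying by $\gamma_k$ gives the desired bound.
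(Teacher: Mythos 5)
Your proof is correct. The overall architecture coincides with the paper's: you sum the $\upnu_k$-term in (a) via the completeness relation $\sum_j\mu_j^2w_jw_j^T=(h_\Lambda^{1/2}/B)^{-1}$ and the identity (\ref{identity-upnu}), and you prove (b) from the spectral resolution $\sum_k\gamma_kv_kv_k^T=h_\Lambda^{1/2}$ together with the block identity $h_\Lambda^{1/2}u_j=\bigl((h_\Lambda^{1/2}/B)w_j,0\bigr)$, which is exactly the paper's computation $\sum_k\gamma_k\upnu_k\upnu_k^T=A-CB^{-1}C^T$ written in dual form. The one genuinely different step is the bound $\gamma_k(v_k)_{\Lambda_0}^TA^{-1}(v_k)_{\Lambda_0}\leq1$ for the second contribution in (a): the paper gets this in one line from the operator inequality $\gamma_k(v_k)_{\Lambda_0}(v_k)_{\Lambda_0}^T\leq\sum_{k'}\gamma_{k'}(v_{k'})_{\Lambda_0}(v_{k'})_{\Lambda_0}^T=A$ (so that $\gamma_k\|A^{-1/2}(v_k)_{\Lambda_0}\|^2=\|A^{-1/2}\gamma_k(v_k)_{\Lambda_0}(v_k)_{\Lambda_0}^TA^{-1/2}\|\leq1$), whereas you use the Legendre-dual characterization $b^TA^{-1}b=\max_z(2b^Tz-z^TAz)$ and relax the support constraint to obtain $(v_k)_{\Lambda_0}^TA^{-1}(v_k)_{\Lambda_0}\leq v_k^Th_\Lambda^{-1/2}v_k=\gamma_k^{-1}$. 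Your route is a correct instance of the general monotonicity fact that the inverse of a compression is dominated, on the compressed subspace, by the compression of the inverse, and it does not use that $A$ is a positive combination of the rank-one pieces $(v_k)_{\Lambda_0}(v_k)_{\Lambda_0}^T$; the paper's route is shorter but leans on exactly that decomposition. Your explicit remark that $A^{-1}\neq(h_\Lambda^{-1/2})_{\Lambda_0\Lambda_0}$ correctly identifies the pitfall that makes this step nontrivial. One cosmetic slip: after your display for $\sum_jQ_{k,j}$ you say the "middle term is nonnegative" when you mean that the quantity $\gamma_k(v_k)_{\Lambda_0^c}^TB^{-1}(v_k)_{\Lambda_0^c}$ being subtracted is nonnegative, so discarding it only increases the right-hand side; the logic is nevertheless sound.
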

 \begin{proof}
 Recall that $Q_{k, j}$ is given by the formula 
\begin{equation}\label{pf:Qj-4-First}
 Q_{k, j}= \gamma_k\left(\mu_j^2|\langle\delta_j, F_2^T A^{-1/2} \upnu_k\rangle|^2 +|\langle\delta_j, F_2^T A^{-1/2}(v_k)_{\Lambda_0}\rangle|^2\right),\ \mu_j>1.
\end{equation}
To prove statement (a),
we start with the sum of the first term in (\ref{pf:Qj-4-First}), where as shown in (\ref{pf:1-sum})
\begin{equation}\label{pf:Qj-4-term1}
\sum_{j=1}^{|\Lambda_0|} \gamma_k\mu_j^2 |\langle\delta_j, F_2^T A^{-1/2} \upnu_k\rangle|^2
=
 \gamma_k  \upnu_k^T (A-C B^{-1} C^T)^{-1}  \upnu_k=1-\gamma_k (v_k)_{\Lambda_0^c}^TB^{-1}(v_k)_{\Lambda_0^c}\leq 1.
\end{equation}
Similarly, for the  sum of the second term in (\ref{pf:Qj-4-First}), we have
\begin{equation}\label{pf:Qj-4-term2}
\sum_{j=1}^{|\Lambda_0|} \gamma_k |\langle\delta_j, F_2^T A^{-1/2}(v_k)_{\Lambda_0}\rangle|^2
=
\gamma_k\|A^{-1/2}(v_k)_{\Lambda_0}\|^2=\| A^{-1/2} \gamma_k(v_k)_{\Lambda_0}(v_k)_{\Lambda_0}^T A^{-1/2}\|\leq 1,
\end{equation}
where we used the fact that 
\begin{equation}
\gamma_k (v_k)_{\Lambda_0}(v_k)_{\Lambda_0}^T\leq \sum_{k=1}^{|\Lambda|}\gamma_k(v_k)_{\Lambda_0}(v_k)_{\Lambda_0}^T=A.
\end{equation}
Statement (a) results from the use of (\ref{pf:Qj-4-term1}) and (\ref{pf:Qj-4-term2}) in (\ref{pf:Qj-4-First}).

In the following we prove statement (b). The sum over $k$ of the first term in $Q_{k, j}$ simplifies as
\begin{eqnarray}
\sum_{k=1}^{|\Lambda|}\gamma_k\mu_j^2|\left\langle\delta_j, F_2^T A^{-1/2} \upnu_k\right\rangle|^2
= \mu_j^2 \left\langle \delta_j, F_2^T A^{-1/2} \sum_{k=1}^{|\Lambda|}\gamma_k\upnu_k\upnu_k^T A^{-1/2}F_2\delta_j\right\rangle.
\end{eqnarray}
Use (\ref{def:upnu}) and the identities
\begin{equation}
 A=\sum_{k=1}^{|\Lambda|}\gamma_k(v_k)_{\Lambda_0}(v_k)_{\Lambda_0}^T,\quad
 B=\sum_{k=1}^{|\Lambda|}\gamma_k(v_k)_{\Lambda_0^c}(v_k)_{\Lambda_0^c}^T,\quad 
 C=\sum_{k=1}^{|\Lambda|}\gamma_k(v_k)_{\Lambda_0}(v_k)_{\Lambda_0^c}^T
\end{equation}
to see that
\begin{eqnarray}
 \sum_{k=1}^{|\Lambda|}\gamma_k\upnu_k\upnu_k^T=A-C B^{-1} C^T.
\end{eqnarray}
Hence
\begin{eqnarray}\label{pf:Q-k-j-1}
\sum_k\gamma_k\mu_j^2|\left\langle\delta_j, F_2^T A^{-1/2} \upnu_k\right\rangle|^2
&=& \mu_j^2 \left\langle \delta_j, F_2^T (\idty_{\Lambda_0}-A^{-1/2}C B^{-1}C^T  A^{-1/2})F_2\delta_j\right\rangle=1
\end{eqnarray}
where we used (\ref{def:mu-2}).

As for the second term in $Q_{k, j}$ we have
\begin{eqnarray}\label{pf:Q-k-j-2}
\sum_{k=1}^{|\Lambda|}|\gamma_k\left\langle\delta_j, F_2^T A^{-1/2}(v_k)_{\Lambda_0}\right\rangle|^2
&=&
\left\langle \delta_j, F_2^T A^{-1/2}\sum_{k}\gamma_k(v_k)_{\Lambda_0} (v_k)_{\Lambda_0}^TA^{-1/2} F_2 \delta_j\right\rangle \notag\\
&=&
\left\langle \delta_j, F_2^T A^{-1/2}AA^{-1/2} F_2 \delta_j\right\rangle \notag\\
&=&1.
\end{eqnarray}
(\ref{pf:Q-k-j-2}) and (\ref{pf:Q-k-j-2}) finish the proof of statement (b).
 \end{proof}
 
%%%%%%%%%%%%%%%%%%
\subsection{Bounds for the $1/2$-R\'enyi entanglement entropy}
%%%%%%%%%%%%%%%%% 
The $1/2$-R\'enyi entanglement entropy of $\varrho_{\e_k}$ (and hence, its logarithmic negativity) is bounded as
\begin{equation}
\mathcal{E}_{1/2}(\varrho_{\e_k})\leq2\log\sum_{n\in\mathbb{N}_0^{|\Lambda_0|}}\left\langle \widehat{(\varrho_{\e_k})}_{\Lambda_0}\right\rangle_n^{\frac12}.
\end{equation}
In the following, we proceed by carefully bounding $\langle \widehat{(\varrho_{\e_k})}_{\Lambda_0}\rangle_n^{1/2}$.

We follow up from (\ref{eq:rho1-bound-1}) and use $\sqrt{1+x}\leq 1+\sqrt{x}$, $(\sum_j x_j)^{1/2}\leq \sum_j x_j^{1/2}$ for $x_j\geq 0$, and $\sqrt{n}\leq n$ for any $n\in\mathbb{N}_0$.
\begin{eqnarray}\label{last:0}
\left\langle \widehat{(\varrho_{\e_k})}_{\Lambda_0}\right\rangle_n^{\frac12}&\leq&
\left(1+\sum_{j=1}^{|\Lambda_0|}\left(\frac{2}{\mu_j-1}\right)^{\frac12}Q_{k, j}^{\frac12} n_j\right)\prod_{\ell=1}^{|\Lambda_0|}\left(\frac{2}{\mu_\ell+1}\right)^{\frac12}\left(\frac{\mu_\ell-1}{\mu_\ell+1}\right)^{\frac{n_\ell}{2}} \notag\\
&=& \left\langle \widehat{(\varrho_0)}_{\Lambda_0}\right\rangle_n^{\frac12}+\sum_{j=1}^{|\Lambda_0|}Q_{k, j}^{\frac12}\left(\frac{2}{\mu_j^2-1}\right)^{\frac12}\left(\frac{\mu_j-1}{\mu_j+1}\right)^{\frac{n_j}{2}}n_j\prod_{\tiny\begin{array}{c}
\ell=1\\ \ell\neq j
\end{array}}^{|\Lambda_0|}\left(\frac{2}{\mu_\ell+1}\right)^{\frac12}\left(\frac{\mu_\ell-1}{\mu_\ell+1}\right)^{\frac{n_\ell}{2}}.
\end{eqnarray}
Take the sums over $n_j$'s and use the elementary facts for $\mu>1$
\begin{equation}
\sum_{n=0}^\infty \left(\frac{\mu-1}{\mu+1}\right)^{n/2}=\frac{\sqrt{\mu+1}}{\sqrt{\mu+1}-\sqrt{\mu-1}} \text{ and }
\sum_{n=0}^\infty n\left(\frac{\mu-1}{\mu+1}\right)^{n/2}=\frac{\sqrt{\mu^2-1}}{(\sqrt{\mu+1}-\sqrt{\mu-1})^2}
\end{equation}
to obtain
\begin{equation}\label{last:1}
\sum_{n\in\mathbb{N}_0^{|\Lambda_0|}}\left\langle \widehat{(\varrho_{\e_k})}_{\Lambda_0}\right\rangle_n^{\frac12}\leq\left(1+\sum_{j=1}^{|\Lambda_0|}Q_{k, j}^{1/2}f_{1/2}(\mu_j)\right) \prod_{\ell=1}^{|\Lambda_0|} f_{1/2}(\mu_\ell)
\end{equation}
where $f_{1/2}$ is as in the $1/2$-R\'enyi entanglement entropy formula for the ground state (\ref{formula:ent-0}), in particular,
\begin{equation}
1\leq f_{1/2}(x)=\frac{\sqrt{2}}{\sqrt{x+1}-\sqrt{x-1}}\leq \sqrt{x^2-1}+1.
\end{equation}
(\ref{last:1}) and the fact that $f_{1/2}(\mu_j)\geq 1$ imply that 
\begin{equation}\label{last:2}
\sum_{n\in\mathbb{N}_0^{|\Lambda_0|}}\left\langle \widehat{(\varrho_{\e_k})}_{\Lambda_0}\right\rangle_n^{\frac12}
\leq \left(1+\sum_{j=1}^{|\Lambda_0|}Q_{k, j}^{1/2}\right)\prod_{\ell=1}^{|\Lambda_0|} f^2_{1/2}(\mu_\ell)
\end{equation}
Whilst we know that $\sum_jQ_{k, j}\leq 2$, we don't have a bound on $\sum_j Q_{k, j}^{1/2}$. 
We proceed by using the trivial bound $Q_{k, j}\leq 2$,  to obtain
\begin{equation}
\log\sum_{n\in\mathbb{N}_0^{|\Lambda_0|}}\left\langle \widehat{(\varrho_{\e_k})}_{\Lambda_0}\right\rangle_n^{\frac12}\leq 2\sum_{\ell=1}^{|\Lambda_0|}\log f_{1/2}(\mu_j)+\log\left(1+\sqrt{2}|\Lambda_0|\right).
\end{equation}
Use $(1+\sqrt{2}|\Lambda_0|)\leq |\Lambda_0|^2$ for $|\Lambda_0|>1$, and
recognize from Theorem \ref{thm:GS-Renyi-Formula} that the first term on the RHS is $\mathcal{E}_{1/2}(\varrho_0)=\mathcal{N}(\varrho_0)$ to obtain the entanglement bound (\ref{es_bd_noav}). 
The bound (\ref{es_bd_av}) after averaging the disorder follows from Theorem \ref{thm:gs}.
 This finishes the proof of the main result Theorem \ref{thm:es}.

%%%%%%%%%%%%%%%%%%%%%%%%%%%%%%%%%%
\subsection{An area law for the uniform ensemble of single excitation eigenstates}\label{sec:ensemble}
%%%%%%%%%%%%%%%%%%%%%%%%%%%%%%%%%%
In this section we prove the area law in Proposition \ref{prop:sym}.

The partial trace is linear, thus the reduced state of $\uprho_{N=1}$  to $\Lambda_0$ is
\begin{equation}\label{pf:sym:1}
(\uprho_{N=1})_{\Lambda_0}=\frac{1}{|\Lambda|}\sum_{k=1}^{|\Lambda|}(\varrho_{\e_k})_{\Lambda_0}.
\end{equation}
As in (\ref{eq:Renyi-diagonal}),  we bound the $\epsilon$-R\'enyi entropy using the diagonal entries of $(\uprho_{N=1})_{\Lambda_0}$,
\begin{equation}
\mathcal{E}_\epsilon(\uprho_{N=1})\leq \frac{1}{1-\epsilon}\log\left(\sum_{n\in\mathbb{N}_0^{|\Lambda_0|}}\left\langle\widehat{(\uprho_{N=1})}_{\Lambda_0}\right\rangle_n^\epsilon\right)
\end{equation}
where $\widehat{(\uprho_{N=1})}_{\Lambda_0}=\mathscr{O}(\uprho_{N=1})_{\Lambda_0}\mathscr{O}^*$, and $\mathscr{O}$ a unitary operator on $\mathscr{L}^2(\mathbb{R}^{{|\Lambda_0|}})$ defined in (\ref{def:O}). 
\begin{lem}
Given the product form of the eigenvalues of $ \widehat{(\varrho_0)}_{\Lambda_0}$ in (\ref{eq:rho-0-product}), the diagonal elements of the reduced state $\widehat{(\uprho_{N=1})}_{\Lambda_0}$ with respect to the orthonormal basis $\{\Psi_n^{(\kappa)}\}_{n\in\mathbb{N}_0^{|\Lambda_0|}}\in\mathscr{L}^2(\mathbb{R}^{\Lambda_0})$ defined in (\ref{def:Psi}), is bounded as 
\begin{equation}
\left\langle \widehat{(\uprho_{N=1})}_{\Lambda_0}\right\rangle_n\leq 
\left\langle \widehat{(\varrho_0)}_{\Lambda_0}\right\rangle_n \left(1+\frac{2}{|\Lambda|}\sum_{j=1}^{|\Lambda_0|} \frac{2}{\mu_j-1}n_j\right).
\end{equation}
\end{lem}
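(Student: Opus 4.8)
The plan is to exploit the linearity of every operation involved and then to invoke the column sum rule from Lemma~\ref{lem:Q}(b). Since the partial trace is linear and the unitary conjugation $\rho\mapsto\mathscr{O}\rho\mathscr{O}^*$ is linear, the decomposition (\ref{pf:sym:1}) of $(\uprho_{N=1})_{\Lambda_0}$ as a uniform average of the single-excitation reduced states transfers directly to the transformed operator, giving
\begin{equation*}
\widehat{(\uprho_{N=1})}_{\Lambda_0}=\frac{1}{|\Lambda|}\sum_{k=1}^{|\Lambda|}\widehat{(\varrho_{\e_k})}_{\Lambda_0}.
\end{equation*}
Taking diagonal matrix elements against the fixed orthonormal basis $\{\Psi_n^{(\kappa)}\}$ and using linearity of the inner product then yields $\langle\widehat{(\uprho_{N=1})}_{\Lambda_0}\rangle_n=\frac{1}{|\Lambda|}\sum_{k=1}^{|\Lambda|}\langle\widehat{(\varrho_{\e_k})}_{\Lambda_0}\rangle_n$.

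The second step is to insert the per-eigenstate estimate (\ref{eq:rho1-bound-1}), valid for each $k$, and to pull the common factor $\langle\widehat{(\varrho_0)}_{\Lambda_0}\rangle_n$ (which does not depend on $k$) out of the average. This produces
\begin{equation*}
\langle\widehat{(\uprho_{N=1})}_{\Lambda_0}\rangle_n\leq\langle\widehat{(\varrho_0)}_{\Lambda_0}\rangle_n\left(1+\frac{1}{|\Lambda|}\sum_{k=1}^{|\Lambda|}\sum_{j=1}^{|\Lambda_0|}Q_{k,j}\frac{2}{\mu_j-1}n_j\right),
\end{equation*}
where the constant term $1$ survives because $\frac{1}{|\Lambda|}\sum_{k=1}^{|\Lambda|}1=1$.

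The final step is to interchange the two finite sums and apply the identity $\sum_{k=1}^{|\Lambda|}Q_{k,j}=2$ from Lemma~\ref{lem:Q}(b). The decisive feature is that this column sum equals the constant $2$ for \emph{every} $j$, so it factors cleanly out of the remaining sum over $j\in\{1,\ldots,|\Lambda_0|\}$ and leaves precisely the claimed prefactor $\frac{2}{|\Lambda|}$. There is no genuine obstacle in this argument; its entire content is supplied by the sum rule in Lemma~\ref{lem:Q}(b), and the fact that its value is a $j$-independent constant is exactly what makes the average over $k$ collapse to the stated estimate. This ensemble bound is the direct analogue of the pointwise estimate (\ref{eq:rho1-bound-1}), and it is what permits the subsequent area-law computation in Proposition~\ref{prop:sym} to run in parallel with the single-eigenstate analysis.
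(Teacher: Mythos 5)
Your proposal is correct and follows essentially the same route as the paper's proof: decompose $\widehat{(\uprho_{N=1})}_{\Lambda_0}$ as the uniform average of the $\widehat{(\varrho_{\e_k})}_{\Lambda_0}$ via linearity, insert the pointwise bound (\ref{eq:rho1-bound-1}), and collapse the $k$-average using the $j$-independent sum rule $\sum_k Q_{k,j}=2$ from Lemma \ref{lem:Q}(b). No gaps.
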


\begin{proof}
Note that (\ref{pf:sym:1}) gives directly that
\begin{equation}
\left\langle \widehat{(\uprho_{N=1})}_{\Lambda_0}\right\rangle_n=\frac{1}{|\Lambda|}\sum_{k=1}^{|\Lambda|}\left\langle\widehat{(\varrho_{\e_k})}_{\Lambda_0}\right\rangle_n.
\end{equation}
Then we use  bound (\ref{eq:rho1-bound-1}), to land on
\begin{equation}
\left\langle \widehat{(\uprho_{N=1})}_{\Lambda_0}\right\rangle_n\leq \left\langle \widehat{(\varrho_0)}_{\Lambda_0}\right\rangle_n \left(1+\frac{1}{|\Lambda|}\sum_{k=1}^{|\Lambda|}\sum_{j=1}^{|\Lambda_0|}Q_{k, j} \frac{2}{\mu_j-1}n_j\right).
\end{equation}

Use statement (b) in  Lemma \ref{lem:Q} that $\sum_k Q_{k, j}=2$ to obtain the desired bound.

\end{proof}

Take the square root then follow the steps (\ref{last:0}) to (\ref{last:2}) to obtain the bound
\begin{equation}
\sum_{n\in\mathbb{N}_0^{|\Lambda_0|}}\left\langle \widehat{(\uprho_{N=1})}_{\Lambda_0}\right\rangle_n^{\frac12}
\leq \left(1+\sqrt{2}\frac{|\Lambda_0|}{|\Lambda|^{1/2}}\right)\prod_{k=1}^{|\Lambda_0|} f^2_{1/2}(\mu_k)\leq 3\prod_{k=1}^{|\Lambda_0|} f^2_{1/2}(\mu_k).
\end{equation}
for $|\Lambda_0| \leq |\Lambda|^{1/2}$. This leads directly to
\begin{equation}\label{eq:last}
\mathcal{E}_{\epsilon}\left(\uprho_{N=1}\right)\leq \log(3)+2\mathcal{E}_{1/2}(\varrho_0).
\end{equation}
for all $\epsilon\in[1/2,1]$. By averaging the disorder in (\ref{eq:last}) the area law in Proposition \ref{prop:sym} follows directly from Theorem \ref{thm:gs}.

%%%%%%%%%%%%%%%%%%%%%%%%%%%%%%%%%%%%%%%%%%%%%%%%%%%%
\section*{Acknowledgments}
%%%%%%%%%%%%%%%%%%%%%%%%%%%%%%%%%%%%%%%%%%%%%%%%%%%%
H. A. is supported in part by the UAE University under grant number G00004622.

%%%%%%%%%%%%%%%%%%%%%%%%%%%%%%%%%%%%%%%%%%%%%%%%%%%%
\appendix
%%%%%%%%%%%%%%%%%%%%%%%%%%%%%%%%%%%%%%%%%%%%%%%%%%%
\section{Eigenvalues and eigenfunctions of integral operators  of gaussian type kernels }\label{appendix:eig}
%%%%%%%%%%%%%%%%%%%%%%%%%%%%%%%%%%%%%%%%%%%%%%%%%%%

Consider the integral operator $T_{\sigma}$ on $\mathscr{L}^2(\mathbb{R})$, with the kernel

\begin{equation}
T_{\sigma}(x,y)=e^{-\frac{1}{2}(x^2+2\sigma xy+y^2)}, \ \ \sigma\in(-1,1).
\end{equation}
\begin{thm}\label{thm:gaussian-eig} For any $\sigma\in(-1,1)$, the integral operator 
$T_{\sigma}$ has the complete system of orthonormal eigenfunctions (the Hermite-Gaussian functions)
\begin{equation}
\psi^{(\kappa)}_n(x)=\frac{1}{\sqrt{2^n n!}}\left(\frac{\kappa}{\pi}\right)^\frac{1}{4} H_n(\sqrt{\kappa}x) e^{-\frac{\kappa}{2} x^2}, \ n\in \mathbb{N}_0
\end{equation}
where $\kappa=\sqrt{1-\sigma^2}$ and $H_n(\cdot)$ are the Hermite (Physicists) polynomials\footnote{Recall that the Hermite polynomials are given by the formula
$\displaystyle
H_n(x)=e^{\frac{x^2}{2}}\left(x-\frac{d}{dx}\right)^n e^{-\frac{x^2}{2}}=(-1)^n e^{x^2}\frac{d^n}{dx^n}e^{-x^2}$
}; corresponding to the eigenvalues
\begin{equation}
\xi_n:=\uplambda_n(T_\sigma)=\sqrt{\frac{2\pi}{1+\kappa}}\left(\frac{-\sigma}{1+\kappa}\right)^n \text{ for }\ n=0,1,\ldots
\end{equation}

\end{thm}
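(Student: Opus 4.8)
The plan is to diagonalize $T_\sigma$ in a single stroke, rather than verifying $T_\sigma\psi_n^{(\kappa)}=\xi_n\psi_n^{(\kappa)}$ separately for each $n$, by exploiting the generating function of the Hermite polynomials. As preliminaries I would record two standard facts. Writing $\psi_n^{(\kappa)}(x)=\kappa^{1/4}\tilde\psi_n(\sqrt{\kappa}\,x)$, where $\tilde\psi_n$ are the $\mathscr{L}^2(\mathbb{R})$-normalized Hermite functions, the family $\{\psi_n^{(\kappa)}\}_{n\ge 0}$ is a complete orthonormal basis of $\mathscr{L}^2(\mathbb{R})$. Moreover, since its kernel is real, symmetric, and square-integrable for $|\sigma|<1$ (the quadratic form $x^2+2\sigma xy+y^2$ being positive definite), the operator $T_\sigma$ is Hilbert--Schmidt, hence bounded and self-adjoint.

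Next I would introduce the generating function $g_t(y):=e^{-\frac{\kappa}{2}y^2}\sum_{n\ge 0}\frac{H_n(\sqrt{\kappa}\,y)}{n!}t^n=e^{-\frac{\kappa}{2}y^2+2\sqrt{\kappa}\,ty-t^2}$, which for every $t\in\mathbb{R}$ lies in $\mathscr{L}^2(\mathbb{R})$ and expands as $g_t=\sum_{n\ge 0}a_n(t)\,\psi_n^{(\kappa)}$ with $a_n(t)=(\pi/\kappa)^{1/4}(\sqrt{2}\,t)^n/\sqrt{n!}$; the $\ell^2$-summability of $(a_n(t))_n$ is immediate from $\sum_n 2^n t^{2n}/n!=e^{2t^2}$. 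The heart of the argument is then a single Gaussian integral: completing the square in $y$ in $\int_{\mathbb{R}}T_\sigma(x,y)g_t(y)\,dy$, whose quadratic coefficient is $\tfrac12(1+\kappa)$, I would obtain
\begin{equation*}
(T_\sigma g_t)(x)=\sqrt{\frac{2\pi}{1+\kappa}}\,\exp\!\left(-\frac{\kappa}{2}x^2+\frac{\kappa-1}{\kappa+1}\,t^2-\frac{2\sqrt{\kappa}\,\sigma}{1+\kappa}\,tx\right).
\end{equation*}

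The key step is to recognize the right-hand side as a rescaling of the same generating function. Setting $\rho:=-\sigma/(1+\kappa)$ and $\xi_0:=\sqrt{2\pi/(1+\kappa)}$ and using $\kappa^2=1-\sigma^2$, I would verify the two identities $\rho^2=(1-\kappa)/(1+\kappa)$ and $-2\sqrt{\kappa}\,\sigma/(1+\kappa)=2\sqrt{\kappa}\,\rho$, which turn the displayed exponent exactly into that of $g_{\rho t}$; that is, $T_\sigma g_t=\xi_0\,g_{\rho t}$. Since $a_n(\rho t)=\rho^n a_n(t)$, pairing this identity with $\psi_m^{(\kappa)}$ and using self-adjointness gives $\sum_n a_n(t)\langle T_\sigma\psi_m^{(\kappa)},\psi_n^{(\kappa)}\rangle=\langle T_\sigma\psi_m^{(\kappa)},g_t\rangle=\xi_0\rho^m a_m(t)$. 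Both sides are power series in $t$ with $a_n(t)\propto t^n$, so comparing coefficients forces $\langle T_\sigma\psi_m^{(\kappa)},\psi_n^{(\kappa)}\rangle=\xi_0\rho^m\delta_{mn}$; completeness of the basis then yields $T_\sigma\psi_m^{(\kappa)}=\xi_m\psi_m^{(\kappa)}$ with $\xi_m=\xi_0\rho^m=\sqrt{2\pi/(1+\kappa)}\,\bigl(-\sigma/(1+\kappa)\bigr)^m$, as claimed.

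I expect the main obstacle to be computational and organizational rather than conceptual: carrying out the completion of the square and then invoking $\kappa^2=1-\sigma^2$ to collapse the three exponent terms precisely into $-\tfrac{\kappa}{2}x^2+2\sqrt{\kappa}\,\rho\,tx-\rho^2t^2$ demands care with signs, and one must justify both the interchange of the integral with the series defining $g_t$ (legitimate since $T_\sigma$ is bounded on $\mathscr{L}^2(\mathbb{R})$ and the expansion converges there) and the term-by-term coefficient comparison (legitimate since both series are entire in $t$). An essentially equivalent route is to insert Mehler's formula, i.e.\ to verify $\sum_n\xi_n\psi_n^{(\kappa)}(x)\psi_n^{(\kappa)}(y)=T_\sigma(x,y)$ directly and read off the spectral decomposition; the generating-function computation above is in effect a self-contained derivation of this identity, which is why I would favor it.
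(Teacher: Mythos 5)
Your proposal is correct, but it takes a genuinely different route from the paper. The paper proves the eigenvalue equation by induction on $n$: it verifies the case $n=0$ by completing the square, then uses the three-term recurrence $H_{n+1}(x)=2xH_n(x)-2nH_{n-1}(x)$ together with an integration by parts (which expresses the auxiliary operator with kernel $y\,T_\sigma(x,y)$ back in terms of $T_\sigma$ acting on $\varphi_{n-1}^{(\kappa)}$ and $\varphi_n^{(\kappa)}$) to close the induction. You instead diagonalize $T_\sigma$ in one stroke via the generating function $g_t(y)=e^{-\frac{\kappa}{2}y^2+2\sqrt{\kappa}\,ty-t^2}$: the single Gaussian integral giving $T_\sigma g_t=\xi_0\,g_{\rho t}$ with $\rho=-\sigma/(1+\kappa)$ is exactly Mehler's formula in disguise, and comparing power-series coefficients in $t$ then yields all eigenpairs simultaneously. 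I checked your exponent bookkeeping: the coefficient of $x^2$ collapses to $-\kappa/2$ via $1+\kappa-\sigma^2=\kappa(1+\kappa)$, the coefficient of $t^2$ is $(\kappa-1)/(\kappa+1)=-\rho^2$, and the cross term is $2\sqrt{\kappa}\,\rho\,tx$, so the identification with $g_{\rho t}$ is exact. Your approach is shorter and more conceptual, and it makes the geometric structure of the spectrum ($\xi_n=\xi_0\rho^n$) transparent from the scaling $t\mapsto\rho t$; the price is the (correctly noted) need to justify interchanging $T_\sigma$ with the $\mathscr{L}^2$-convergent expansion of $g_t$ and the coefficient comparison for entire functions of $t$. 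The paper's induction is more pedestrian but entirely elementary, and its intermediate identities (the recurrence \eqref{app:Hermite2} and the action of the kernel $y\,T_\sigma(x,y)$) are of the same type as those reused in the proof of Lemma \ref{lem:integrals}, which gives the paper's organization a minor economy that your route would not provide.
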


\begin{proof}
Given any $\kappa>0$, it is well known, see e.g., \cite[Theorem 11.4]{Qtheory-for-Math} that the set of functions  $\{\psi_n^{(\kappa)}\}_{n}$ form an orthonormal basis for $\mathscr{L}^2(\mathbb{R})$.

In the following, we drop the normalizing constants in $\psi_n^{(\kappa)}$, and we prove the theorem for the eigenfunctions 
\begin{equation}\label{def:phi}
\varphi_n^{(\kappa)}(x):=H_n(\sqrt{\kappa}x) e^{-\frac{\kappa}{2} x^2}.
\end{equation}
By induction on $n$, we will prove that
\begin{equation}\label{appendix:gaussianEigen:main}
T_{\sigma}\varphi^{(\kappa)}_n=\xi_n \varphi^{(\kappa)}_n  \text{    or  } \int_\mathbb{R}T_{\sigma}(x,y)\varphi^{(\kappa)}_n(y)dy=\xi_n \varphi^{(\kappa)}_n(x)
\end{equation}
for all $n\in\mathbb{N}_0$.

It is direct to see the initial case $n=0$, as follows
\begin{eqnarray}
T_{\sigma}\varphi^{(\kappa)}_0(x)&=&\int_\mathbb{R} e^{-\frac{1}{2}(x^2+2\sigma xy+y^2)}\ e^{-\frac{\kappa}{2}y^2}\ dy 
= e^{-\frac{1}{2}(1-\frac{\sigma^2}{1+\kappa})x^2}\int_\mathbb{R} e^{-\frac{1}{2}(1+\kappa)(y+\frac{\sigma}{1+\kappa}x)^2}dy \notag\\
 &=& \sqrt{\frac{2\pi}{1+\kappa}} e^{-\frac{\kappa}{2}x^2} 
 = \xi_0 \varphi^{(\kappa)}_0(x),
\end{eqnarray}
noting that $1-\frac{\sigma^2}{1+\kappa}=\kappa$.

Suppose that statement (\ref{appendix:gaussianEigen:main}) is true for $n\leq k$, we want to show that it is true for $n=k+1$. We need to prove that
\begin{equation}\label{pf:11}
T_{\sigma} \varphi^{(\kappa)}_{k+1}(x) =\int_{\mathbb{R}} T_{\sigma}(x, y) \varphi^{(\kappa)}_{k+1}(y) dy=\xi_{k+1} \varphi^{(\kappa)}_{k+1}(x).
\end{equation}

Use the recurrence relation for Hermite polynomials, see e.g., \cite{Handbook},
\begin{equation}\label{app:HermiteId1}
H_{n+1}(x)=2x H_n(x)-2n H_{n-1}(x)
\end{equation}
for all $n\in\mathbb{N}$ and note that (\ref{app:HermiteId1}) reads for $n=1$ as  $H_{1}(x)=2x H_0(x)$ to see that
\begin{equation}\label{app:Hermite2}
\varphi_{n+1}^{(\kappa)}(x)=2\sqrt{\kappa}x \varphi^{(\kappa)}_n(x)-2n \varphi^{(\kappa)}_{n-1}(x).
\end{equation}
Use it in the left hand side of (\ref{pf:11}) to obtain
\begin{eqnarray}\label{T-phi-n+1}
T_{\sigma} \varphi^{(\kappa)}_{k+1}(x)&=&2\sqrt{\kappa}\int_\mathbb{R} y T_{\sigma}(x,y) \varphi_n^{(\kappa)}(y)\ dy-2n\int_{\mathbb{R}} T_{\sigma}(x, y)\varphi_{n-1}^{(\kappa)}(y)\ dy \notag\\
&=& 2\sqrt{\kappa} T^y_{\sigma}\varphi_n^{(\kappa)}(x)-2n \xi_{n-1}\varphi^{(\kappa)}_{n-1}(x).
\end{eqnarray}
Here we used the induction assumption for the second integral and we set $T^y_{\sigma}$ to be the integral operator with kernel $T^y_{\sigma}(x,y)=y T_{\sigma}(x, y)$, i.e., 
\begin{equation}\label{App:secondInt}
T^y_{\sigma}\varphi_n^{(\kappa)}(x):=\int_\mathbb{R} y T_{\sigma}(x, y) \varphi_n^{(\kappa)}(y)\ dy.
\end{equation}

We will find the integral $T^y_{\sigma}\varphi_n^{(\kappa)}(x)$ by parts. We integrate $y e^{-\frac{\kappa}{2}y^2}dy$ and differentiate $T_{\sigma}(x,y) H_n(\sqrt{\kappa}y)$ and we use the identity.
\begin{equation}\label{App: Hermite:diff}
H'_n(x)=2n H_{n-1}(x)\ \text{ and }\ 
\frac{\partial}{\partial y}T_{\sigma}(x,y)=-(y+x \sigma)T_{\sigma}(x,y).
\end{equation}
Integration by parts gives
\begin{equation}
T^y_{\sigma}\varphi_n^{(\kappa)}(x)=\frac{2n}{\sqrt{\kappa}}T_{\sigma}\varphi_{n-1}^{(\kappa)}(x)-\frac{1}{\kappa}T^y_{\sigma}\varphi_n^{(\kappa)}(x)-\frac{\sigma}{\kappa}xT_{\sigma}\varphi_n^{(\kappa)}(x).
\end{equation}
Again, the first and the last integrals can be found by the induction assumption. Thus, we obtain
\begin{equation}\label{App:I1}
T^y_{\sigma}\varphi_n^{(\kappa)}(x)=\frac{2n\sqrt{\kappa}}{1+\kappa}\xi_{n-1}\varphi^{(\kappa)}_{n-1}(x)
-\frac{\sigma}{1+\kappa}x\xi_n\varphi^{(\kappa)}_{n}(x).
\end{equation}
Then use (\ref{App:I1})  in $T_{\sigma}\varphi_{n+1}^{(\kappa)}(x)$ given in (\ref{T-phi-n+1}) to get
\begin{eqnarray}
T_{\sigma}\varphi_{n+1}^{(\kappa)}(x)&=&\xi_{n+1}\left(2\sqrt{\kappa} x \varphi^{(\kappa)}_{n}(x)-2n\varphi^{(\kappa)}_{n-1}(x)\left(\frac{(1+\kappa)^2}{\sigma^2}-\frac{2\kappa(1+\kappa)}{\sigma^2}\right)\right) \notag \\
&=& \xi_{n+1} \varphi^{(\kappa)}_{n+1}(x).
\end{eqnarray}
Here we used (\ref{app:Hermite2}) and the fact $(1+\kappa)^2-2\kappa(1+\kappa)=\sigma^2$.
This completes the induction argument.

\end{proof}

%%%%%%%%%%%%%%%%%%%%%%
\section{Proof of Lemma \ref{lem:integrals}} \label{app:proof:integrals}
%%%%%%%%%%%%%%%%%%%%%%%

In this Appendix, we prove the integral formulas in Lemma \ref{lem:integrals}.
\begin{proof}[Proof of Lemma \ref{lem:integrals}]
In the following, we use mainly Theorem \ref{thm:gaussian-eig} and the identities related to the functions $\varphi_n^{(\kappa)}(x)$ given in (\ref{def:phi}).
\begin{equation}\label{eq:H-1}
\sqrt{\kappa}x\varphi_n^{(\kappa)}(x)=\frac12 \varphi_{n+1}^{(\kappa)}(x)+ n \varphi_{n-1}^{(\kappa)}(x), \text{ for all }n\in\mathbb{N}_0, \text{ and }\varphi_{-1}^{(\kappa)}(x):=0. 
\end{equation}
\begin{equation}\label{eq:H-2}
\left\langle \varphi_n^{(\kappa)},\varphi_m^{(\kappa)} \right\rangle_{\mathscr{L}^2(\mathbb{R})}=\int_\mathbb{R}\varphi_n^{(\kappa)}(x)\varphi_m^{(\kappa)}(x)\ dx=\delta_{n,m}\ \sqrt{\frac{\pi}{\kappa}}\ 2^n n!\ .
\end{equation}
Note that Theorem \ref{thm:gaussian-eig} shows that
\begin{equation}
\left\langle T_{\sigma_j}\right\rangle_{n_j}=\xi_{n_j}.
\end{equation}

Next, we show that
\begin{equation}
\left\langle T^x_{\sigma_j}\right\rangle_{n_j}=\left\langle T^y_{\sigma_j}\right\rangle_{n_j}=0. 
\end{equation}
The first equality is due to symmetry. The equality to zero follows from the following argument
\begin{eqnarray}
\left\langle T^x_{\sigma_j}\right\rangle_{n_j} &=&\frac{1}{\|\varphi_{n_j}^{(\kappa_j)}\|^2}\left\langle \varphi_{n_j}^{(\kappa_j)},\int_{\mathbb{R}} xT_{\sigma_j}(x,\cdot) \varphi_{n_j}^{(\kappa_j)}(x) dx \right\rangle \nonumber\\
&=& \frac{1}{\sqrt{\kappa_j} \|\varphi_{n_j}^{(\kappa_j)}\|^2} \int_{\mathbb{R}}\varphi_{n_j}^{(\kappa_j)}(y)\left(\frac12\xi_{n_j+1} \varphi_{n_j+1}^{(\kappa_j)}(y)+n_j\xi_{n_j-1}\varphi_{n_j-1}^{(\kappa_j)}(y)\right)\ dy \nonumber\\
&=&0.
\end{eqnarray}
In the first-to-second step we used (\ref{eq:H-1}), and we used (\ref{eq:H-2}) in the last step. We use a similar argument to find $\left\langle T^{x^2}_{\sigma_j}\right\rangle_{n_j}$, and here we need to use (\ref{eq:H-1}) twice.
\begin{eqnarray}\label{eq:I-j-xx}
\left\langle T^{x^2}_{\sigma_j}\right\rangle_{n_j}&=&\frac{1}{\|\varphi_{n_j}^{(\kappa_j)}\|^2}
\left\langle  \varphi_{n_j}^{(\kappa_j)}, \int_{\mathbb{R}} x^2 T_{\sigma_j}(x,\cdot) \varphi_{n_j}^{(\kappa_j)}(x) dx \right\rangle \nonumber\\
&=&\frac{1}{\|\varphi_{n_j}^{(\kappa_j)}\|^2}\left\langle \varphi_{n_j}^{(\kappa_j)}, \frac14\xi_{n_j+2} \varphi_{n_j+2}^{(\kappa_j)}+\frac12(2n_j+1)\xi_{n_j}\varphi_{n_j}^{(\kappa_j)}+n_j(n_j-1)\xi_{n_j-2}\varphi_{n_j-2}^{(\kappa_j)}\right\rangle \nonumber\\
&=& \frac{1}{2\kappa_j}(2n_j+1)\xi_{n_j} \nonumber\\
&=&\frac{1}{2\kappa_j}(2n_j+1)\left\langle T_{\sigma_j}\right\rangle_{n_j}.
\end{eqnarray}
Similarly, we get the formula 
\begin{eqnarray}\label{eq:I-j-xy}
\left\langle T^{xy}_{\sigma_j}\right\rangle_{n_j}&=&\frac{1}{\kappa_j \|\varphi_{n_j}^{(\kappa_j)}\|^2}\left(\frac14  \xi_{n_j+1}\|\varphi_{n_j+1}^{(\kappa_j)}\|^2+n_j^2\xi_{n_j-1} \|\varphi_{n_j-1}^{(\kappa_j)}\|^2\right)\nonumber\\
&=&\frac{1}{\kappa_j}\left(\frac12 (n_j+1) \frac{\xi_{n_j+1}}{\xi_{n_j}}+\frac12 n_j \frac{\xi_{n_j-1}}{\xi_{n_j}}\right)\xi_{n_j} \nonumber\\
&=& \frac{1}{2\kappa_j}\left(\frac{2(\mu_j^2+1)}{\mu_j^2-1}n_j+\frac{\mu_j-1}{\mu_j+1}\right) \left\langle T_{\sigma_j}\right\rangle_{n_j}.
\end{eqnarray}
Here we used (\ref{eq:H-2}) to see that
\begin{equation}
\|\varphi_{m+1}^{(\kappa_j)}\|^2=2(m+1)\|\varphi_{m}^{(\kappa_j)}\|^2.
\end{equation}
\end{proof}

%%%%%%%%%%%%%%%%%%%%%%
\section{Some generalized gaussian integrals}\label{sec:non-gaussian-int}
%%%%%%%%%%%%%%%%%%%%%%%

In this section we prove the following theorem
\begin{thm}\label{thm:Int-Identities}
Let $\mathcal{A}$ be a positive symmetric $n\times n$ matrix and $\J\in\mathbb{R}^n$,  and define the function $f_{\mathcal{A},\J}:\mathbb{R}^n\rightarrow \mathbb{R}$ as
\begin{equation}
f_{\mathcal{A},\J}(u)=\exp\left(-\frac{1}{2}u^T\mathcal{A}u+\J^T u\right).
\end{equation}
Then for any $\mathcal{K}\in\mathbb{R}^n$ and $\ell\in\mathbb{N}_0$
\begin{equation}\label{eq:Non-Gaussian-int}
\int_{\mathbb{R}^n} \left(\mathcal{K}^Tu\right)^{\ell} f_{\mathcal{A},\J}(u)\ du=  \sqrt{\frac{(2\pi)^n}{\det(\mathcal{A})}}\, e^{\frac{1}{2}\J^T \mathcal{A}^{-1}\J}
\sum_{j=0}^{\lfloor\frac{\ell}{2}\rfloor}(2j-1)!! \binom{\ell}{2j} (\mathcal{K}^T \mathcal{A}^{-1}\J)^{\ell-2j} (\mathcal{K}^T \mathcal{A}^{-1} \mathcal{K})^{j}
\end{equation}
where we note that $(-1)!!=1$.
\end{thm}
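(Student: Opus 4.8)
The plan is to treat all the powers $\ell$ simultaneously via the moment generating function of the linear functional $\mathcal{K}^Tu$, which reduces everything to a single ordinary Gaussian integral plus a routine coefficient extraction. Concretely, I would introduce the auxiliary parameter $t\in\mathbb{R}$ and set
\begin{equation*}
G(t):=\int_{\mathbb{R}^n}e^{t\mathcal{K}^Tu}\,f_{\mathcal{A},\J}(u)\,du=\int_{\mathbb{R}^n}\exp\!\Big(-\tfrac12 u^T\mathcal{A}u+(\J+t\mathcal{K})^Tu\Big)\,du.
\end{equation*}
This is just a Gaussian integral with shifted linear term $\J+t\mathcal{K}$, so completing the square in $u$ (legitimate since $\mathcal{A}$ is symmetric and positive, hence invertible with $\mathcal{A}^{-1}>0$) and using the normalization $\int e^{-\frac12 u^T\mathcal{A}u}\,du=\sqrt{(2\pi)^n/\det\mathcal{A}}$ yields
\begin{equation*}
G(t)=\sqrt{\frac{(2\pi)^n}{\det\mathcal{A}}}\,\exp\!\Big(\tfrac12(\J+t\mathcal{K})^T\mathcal{A}^{-1}(\J+t\mathcal{K})\Big).
\end{equation*}
Expanding the quadratic form in the exponent and abbreviating the two scalars $m:=\mathcal{K}^T\mathcal{A}^{-1}\J$ and $s^2:=\mathcal{K}^T\mathcal{A}^{-1}\mathcal{K}\ge 0$, the exponent becomes $\tfrac12\J^T\mathcal{A}^{-1}\J+tm+\tfrac{t^2}{2}s^2$, so that $G(t)=\sqrt{(2\pi)^n/\det\mathcal{A}}\;e^{\frac12\J^T\mathcal{A}^{-1}\J}\,\exp\!\big(tm+\tfrac{t^2}{2}s^2\big)$.

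Next I would recover the desired integral as $\int_{\mathbb{R}^n}(\mathcal{K}^Tu)^\ell f_{\mathcal{A},\J}(u)\,du=G^{(\ell)}(0)$. The integrand $e^{t\mathcal{K}^Tu}f_{\mathcal{A},\J}(u)$ is, for $t$ in any bounded interval, dominated by a fixed Gaussian times a polynomial, so differentiating under the integral sign is justified by dominated convergence. It then remains to read off the $\ell$-th Taylor coefficient of $\exp(tm+\tfrac{t^2}{2}s^2)$. Multiplying the series $\sum_a t^a m^a/a!$ and $\sum_b t^{2b}s^{2b}/(2^b b!)$ and collecting the coefficient of $t^\ell$ (with $a=\ell-2j$, $b=j$) gives, after multiplying by $\ell!$,
\begin{equation*}
G^{(\ell)}(0)=\sqrt{\frac{(2\pi)^n}{\det\mathcal{A}}}\,e^{\frac12\J^T\mathcal{A}^{-1}\J}\sum_{j=0}^{\lfloor \ell/2\rfloor}\frac{\ell!}{(\ell-2j)!\,2^j j!}\,m^{\ell-2j}s^{2j}.
\end{equation*}
Finally I would rewrite the combinatorial prefactor using $(2j)!=2^j j!\,(2j-1)!!$, i.e. $\ell!/\big((\ell-2j)!\,2^j j!\big)=\binom{\ell}{2j}(2j-1)!!$ (with the convention $(-1)!!=1$ handling $j=0$), and substitute back $m=\mathcal{K}^T\mathcal{A}^{-1}\J$, $s^2=\mathcal{K}^T\mathcal{A}^{-1}\mathcal{K}$; this reproduces the claimed identity exactly.

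The computation is essentially routine, and I do not expect a genuine obstacle; the only two points requiring a word of care are the justification of differentiation under the integral sign (supplied by the Gaussian domination above) and the bookkeeping that converts $(2j)!/(2^j j!)$ into the double factorial $(2j-1)!!$. As an alternative that avoids the generating function altogether, one could complete the square directly, substitute $v=u-\mathcal{A}^{-1}\J$ to pull out $e^{\frac12\J^T\mathcal{A}^{-1}\J}$, binomially expand $(\mathcal{K}^Tv+m)^\ell$, and evaluate each centered moment $\int(\mathcal{K}^Tv)^r e^{-\frac12 v^T\mathcal{A}v}\,dv$ by the orthogonal change of variables $w=\mathcal{A}^{1/2}v$, which turns $\mathcal{K}^Tv$ into $s$ times a single standard Gaussian coordinate and reduces the problem to the one-dimensional moments $\int_{\mathbb{R}}w^r e^{-w^2/2}\,dw$ ($=\sqrt{2\pi}\,(r-1)!!$ for even $r$, zero for odd $r$); only the even terms $r=2j$ survive, giving the same sum.
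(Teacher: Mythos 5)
Your proposal is correct, and it takes a genuinely different route from the paper. The paper proves the identity by induction on $\ell$: it uses the observation $\frac{\partial}{\partial \J}f_{\mathcal{A},\J}(u)=u\,f_{\mathcal{A},\J}(u)$ to write $\int(\mathcal{K}^Tu)^{k+1}f_{\mathcal{A},\J}\,du=\mathcal{K}^T\frac{\partial}{\partial\J}\int(\mathcal{K}^Tu)^{k}f_{\mathcal{A},\J}\,du$, applies the induction hypothesis, differentiates the closed form with respect to $\J$, and closes the induction via the Pascal-type identity $\binom{k}{2j}+\frac{k-2j+2}{2j-1}\binom{k}{2j-2}=\binom{k+1}{2j}$ (writing out only the even-$k$ case and asserting the odd case is similar). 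Your moment-generating-function argument instead differentiates in the auxiliary scalar $t$ rather than in $\J$, and obtains the full closed form in one stroke from the Taylor expansion of $\exp(tm+\tfrac{t^2}{2}s^2)$ with $m=\mathcal{K}^T\mathcal{A}^{-1}\J$ and $s^2=\mathcal{K}^T\mathcal{A}^{-1}\mathcal{K}$; the only combinatorial input is $(2j)!=2^jj!\,(2j-1)!!$, so you avoid both the induction and the parity case split, and the provenance of the double factorial (Gaussian moments) is transparent. Your bookkeeping checks out: the coefficient of $t^\ell$ is $\sum_j m^{\ell-2j}s^{2j}/\bigl((\ell-2j)!\,2^jj!\bigr)$, and multiplying by $\ell!$ gives exactly $\binom{\ell}{2j}(2j-1)!!$, with the $j=0$ term consistent with the convention $(-1)!!=1$. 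The domination argument for differentiating under the integral sign is the right thing to say, since for $|t|\le T$ the $\ell$-th $t$-derivative of the integrand is bounded by $|\mathcal{K}^Tu|^{\ell}e^{T|\mathcal{K}^Tu|}e^{-\frac12u^T\mathcal{A}u+\J^Tu}$, which is integrable by positivity of $\mathcal{A}$. Both proofs rest on the same basic shifted Gaussian integral; yours is shorter and arguably more illuminating, while the paper's inductive scheme has the minor virtue of staying entirely within manipulations of the original integrand.
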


\begin{proof}
We will show formula (\ref{eq:Non-Gaussian-int}) by induction on $\ell\in\mathbb{N}_0$. The basis step for $n=0$ is the well known formula for the gaussian integral, see e.g., \cite[Section 1.2]{Zee},
\begin{equation}
\int_{\mathbb{R}^n}f_{\mathcal{A},\J}(u)\ du=  \sqrt{\frac{(2\pi)^n}{\det(\mathcal{A})}} \ e^{\frac{1}{2}\J^T \mathcal{A}^{-1}\J}.
\end{equation}
As for the inductive step, we assume that  formula (\ref{eq:Non-Gaussian-int}) is correct for $\ell=k$, and we show that it is correct for $\ell=k+1$. In the following we consider only the case when $k$ is even. A similar approach applies to odd $k$. i.e., we need to show that (noting that $\lfloor\frac{k+1}{2}\rfloor=\frac{k}{2}$ when $k$ is even)
\begin{equation}\label{int-inductive-step}
\int_{\mathbb{R}^n} \left(\mathcal{K}^Tu\right)^{k+1} f_{\mathcal{A},\J}(u)\ du=  \sqrt{\frac{(2\pi)^n}{\det(\mathcal{A})}}\, e^{\frac{1}{2}\J^T \mathcal{A}^{-1}\J}
\sum_{j=0}^{\frac{k}{2}}(2j-1)!! \binom{k+1}{2j} (\mathcal{K}^T \mathcal{A}^{-1}\J)^{k+1-2j} (\mathcal{K}^T \mathcal{A}^{-1} \mathcal{K})^{j}
\end{equation}
The proof is based on the observation from matrix calculus (where $\J=\begin{bmatrix}\J_1 & \J_2&\ldots&\J_n\end{bmatrix}^T$)
\begin{equation}
\frac{\partial}{\partial \J} f_{\mathcal{A},\J}(u):=\begin{bmatrix}\frac{\partial f_{\mathcal{A},\J}(u)}{\partial \J_1} & \frac{\partial f_{\mathcal{A},\J}(u)}{\partial \J_2} & \ldots & \frac{\partial f_{\mathcal{A},\J}(u)}{\partial \J_n} \end{bmatrix}^T=uf_{\mathcal{A},\J}(u),
\end{equation}
So,
\begin{eqnarray}\label{pf:NG-int-step1}
\int_{\mathbb{R}^n} \left(\mathcal{K}^Tu\right)^{k+1} f_{\mathcal{A},\J}(u)\ du&=&
\int_{\mathbb{R}^n} \left(\mathcal{K}^Tu\right)^{k}\mathcal{K}^T \frac{\partial}{\partial\J} f_{\mathcal{A},\J}(u)\ du \nonumber\\
&=& \mathcal{K}^T \frac{\partial}{\partial\J} \int_{\mathbb{R}^n}\left(\mathcal{K}^Tu\right)^{k} f_{\mathcal{A},\J}(u)\ du.
\end{eqnarray}
Then we use the induction assumption for the integral in the last step, and we find the derivative with respect to $\J$ using the identities, see e.g., \cite{Matrix-Calc}
\begin{equation}
\frac{\partial}{\partial\J}\left(\mathcal{K}^T \mathcal{A}^{-1}\J\right)= \mathcal{A}^{-1}\mathcal{K}, \quad \frac{\partial}{\partial\J}\left( \J^T\mathcal{A}^{-1}\J\right)=2\mathcal{A}^{-1}\J.
\end{equation}
 (\ref{pf:NG-int-step1}) reads as
\begin{eqnarray}
&=&\notag \sqrt{\frac{(2\pi)^n}{\det(\mathcal{A})}}\, e^{\frac{1}{2}\J^T \mathcal{A}^{-1}\J}\left(
\sum_{j_1=0}^{\frac{k}{2}-1}(2j_1-1)!!\binom{k}{2j_1}(k-2j_1)(\mathcal{K}^T \mathcal{A}^{-1}\J)^{k-2j_1-1}(\mathcal{K}^T\mathcal{A}^{-1}\mathcal{K})^{j_1+1}+\right.\\
&&\hspace{2cm}+
\left.
\sum_{j_2=0}^{\frac{k}{2}}(2j_2-1)!!\binom{k}{2j_2}(\mathcal{K}^T \mathcal{A}^{-1}\J)^{k-2j_2+1}(\mathcal{K}^T\mathcal{A}^{-1}\mathcal{K})^{j_2}
\right).
\end{eqnarray}
We change variables $j=j_1+1$ in the first sum, then we rearrange like terms from the two sums to obtain
\begin{eqnarray}
&=& \notag
\sqrt{\frac{(2\pi)^n}{\det(\mathcal{A})}}\, e^{\frac{1}{2}\J^T \mathcal{A}^{-1}\J}\left( (\mathcal{K}^T\mathcal{A}^{-1}\J)^{k+1}+\right.\\
&&
\left.+\sum_{j=1}^{\frac{k}{2}}(2j-1)!!\left(\binom{k}{2j}+\frac{(k-2j+2)}{(2j-1)}\binom{k}{2j-2}\right)(\mathcal{K}^T \mathcal{A}^{-1}\J)^{k-2j+1}(\mathcal{K}^T\mathcal{A}^{-1}\mathcal{K})^j\right)
\end{eqnarray}
which simplifies to the desired formula (\ref{int-inductive-step}) by observing that
\begin{equation}
\binom{k}{2j}+\frac{(k-2j+2)}{(2j-1)}\binom{k}{2j-2}=\binom{k}{2j}+\binom{k}{2j-1}=\binom{k+1}{2j}.
\end{equation}
\end{proof}

\end{document}